\newcommand{\Date}[1]{\def\@Date{#1}}
\def\today{\number\day~\ifcase\month\or
 January\or February\or March\or April\or May\or June\or
 July\or August\or September\or October\or November\or December\fi~\number\year}
\newtheorem{theorem}{Theorem}
\newtheorem{lemma}{Lemma}
\def\T{{ \mathrm{\scriptscriptstyle T} }}
\newcommand{\be}{\begin{equation}}
\newcommand{\ee}{\end{equation}}
\newcommand{\bs}{\begin{eqnarray*}}
\newcommand{\es}{\end{eqnarray*}}
\renewcommand{\hat}{\widehat}
\renewcommand{\tilde}{\widetilde}
\newcommand{\wh}{\widehat}
\newcommand{\wt}{\widetilde}
\newcommand{\bA}{{A}}
\newcommand{\bB}{{B}}
\newcommand{\bF}{{F}}
\newcommand{\bH}{{H}}
\newcommand{\bI}{{I}}
\newcommand{\bJ}{{J}}
\newcommand{\bS}{{S}}
\newcommand{\bX}{{X}}
\newcommand{\bb}{\textit{b}}
\newcommand{\bv}{\textit{v}}
\newcommand{\bx}{\textit{x}}
\newcommand{\by}{\textit{y}}
\newcommand{\bzero}{\textit{0}}
\newcommand{\bbe}{{e}}
\newcommand{\ep}{\varepsilon}
\newcommand{\bep}{\varepsilon}
\newcommand{\bSigma}{{\Sigma}}
\newcommand{\bbeta}{{\beta}}
\newcommand{\bPhi}{{\Phi}}
\newcommand{\bic}{\textsc{BIC}}
\newcommand{\cov}{\text{cov}}
\newcommand{\rss}{\textsc{RSS}}
\newcommand{\VAR}{\textsc{VAR}}
\newcommand{\RSS}{\textsc{RSS}}
\newcommand{\BIC}{\textsc{BIC}}
\newcommand{\var}{\text{var}}
\newcommand{\pr}{\mbox{pr}}
\def\askip{\vspace{0.2in}}
\begin{document}

\title{\bf High Dimensional and Banded Vector Autoregressions} 
\author{Shaojun Guo$^{\dag}$ \qquad  Yazhen Wang$^\ddag$ \qquad Qiwei Yao$^\star$\\
$^\dag$Institute of Statistics and Big Data, Renmin University of China, \\
Beijing 100872, China\\
$^{\ddag}$Department of Statistics, University of Wisconsin, \\
Madison, WI 53706,USA \\
 $^{\star}$Department of Statistics, London School of Economics, \\
 London, WC2A 2AE, UK\\
sjguo@ruc.edu.cn \ \ yzwang@stat.wisc.edu \ \ q.yao@lse.ac.uk
}

\maketitle

\begin{abstract}
We consider a class of vector autoregressive models with
banded coefficient matrices. The setting represents a type of sparse
structure for high-dimensional time series, though the implied autocovariance
matrices are not banded. The structure is also practically
meaningful when the order of component time series is arranged
appropriately. The convergence rates for the estimated banded autoregressive
coefficient matrices are established.  We also propose
a Bayesian information criterion for determining
the width of the bands in
the coefficient matrices, which is proved to be consistent.
By exploring some approximate banded structure for the auto-covariance
functions of banded vector autoregressive processes, consistent estimators for the
auto-covariance matrices are constructed.
\end{abstract}

{\sl Keywords}:
Banded auto-coefficient matrices;
BIC; Frobenius norm;
Vector autoregressive model.

\section{Introduction}

The demand for modelling and forecasting high-dimensional time series arises
from 
panel studies of economic,
social and natural phenomena, financial market analysis,
communication engineering and other domains.
When the dimension of time series is even moderately large, statistical
modelling is challenging, as vector autoregressive and moving average models
suffer from lack of identification, over-parameterization and flat likelihood functions.
While pure vector autoregressive models are perfectly identifiable, their
usefulness is often hampered by the lack of proper means of reducing
the number of parameters.

In many practical situations
it is enough to collect the information from neighbour variables, though
the definition of neighbourhoods is case-dependent.
For example, sales, prices, weather indices or electricity consumptions
influenced by temperature depend on those at nearby locations,
in the sense that the information from farther locations may become
redundant given that from neighbours.
See, for example, \citet{CM1997} for a house price example which exhibits such a dependence structure.
In this paper, we propose a class of vector autoregressive models to cater for such dynamic structures.
We assume that the autoregressive coefficient
matrices are banded, i.e., non-zero coefficients form a narrow band along the main diagonal.  The setting specifies explicit autoregression
over neighbour component series only.
Nevertheless, non-zero cross correlations
among all component series may still exist, as the implied
auto-covariance matrices are not banded. This is an effective way to
impose sparse structure, as the number of parameters
in each autoregressive coefficient matrix is reduced from $p^2$ to $O(p)$, where
$p$ denotes the number of time series. In practice, a banded structure may be
employed by arranging the order of component series appropriately.
The ordering can be deduced from subject knowledge aided by statistical
tools such as Bayesian information criterion; see Section \ref{sec52}.
With the imposed banded structure, we propose least squares estimators for
the autoregressive coefficient matrices which attain the convergence rate $(p/n)^{1/2}$
under the Frobenius norm and $(\log p/n)^{1/2}$ under the spectral norm when  $p$ diverges
together with the length $n$ of time series.

In practice the maximum width of the non-zero coefficient bands in the coefficient
matrices, which is called the bandwidth, is unknown. We propose a
marginal Bayesian information criterion to identify the true
bandwidth. It is shown that this criterion leads to consistent
bandwidth determination when both $n$ and $p$ tend to infinity.

We also address the estimation of the autocovariance functions for high-dimensional
banded autoregressive models. Although the autocovariance matrices of a banded process are unlikely
to be banded, they admit some asymptotic banded approximations
when the covariance of innovations is banded.
Because of this property, the band-truncated sample autocovariance matrices
are consistent estimators with the convergence rate $\log(n/\log p)(\log p/n)^{1/2}$,
which is faster than that for the standard  banding covariance estimators \cite[]{bickel2008}.
See also \citet{WP2009}, \citet{bickel2011} and \citet{LL2011} for
the estimation of the banded covariance matrices of time series.

Most existing work on high-dimensional autoregressive models draws inspiration  from recent developments in high-dimensional regression.
For example, \citet{HHC2008} proposed lasso penalization for subset
autoregression.  \citet{HNMK2009} introduced the group sparsity for coefficient matrices and
advocated use of group lasso penalization. A truncated weighted lasso
and group lasso penalization approaches were proposed by \citet{SM2010} and \citet{BSM2015}, respectively, to
explore graphical Granger causality. \citet{BM2015} focused on stable Gaussian processes and investigated the
theoretical properties of $L_1$-regularized estimates of transition matrix in sparse autoregressive models. 
\citet{BVN2011} inferred sparse
causal networks through vector autoregressive processes and proposed a group lasso
procedure.
\citet{KC2012} established oracle inequalities for high-dimensional vector autoregressive models.
\citet{HL2013} proposed an alternative Dantzig-type
penalization and formulated the estimation problem into a linear program.
\citet{CXW2013} studied sparse covariance and precision matrix
in high dimensional time series under a general dependence structure.

\section{Methodology}
\label{sec2}

\subsection{Banded vector autoregressive models}
\label{sec21}
Let $\by_t$ be a $p\times 1$ time series defined by
\begin{eqnarray}
\label{var2}
\by_t=\bA_1 \by_{t-1}+ \cdots+ \bA_d \by_{t-d} + \bep_{t},
\end{eqnarray}
where $\bep_t$ is the innovation at time $t$, $E(\bep_t)=0$ and $\var(\bep_t) = E(\bep_t \bep_t^{\T})= \bSigma_{\bep}$,
and $\bep_t$ is independent of $\by_{t-1}, \by_{t-2}, \ldots$.
Furthermore, all the coefficient matrices $\bA_1, \ldots, \bA_d$ are banded
in the sense that
\begin{equation} \label{a6}
a_{ij}^{(\ell)} =0, \quad  |i-j|> k_0 ,~ \ell = 1,\ldots,d,
\end{equation}
where
$a_{ij}^{(\ell)}$ denotes the $(i,j)$-th element of $\bA_\ell$. Thus the maximum number
of non-zero elements in each row of  $\bA_\ell$ is  the bandwidth $2k_0+1$, and $k_0$ is called the bandwidth parameter. We assume that $k_0\ge 0$ and $d\ge 1$ are fixed integers, and $p \gg k_0,d$. Our goal is to determine $k_0$ and
to estimate the banded coefficient matrices $\bA_1, \ldots, \bA_d$. For simplicity,
we assume that the autoregressive order $d$ is known, as the order-determination problem
has already been thoroughly studied; see, e.g., Chapter 4 of \citet{L2007}.

Under the condition $\det(\bI_p-\bA_1 z - \cdots - \bA_d z^d)\neq 0$ for any $|z|\leq 1$,
model (\ref{var2}) admits a weakly stationary solution $\{ \by_t \}$, where
$\bI_p$ denotes the $p\times p$ identity matrix. Throughout this paper,
$\by_t$ refers to this stationary process. If, in addition, $\bep_t$ is independent and identically distributed,
$\by_t$ is also strictly stationary.

In model (\ref{var2}), we do not require $\var(\bep_t) = \bSigma_{\bep}$
to be banded, but even if it is, the autocovariance matrices are not necessarily banded;
see (\ref{a1}) below. Therefore, the proposed banded model is applicable when
the linear dynamics of each component series depend predominately on its
neighbour series, though there may be non-zero correlations among
all component series of $\by_t$.

\subsection{Estimating banded autoregressive coefficient matrices}
\label{sec22}

Since each row of $\bA_\ell$ has at most $2k_0+1$ non-zero elements,
there are at most $(2k_0+1)d$ regressors in each row
on the right-hand side of (\ref{var2}). For $i=1, \ldots, p$, let $\bbeta_i$ be
the column vector obtained by stacking the non-zero elements in
the $i$-th rows of $\bA_1, \ldots, \bA_d$ together. Let $\tau_i$ denote the
length of $\bbeta_i$. Then
\begin{equation} \label{a4}
\tau_i \equiv \tau_i(k_0) = \Big\{
\begin{array}{ll}
(2k_0+1)d, &  i= k_0 +1,\, k_0 +2,  \ldots,  p-k_0,\\
(2k_0+1-j)d, \quad  & i = k_0 +1-j \;\; {\rm or}\;\; p-k_0+j, \quad j=1, \ldots, k_0.
\end{array}
\end{equation}
Now (\ref{var2}) can be written as
\begin{equation} \label{a2}
y_{i,t} = \bx_{i,t}^{\T} \bbeta_i + \ep_{i,t}, \qquad i=1, \ldots, p,
\end{equation}
where $y_{i,t}$, $\ep_{i,t}$ are respectively the $i$-th component of $\by_t$ and
$\bep_t$ and $\bx_{i,t}$ is the $\tau_i\times 1$ vector consisting of the corresponding
components of $\by_{t-1}, \ldots, \by_{t-d}$.
Consequently, the least squares estimator of $\bbeta_i$ based on (\ref{a2}) is
\begin{equation} \label{a3}
\wh{\bbeta}_i = (\bX_i^{\T}  \bX_i )^{-1} \bX_i^{\T} \by_{(i)},
\end{equation}
where $ \by_{(i)} = (y_{i, d+1}, \ldots, y_{i,n})^{\T}$, and $\bX_i$ is an $(n-d)\times \tau_i$
matrix with $\bx_{i, d+j}^{\T}$ as its $j$-th row.

By estimating $\bbeta_i$, $i=1, \ldots, p$, separately based on (\ref{a3}), we obtain
the least squares estimators $\wh \bA_1, \ldots, \wh\bA_d$ for the coefficient
matrices in (\ref{var2}). Furthermore, the resulting residual sum of squares
is
\begin{equation}
\label{rss-1}
\RSS_i \equiv \RSS_i(k_0) = \by_{(i)}^{\T}\{ \bI_{n-d} - \bX_i (\bX_i^{\T}  \bX_i)^{-1} \bX_i^{\T}\}
\by_{(i)}.
\end{equation}
We write this as a function of $k_0$ to stress that the above estimation presupposes that the bandwidth is $(2k_0 + 1)$ in the sense of (\ref{a6}).

\subsection{Determination of bandwidth}
\label{sec23}

In practice the bandwidth is unknown and we need to estimate $k_0$. We propose to determine $k_0$ based on the
marginal Bayesian information criterion,
\begin{equation}
\label{marginalBIC}
\BIC_i(k) = \log  \RSS_i(k)  + \frac{1 }{ n} d \tau_i(k) C_n \log (p \vee n), \qquad i=1, \ldots, p,
\end{equation}
where $\RSS_i(k)$ and $\tau_i(k)$ are defined, respectively, in (\ref{rss-1}) and (\ref{a4}), $p \vee n = \max(p,n)$,
and $C_n>0$ is some constant which
diverges together with $n$; see Condition 2.
We often take $C_n$ to be $\log \log n$. An estimator for $k_0$ is
\begin{equation} \label{a8}
\wh k = \max_{1\le i \le p}\big\{ \arg\min_{1\le k \le K} \BIC_i(k) \big\},
\end{equation}
where $K\ge 1$ is a prescribed integer. Our numerical study shows that the procedure is insensitive to the choice of $K$ provided $K \ge k_0$. In practice, we often take $K$ to be  $ [n^{1/2}]$ or choose $K$ by checking the curvature of $\BIC_i(k)$ directly.

\noindent
{\bf Remark 1}.
	If the order $d$ is unknown, we can modify the criterion in (\ref{a8}) as follows. Let $\rss_i(k,\ell)$ and
	$\tau_i(k,\ell)$ be defined similarly to (\ref{rss-1}) and (\ref{a4}). The marginal Bayesian information criterion is
	\begin{equation}
	\label{marginalBIC2}
	\tilde{\BIC}_i(k,\ell ) =\log  \RSS_i(k,\ell )  + \frac{1 }{ n}  \tau_i(k,\ell ) C_n \log( p \vee n), \qquad i=1, \ldots, p.
	\end{equation}
	Let $L$ be a prescribed integer upper bound on $d$ and often taken to be $10$ or $[n^{1/2}]$.  Let
	$$
	(\hat{k}_i,\hat{d}_i) = \arg\min_{1\le k \le K, 1 \le \ell \le L} \tilde{\BIC}_i(k,\ell), ~ i =1,\ldots, p,
	$$
	and $\hat{k} = \max_{1\le i \le p} \hat{k}_i$ and $\hat{d} = \max_{1\le i \le p} \hat{d}_i.$
Proposition 1 in the Supplementary Material shows that under Conditions 1--4 in Section 3.1, $\pr(\hat{k} = k_0, \hat{d} = d) \to 1$ as $n$ and $p \to \infty$.

\noindent
{\bf Remark 2}.
	The banded structure of the coefficient matrices $A_1, \ldots, A_d$
	depends on the order of the component series of $\by_t$.
	In principle it is possible to derive a complete data-driven
	method to deduce the optimal ordering which minimizes the bandwidth, but
	such a procedure is computationally
	burdensome for large $p$. For most applications meaningful
	orderings are suggested by practical consideration. We can then calculate
	\begin{equation} \label{2remark}
	\BIC = \sum_{i=1}^p \BIC_i(\wh k)
	\end{equation}
	for each suggested ordering, and choose the ordering which minimizes (\ref{2remark}).
	In expression (\ref{2remark}), $\BIC_i(\cdot)$ and $\wh k$ are defined as in (\ref{marginalBIC}) and (\ref{a8}). Two real data examples in Section 5.2
	indicate that this scheme works well in applications.

\section{Asymptotic properties}
\label{sec3}

\subsection{Regularity conditions}
\label{sec31}
For vector $\bv=(v_1, \ldots, v_j)$ and matrix $\bB=(b_{ij})$, let
$$
\|\bv\|_q = \Big(\sum_{j= 1}^p |v_j|^q\Big)^{1/q} , \quad \|\bv\|_\infty
= \max_{1 \le j \le p} |v_j|, \quad
\|\bB\|_q = \max_{\|\bv\|_q = 1} \|\bB \bv\|_q,  \quad \|\bB\|_F =
\Big(\sum_{i,j} b_{ij}^2\Big)^{1/2},
$$
i.e., $\|\cdot\|_q$ denotes the $\ell_q$ norm of a vector or matrix, and $\|\cdot\|_F$
denotes the Frobenius norm of a matrix.

First we note that  the model (\ref{var2}) can be formulated as,
$$
\widetilde{\by}_t = \widetilde{\bA}\widetilde{\by}_{t-1} + \widetilde{\bep}_t,
$$
where
\begin{equation} \label{a7}
\wt{\by}_t = \left(
\begin{array}{cccc}
\by_{t}\\
\by_{t-1}\\
\vdots\\
\by_{t-d+1}
\end{array}\right), \quad
\wt{\bA} = \left(
\begin{array}{cccc}
\bA_1 &\bA_2 & \cdots & \bA_d\\
\bI_p & \bzero_p & \cdots & \cdots\\
\vdots & \cdots & \vdots & \cdots\\
\bzero & \cdots & \bI_p & \bzero\\
\end{array}\right),  \quad
\wt{\bep}_t = \left(
\begin{array}{cccc}
\bep_{t}\\
\bzero_{p \times 1}\\
\vdots\\
\bzero_{p \times 1}
\end{array}\right).
\end{equation}

Now we list the regularity conditions required for our asymptotic results.

\noindent
{\it Condition} 1.
	For $\wt \bA$ defined in (\ref{a7}), $\|\widetilde{\bA}\|_2 \le C$ and
	$\|\widetilde{\bA}^{j_0}\|_2 \le \delta^{j_0}$, where
	$C>0$, $\delta \in (0, 1)$ and ${j_0}\ge 1$ are constants free of $n$ and $p$,
	and ${j_0}$ is an integer.
	
\noindent
	{\it Condition} 1'. For $\wt \bA$ defined in (\ref{a7}),
	$\|\widetilde{\bA}^{j_0}\|_{2} \le \delta^{j_0}$, $\|\widetilde{\bA}\|_{\infty} \le C$ and
	$\|\widetilde{\bA}^{j_0}\|_{\infty} \le \delta^{j_0}$, where
	$C>0$, $\delta \in (0, 1)$ and ${j_0}\ge 1$ are constants free of $n$ and $p$,
	and ${j_0}$ is an integer.
	
\noindent
{\it Condition} 2.
	Let $a_{ij}^{(\ell)}$ be the $(i,j)$-th element of
	$\bA_\ell$. For each $i=1,\ldots, p$, $|a^{(\ell)}_{i, i+k_0}|$ or $|a^{(\ell)}_{i, i-k_0}|$ is
	greater than $ \{C_n k_0n^{-1}\log (p \vee n) \}^{1/2}$ for
	some $1 \le \ell \le d$, where $C_n \to \infty$ as
	$n \to \infty$. 

\noindent
{\it Condition} 3.
The minimal eigenvalue
	$\lambda_{\min}\{\cov(\by_t)\} \ge \kappa_1$  and $\max_{1 \le i \le
		p}|\sigma_{ii}| \le \kappa_2$ for some positive constants $\kappa_1$ and
	$\kappa_2$ free of $p$, where $\sigma_{ii}$ is the
	$i$-th diagonal element of $\cov(\by_t)$, and
	$\lambda_{\min}(\cdot)$ denotes the minimum eigenvalue.

\noindent
{\it Condition} 4.
	The innovation process $\{\bep_t, \, t=0,\pm 1, \pm 2,\ldots\}$ is
	independent and identically distributed	with zero mean and covariance $\bSigma_{\ep}$. Furthermore, one of the
	two assertions holds:
	\begin{quote}
		(i)~~ $\max_{1 \le i \le p}E(|\varepsilon_{i,t}|^{2q})
		\le C$ and $ p = O(n^\beta)$, where $q>2$, $\beta \in (0, (q - 2)/4)$ and $C>0$
		are some constants free of $n$ and $p$;
		
		(ii) ~~  $\max_{1\le i \le p}E\{\exp(\lambda_0 |\varepsilon_{i,t}|^{2\alpha})\} \le C$
		and $\log p = o\{n^{{\alpha }/(2 - \alpha)}\}$,
		where $\lambda_0>0$, $\alpha \in (0, 1]$  and $C>0$ are constants free of $n$ and $p$.
	\end{quote}

Provided $\{\bep_t\}$ is independent and identically distributed, Condition 1 implies that $\by_t$ is strictly stationary and that for any $j \ge 1$,
$\|\widetilde{\bA}^{j}\|_2 \le C\delta^{j}$ with some constant $C>0$ and
$\delta \in (0,1)$. The independent and identically distributed assumption in Condition 4 is imposed to simplify the proofs but is
not essential. Condition 2 ensures that the bandwidth $(2k_0+1)$ is asymptotically
identifiable, as $\left\{n^{-1}\log (p \vee n)\right \}^{1/2}$ is the minimum order of a non-zero
coefficient to be identifiable; see, e.g., \citet{LC2013}. Condition 3
guarantees that the covariance matrix $\var(y_t)$ is strictly positive definite.
Condition 4 specifies the two asymptotic modes: (i) high-dimensional cases with
$p=O(n^\beta)$, and (ii) ultra high-dimensional cases with $\log p =
o\{n^{{\alpha}/{(2 - \alpha)}}\}$. 

\subsection{Asymptotic theorems}

We first state the consistency of the selector $\wh k$, defined in
(\ref{a8}), for determining the bandwidth parameter $k_0$.

\begin{theorem}
	\label{thm1}
	Under Conditions 1--4, $\pr(\wh{k} = k_0) \to 1$ as
	$n \to \infty$.
\end{theorem}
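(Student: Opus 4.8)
The plan is to reduce the claim to separately bounding the probabilities of over- and under-fitting for the per-row minimisers $\wh k_i=\arg\min_{1\le k\le K}\BIC_i(k)$, exploiting a useful asymmetry in the definition $\wh k=\max_i\wh k_i$. If no row overfits and at least one row attains $k_0$, then $\wh k=k_0$; concretely, fixing a row $i_0$ for which Condition 2 supplies a boundary signal, the event $\{\wh k_i\le k_0\ \text{for all }i\}\cap\{\wh k_{i_0}\ge k_0\}$ is contained in $\{\wh k=k_0\}$, so
\[
\pr(\wh k\ne k_0)\le\sum_{i=1}^p\pr(\wh k_i>k_0)+\pr(\wh k_{i_0}<k_0).
\]
Thus overfitting must be controlled uniformly over all $p$ rows, whereas for underfitting a single favourable row suffices. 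Throughout I would first show that $\RSS_i(k)/(n-d)$ is bounded away from $0$ and $\infty$ uniformly in $i$ and $k\le K$, using Condition 3 for the lower bound and the geometric decay of Condition 1 for stationarity and moment control, so that $\log\RSS_i(k)-\log\RSS_i(k_0)$ may be replaced by $\{\RSS_i(k)-\RSS_i(k_0)\}/\RSS_i(k_0)$ up to negligible error.

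For $k>k_0$ the candidate regressions are nested and, at any $k\ge k_0$, correctly specified, so $\by_{(i)}=\bX_i\bbeta_i+\bep_{(i)}$ with $\bX_i\bbeta_i$ lying in every larger column space. Writing $P_k$ for the projection onto the columns of the bandwidth-$k$ design, this gives $\RSS_i(k_0)-\RSS_i(k)=\bep_{(i)}^{\T}(P_k-P_{k_0})\bep_{(i)}\ge0$, a quadratic form in the innovations whose projection has rank $d\{\tau_i(k)-\tau_i(k_0)\}$. The task is to show this random reduction is dominated by the penalty increment $n^{-1}d\{\tau_i(k)-\tau_i(k_0)\}C_n\log(p\vee n)$. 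Since $\ep_{i,t}$ is a martingale difference relative to $\mathcal F_{t-1}=\sigma(\by_{t-1},\by_{t-2},\ldots)$, I would bound the quadratic form by a chi-squared-type deviation inequality, sub-exponential under Condition 4(ii) and via truncation plus moment bounds under Condition 4(i). A union bound over the $p$ rows and the at most $K$ bandwidths per row costs a factor of order $\log(p\vee n)$, which the penalty absorbs because $C_n\to\infty$ supplies the extra margin; the two parts of Condition 4 yield precisely the two admissible growth rates of $p$.

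For the chosen row $i_0$ and any $k<k_0$, the regressor carrying the coefficient $a^{(\ell)}_{i_0,i_0\pm k_0}$ is omitted. Projecting it off the retained regressors and invoking the minimum-eigenvalue bound of Condition 3 shows that its contribution to $\RSS_{i_0}(k)-\RSS_{i_0}(k_0)$ is at least a fixed multiple of $n$ times its squared coefficient, which by Condition 2 exceeds $C_nk_0n^{-1}\log(p\vee n)$. Hence the inflation is of order $C_n\log(p\vee n)$, exceeding the minimal detection level by the diverging factor $C_n$; after linearisation this offsets the logarithmic-order reduction in penalty and dominates the single-row stochastic cross term, whose order is only the square root of the signal. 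This renders $\pr(\wh k_{i_0}<k_0)\to0$.

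The main obstacle is the uniform control of the stochastic quadratic forms in the overfitting step under temporal dependence and a diverging dimension. Unlike in independent regression, $\bX_i$ is built from lagged responses and is therefore correlated with $\bep_{(i)}$, so the chi-squared heuristic must be justified through the martingale structure together with concentration of the sample Gram matrices $\bX_i^{\T}\bX_i/n$ to their population limits; the geometric spectral decay $\|\wt\bA^{j_0}\|_2\le\delta^{j_0}$ of Condition 1 is what makes the dependence summable and delivers these concentration bounds. Ensuring that the resulting error probabilities remain summable after the union bound over all $p$ rows and all $k\le K$, in both moment regimes of Condition 4, is the crux; once it is in place, the deterministic signal-versus-penalty comparisons above close the argument.
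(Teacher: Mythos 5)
Your proposal follows essentially the same route as the paper's proof: the same asymmetric decomposition (overfitting controlled uniformly over all $p$ rows via a union bound, underfitting via a single row where Condition 2 supplies the boundary signal), the same linearisation of $\log\RSS_i(k)-\log\RSS_i(k_0)$ justified by uniform control of $\RSS_i(k)/n$ (the paper's Lemma 7), the same nested-projection identity $\RSS_i(k_0)-\RSS_i(k)=\bep_{(i)}^{\T}(P_k-P_{k_0})\bep_{(i)}$ bounded by rank times the sup of normalised cross terms, and the same signal-versus-penalty comparison with $C_n\to\infty$ absorbing the union-bound cost. The only difference is cosmetic: where you invoke martingale chi-squared-type deviation inequalities, the paper concentrates $\sup_{j,k\le p}|\bbe_{(j)}^{\T}\bx_{(k)}|$ and the Gram matrices via Wu's functional dependence measure with a Nagaev-type inequality under Condition 4(i) and blocking plus Bernstein bounds under 4(ii), which serves the identical purpose.
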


\noindent
{\bf Remark 3}. In Theorem \ref{thm1}, $k_0$ is assumed to be fixed,
	as in applications small $k_0$ is of particular interest.
	But we can allow the bandwidth parameter $k_0$ to diverge as $n,p \to
	\infty$. To show its consistency, the regularity conditions would need to be
	strengthened. To be specific, if $k_0 \ll C_n^{-1}n/\log (p \vee
	n)$,  $\pr(\wh{k} = k_0) \to 1$ as $n \to \infty$ under
	Conditions 1' and 2--4 in Section 3.1; see the Supplementary Material.

Since $k_0$ is unknown, we replace it by $\wh k$ in the estimation procedure for $\bA_1, \ldots, \bA_d$
described in Section \ref{sec22},  and still denote the resulted estimators by $\wh \bA_1, \ldots, \wh \bA_d$.
Theorem 2 addresses their convergence rates.

\begin{theorem}
	\label{thm2}
	Let Conditions 1--4 hold.
	As $n \to \infty$, it holds for $j=1, \ldots, d$ that
	$$
	\big \|\widehat{\bA}_j - \bA_j\big\|_F = O_P\Big\{(p / n)^{1/2}\Big \},~~
	\big \|\widehat{\bA}_j - \bA_j\big\|_2 = O_P\Big\{ (\log p/ n)^{1/2}\Big\}.
	$$
\end{theorem}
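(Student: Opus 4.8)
The plan is to reduce the matrix estimation problem to the row-wise regressions \eqref{a2} and to control two random quantities uniformly over the $p$ rows. By Theorem \ref{thm1}, $\pr(\wh k = k_0)\to 1$, so throughout I work on the event $\{\wh k = k_0\}$; there the working dimension $\tau_i$ is correctly specified, each $\wh\bA_j-\bA_j$ is banded with bandwidth $2k_0+1$, and, writing $\wh\bSigma_i=(n-d)^{-1}\bX_i^{\T}\bX_i$ and $\bep_{(i)}=(\ep_{i,d+1},\ldots,\ep_{i,n})^{\T}$, substituting $\by_{(i)}=\bX_i\bbeta_i+\bep_{(i)}$ into \eqref{a3} gives the exact decomposition
\[
\wh\bbeta_i-\bbeta_i=\wh\bSigma_i^{-1}\,\frac{1}{n-d}\bX_i^{\T}\bep_{(i)},\qquad i=1,\ldots,p.
\]
Because $\bbeta_i$ collects the nonzero entries of the $i$-th rows of $\bA_1,\ldots,\bA_d$, on this event one has the identity $\sum_{j=1}^d\|\wh\bA_j-\bA_j\|_F^2=\sum_{i=1}^p\|\wh\bbeta_i-\bbeta_i\|_2^2$, which drives the Frobenius bound, while the banded structure drives the spectral bound.

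First I would control the design. Since $\bx_{i,t}$ is a subvector of $(\by_{t-1}^{\T},\ldots,\by_{t-d}^{\T})^{\T}$, the population matrix $\bSigma_i=\cov(\bx_{i,t})$ is a principal submatrix of the $dp\times dp$ stationary covariance $\Gamma$ of that stacked vector; by Cauchy interlacing $\lambda_{\min}(\bSigma_i)\ge\lambda_{\min}(\Gamma)$, and a lower bound $\lambda_{\min}(\Gamma)\ge c>0$ uniform in $p$ follows from Condition 3 together with the stability in Condition 1, via the spectral density of $\by_t$. Next I would prove the uniform concentration $\max_{1\le i\le p}\|\wh\bSigma_i-\bSigma_i\|_2=o_P(1)$; since each $\wh\bSigma_i$ has fixed dimension $\tau_i\le(2k_0+1)d$, this reduces to a bound on $\max_i\max_{l,l'}$ of the centred sample autocovariances of $\by_t$ over $O(p)$ coordinate pairs. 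Combining the two facts yields $\max_i\|\wh\bSigma_i^{-1}\|_2=O_P(1)$ on an event of probability tending to one.

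For the Frobenius rate I would not use a uniform bound on the rows, as that loses a $\log p$ factor. Instead, on the good event $\max_i\|\wh\bSigma_i^{-1}\|_2\le M$,
\[
\sum_{i=1}^p\|\wh\bbeta_i-\bbeta_i\|_2^2\le M^2\sum_{i=1}^p\Big\|\frac{1}{n-d}\bX_i^{\T}\bep_{(i)}\Big\|_2^2 ,
\]
and I would bound the expectation of the right-hand side. Each coordinate of $(n-d)^{-1}\bX_i^{\T}\bep_{(i)}$ is $(n-d)^{-1}\sum_t x\,\ep_{i,t}$ with $x$ a lagged component of $\by$; since $\bep_t$ is independent of the past, $\{x\,\ep_{i,t}\}_t$ is a martingale difference sequence, so the variance of this average is $(n-d)^{-1}E(x^2)E(\ep_{i,t}^2)=O(1/n)$ by the bounded variances in Condition 3. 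Summing the $\tau_i=O(1)$ coordinates and the $p$ rows gives expectation $O(p/n)$, and Markov's inequality yields $\sum_i\|\wh\bbeta_i-\bbeta_i\|_2^2=O_P(p/n)$; hence $\|\wh\bA_j-\bA_j\|_F=O_P\{(p/n)^{1/2}\}$ for each $j$.

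For the spectral rate I would exploit that $\bB=\wh\bA_j-\bA_j$ is banded, so $\|\bB\|_2\le(\|\bB\|_1\|\bB\|_\infty)^{1/2}\le(2k_0+1)\max_{i,j}|b_{ij}|$, and $\max_{i,j}|b_{ij}|\le\max_i\|\wh\bbeta_i-\bbeta_i\|_\infty$. On the good event this is, up to the constant $M\sqrt{\tau_i}$, at most $\max_i\|(n-d)^{-1}\bX_i^{\T}\bep_{(i)}\|_\infty$, a maximum of $O(p)$ martingale-difference averages. A maximal inequality with a union bound then gives this maximum to be $O_P\{(\log p/n)^{1/2}\}$, whence $\|\wh\bA_j-\bA_j\|_2=O_P\{(\log p/n)^{1/2}\}$. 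The main obstacle is precisely these two uniform statements---$\max_i\|\wh\bSigma_i-\bSigma_i\|_2=o_P(1)$ and the $O(p)$-fold maximum of the noise averages---for the serially dependent process under both the polynomial-moment regime and the exponential-moment regime of Condition 4. I expect to handle the dependence through the geometric decay $\|\wt{\bA}^{j_0}\|_2\le\delta^{j_0}$ of Condition 1, which yields exponentially decaying temporal dependence, and then to apply a Bernstein inequality under Condition 4(ii) or a Nagaev/Rosenthal moment inequality under Condition 4(i), the dimension restrictions $\beta\in(0,(q-2)/4)$ and $\log p=o\{n^{\alpha/(2-\alpha)}\}$ being exactly what makes the $(\log p/n)^{1/2}$ rate valid; these maximal bounds are most naturally established as lemmas shared with the proof of Theorem \ref{thm1}.
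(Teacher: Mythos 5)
Your proposal is correct and takes essentially the same route as the paper's own proof: work on the event $\{\wh k = k_0\}$ guaranteed by Theorem \ref{thm1}, control $\min_{i}\lambda_{\min}(n^{-1}\bX_i^{\T}\bX_i)$ uniformly via entrywise concentration of sample autocovariances together with Condition 3 (the paper's Lemmas 5--7), obtain the Frobenius rate by summing $E\|\bX_i^{\T}\bep_{(i)}\|_2^2 = O(n)$ over the $p$ rows and applying Markov (thus avoiding the $\log p$ loss, exactly as the paper does), and obtain the spectral rate from the banded structure via $\|\bB\|_2 \le (\|\bB\|_1\|\bB\|_\infty)^{1/2} \le (2k_0+1)\max_{i,j}|b_{ij}|$ combined with the union-bound estimate $\sup_{j,k}|\bbe_{(j)}^{\T}\bx_{(k)}| = O_P\{(n\log p)^{1/2}\}$ under either regime of Condition 4. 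The only cosmetic differences are that the paper reduces to $d=1$ via the companion form rather than arguing directly with the stacked $dp\times dp$ covariance (where your spectral-density lower bound would need a little more care, since Condition 3 bounds $\lambda_{\min}\{\cov(\by_t)\}$ rather than $\lambda_{\min}(\bSigma_{\bep})$), and that the paper cites its Lemma 5(ii)/6(ii) where you compute the martingale-difference variance directly.
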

\askip

Conditions 4(i) and 4(ii) impose, respectively, a high moment condition and an
exponential tail condition on the innovation distribution.
Although the convergence rates in Theorem 2 have the same expressions in
terms of $n$ and $p$, due to the different conditions imposed on them in
Conditions 4(i) and 4(ii),
the actual convergence rates are different under the two settings.
For example, Condition  4(i) allows $p$ to grow in the order $n^{\beta}$, which
implies the convergence rate ${(\log n /n)}^{1/2}$ for $\wh{\bA}_j$ under the spectral norm.
On the other hand, Condition  4(ii) may allow $p$ to diverge at the rate
$\exp\{ n^{\alpha/(2-\alpha)- 2\epsilon } \}$ for a small constant $\epsilon>0$, and the implied
convergence rate for $\wh{\bA}_j$ under the spectral norm is
$n^{1/2 + \epsilon -\alpha/(4-2\alpha)}$.

\section{Estimation for auto-covariance functions}
\label{sec4}

For the banded vector autoregressive process $\by_t$ defined by (\ref{var2}), the auto-covariance function
$\bSigma_j = \cov( \by_t, \by_{t+j})$ is unlikely to be banded. For
example for a stationary banded autoregressive process with order 1, it can be shown that
\begin{equation} \label{a1}
\bSigma_0 \equiv \var(\by_t) = \bSigma_{\bep} + \sum_{i = 1}^\infty \bA_1^i
\bSigma_{\bep} (\bA_1^{\T})^i.
\end{equation}
For any banded matrices
$\bB_1$ and $\bB_2$ with
bandwidths $2k_1+1$ and $ 2k_2+1$, respectively, the product  $\bB_1 \bB_2$ is a banded matrix with the
enlarged bandwidth $2(k_1+k_2)+1$ in general.
Thus  $\bSigma_0$ presented in (\ref{a1}) is not a banded matrix.
Nevertheless if $\var(\bep_t) = \bSigma_{\ep}$ is also banded, 
Theorem 3 shows that $\bSigma_j$ can be approximated by
some banded matrices.

\noindent
{\it Condition} 5.
	The matrix $\bSigma_{\bep}$ is banded with bandwidth $2s_0+1$ and
	$\|\bSigma_{\bep}\|_1 \le C < \infty$, where  $C, s_0>0$ are constants independent of $p$,
	and $s_0$ is an integer.

\begin{theorem}
	\label{thm3}
	Let Conditions 1 and 5 hold. For any integers $r, j \ge 0$, there exists a
	banded matrix $ \bSigma_j^{(r)}$ with bandwidth
	$2\{(2r+j)k_0+s_0\} +1$  such that
	\[
	\|\bSigma_j^{(r)} - \bSigma_j\|_2 \le C_1 \, \delta^{2 (r + j)+1},~~
	\|\bSigma_j^{(r)} - \bSigma_j\|_1 \le C_2 \, r\,\delta^{2(r+j) + 1},
	\]
	where $C_1$ and $C_2$ are positive constants independent of $r$ and $p$, and
	$\delta \in (0, 1)$ is specified in Condition 1.
\end{theorem}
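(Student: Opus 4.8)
The plan is to pass to the companion form and to the moving-average representation of $\by_t$, so that $\bSigma_j$ becomes an explicit infinite series of banded matrices whose spectral norms decay geometrically; the approximant $\bSigma_j^{(r)}$ will then simply be the partial sum of the first $r+1$ terms. Writing $\wt\by_t = \wt\bA\wt\by_{t-1}+\wt\bep_t$ as in (\ref{a7}), letting $\bJ=(\bI_p,\bzero,\ldots,\bzero)$ be the $p\times pd$ block that extracts $\by_t=\bJ\wt\by_t$ and noting $\wt\bep_t=\bJ^{\T}\bep_t$, Condition 1 gives $\|\wt\bA^i\|_2\le C\delta^i$, hence the stationary expansion $\by_t=\sum_{i\ge0}\Psi_i\bep_{t-i}$ with $\Psi_i=\bJ\wt\bA^i\bJ^{\T}$. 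Since $\bep_t$ is serially uncorrelated with variance $\bSigma_{\bep}$, this yields the closed form
\[
\bSigma_j=\cov(\by_t,\by_{t+j})=\sum_{i=0}^\infty \Psi_i\,\bSigma_{\bep}\,\Psi_{i+j}^{\T},\qquad j\ge0.
\]

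Next I would record two properties of the coefficients $\Psi_i$. First, they satisfy the recursion $\Psi_0=\bI_p$ and $\Psi_i=\sum_{\ell=1}^{\min(i,d)}\bA_\ell\Psi_{i-\ell}$, so a straightforward induction using the banding (\ref{a6}) of each $\bA_\ell$ shows that $\Psi_i$ is banded with bandwidth parameter at most $ik_0$. Second, $\|\Psi_i\|_2\le\|\wt\bA^i\|_2\le C\delta^i$. Combining these with Condition 5, under which $\bSigma_{\bep}$ is banded with bandwidth parameter $s_0$ and $\|\bSigma_{\bep}\|_1\le C$, each summand $\Psi_i\bSigma_{\bep}\Psi_{i+j}^{\T}$ is banded with bandwidth parameter $(2i+j)k_0+s_0$ and has spectral norm at most $C^3\delta^{2i+j}$. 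I would therefore set
\[
\bSigma_j^{(r)}=\sum_{i=0}^{r}\Psi_i\,\bSigma_{\bep}\,\Psi_{i+j}^{\T},
\]
which, as a sum of banded matrices with largest bandwidth attained at $i=r$, is banded with bandwidth parameter $(2r+j)k_0+s_0$, i.e.\ bandwidth $2\{(2r+j)k_0+s_0\}+1$, as required.

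For the spectral bound, the error $\bSigma_j-\bSigma_j^{(r)}=\sum_{i>r}\Psi_i\bSigma_{\bep}\Psi_{i+j}^{\T}$ is a tail, and the triangle inequality with the geometric estimate above gives $\|\bSigma_j-\bSigma_j^{(r)}\|_2\le C^3\sum_{i>r}\delta^{2i+j}=O(\delta^{2(r+1)+j})$, which is bounded by $C_1\delta^{2(r+j)+1}$, the constant absorbing the harmless lag-dependent factor $\delta^{1-j}$. The $\ell_1$ bound is the step I expect to be the main obstacle: Theorem 3 assumes only the spectral Condition 1, not the $\ell_1/\ell_\infty$ Condition 1', so there is no direct geometric control of $\|\Psi_i\|_1$. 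The remedy is to use banding to pass from $\ell_2$ to $\ell_1$: any matrix $\bB$ banded with bandwidth parameter $b$ satisfies $\|\bB\|_1\le(2b+1)^{1/2}\|\bB\|_2$, since each column has at most $2b+1$ nonzero entries. Applying this to the $i$th summand, whose bandwidth parameter is $(2i+j)k_0+s_0$, gives $\|\Psi_i\bSigma_{\bep}\Psi_{i+j}^{\T}\|_1\le c\,i^{1/2}\delta^{2i+j}$, and summing the tail produces a geometric series with a polynomial prefactor, $\sum_{i>r}i^{1/2}\delta^{2i+j}$, which is dominated by $C_2\,r\,\delta^{2(r+j)+1}$.

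In summary, the only genuinely delicate points are tracking the bandwidth-dependent $(2b+1)^{1/2}$ factors through the triple product and confirming that the resulting polynomial-times-geometric tail collapses to the stated $r\,\delta^{2(r+j)+1}$ rate; the remaining identities — the moving-average formula for $\bSigma_j$, the inductive banding of $\Psi_i$, and the geometric decay from Condition 1 — are routine bookkeeping.
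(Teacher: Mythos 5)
Your proof is correct and follows essentially the same route as the paper's: the paper likewise truncates the series $\bSigma_0 = \bSigma_{\bep} + \sum_{i\ge 1} \bPhi_i \bSigma_{\bep}\bPhi_i^{\T}$ with $\bPhi_i = \bJ\widetilde{\bA}^i\bJ^{\T}$, establishes bandedness of $\bPhi_i$ (bandwidth parameter $ik_0$) by the same companion-matrix recursion, gets geometric spectral decay from Condition 1, and converts the $\ell_2$ tail bound into the $\ell_1$ bound via the bandwidth of each summand, treating $j\ge 1$ ``analogously.'' The only differences are cosmetic refinements within the same approach: you write out general $j$ explicitly through $\bSigma_j=\sum_{i\ge 0}\Psi_i\bSigma_{\bep}\Psi_{i+j}^{\T}$, and your per-column Cauchy--Schwarz factor $(2b+1)^{1/2}$ slightly sharpens the paper's cruder factor $2b+1$ in passing from $\|\cdot\|_2$ to $\|\cdot\|_1$, with both tails still dominated by $C_2\, r\,\delta^{2(r+j)+1}$ once the harmless $j$-dependent factor is absorbed into the constants (which the theorem only requires to be free of $r$ and $p$).
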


Under Condition 5, $\bSigma_0^{(r)} = \bSigma_{\ep} +
\sum_{1 \le i \le r} \bA_1^i \bSigma_{\ep} (\bA_1^{\T})^i$ is a banded matrix
with bandwidth $2(2rk_0+s_0) +1$.  Theorem 3 ensures that the norms of
the difference $\bSigma_0 -\bSigma_0^{(r)} = \sum_{i>r}  \bA_1^i \bSigma_{\ep} (\bA_1^{\T})^i$
admit the required upper bounds.
Theorem 3 also paves the way for estimating $\bSigma_j$ using the banding
method of \citet{bickel2008}, as $\bSigma_j$ can be approximated by a banded matrix
with a bounded error and thus may be effectively treated as a banded matrix.
   To this end, we define the banding operator as follows:
for any matrix $\bH = (h_{ij})$,
$B_r (\bH) = \big\{ h_{ij}I(|i-j|\le r ) \big\}$. Then the banding estimator for $\bSigma_j$
is defined as
\begin{equation}
\wh{\bSigma}_j^{(r_n)} = B_{r_n}( \wh{\bSigma}_j), ~~
\wh{\bSigma}_j = \frac{1}{n} \sum_{t=1}^{n-j} (\by_t - \bar \by ) (\by_{t+j} - \bar \by)^{\T},
~~
\bar{\by} = \frac{1}{n} \sum_{t=1}^n \by_t,
\end{equation}
where $r_n = C \log(n/\log p),
$
and $C>0$ is a constant greater than $\left(-4 \log \delta \right)^{-1}$.
Theorem 4 presents the convergence rates for 
$\wh{\bSigma}_j^{(r_n)}$, which are faster than those in \citet{bickel2008}, due to the approximate banded structure in Theorem 3.

\begin{theorem}
	\label{thm4}
	Assume that Conditions 1--5 hold.
	Then for any integer $j \geq 0$, as $n,p\to \infty$,
	\begin{eqnarray*}
	\|\widehat{\bSigma}_{j}^{(r_n)} - \bSigma_j\|_2 = O_P \Big\{r_n
	{\left({n^{-1}\log p}\right)}^{1/2} + \delta^{2 (r_n+j) + 1}\Big\}
	=O_P \Big\{ \log(n/\log p){\left({n^{-1}\log p}\right)}^{1/2}\Big\},
	\end{eqnarray*}
	and
	$$
         \|\widehat{\bSigma}_{j}^{(r_n)} - \bSigma_j\|_1	
         = O_P \Big\{	\log(n/\log p)\left(n^{-1}\log p \right)^{1/2}\Big\}.
	$$
\end{theorem}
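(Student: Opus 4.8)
The plan is to decompose the error into a variance part and a bias part, namely $\|\widehat{\bSigma}_{j}^{(r_n)} - \bSigma_j\|_\bullet \le \|B_{r_n}(\widehat{\bSigma}_j) - B_{r_n}(\bSigma_j)\|_\bullet + \|B_{r_n}(\bSigma_j) - \bSigma_j\|_\bullet$ for $\bullet \in \{1,2\}$, where $B_{r_n}(\bSigma_j)$ is the population banded matrix. The bias term is controlled immediately by Theorem 3: taking $r$ in that statement so that $2\{(2r+j)k_0+s_0\}+1 = r_n$ (equivalently $r \asymp r_n/(4k_0)$), we obtain $\|B_{r_n}(\bSigma_j) - \bSigma_j\|_\bullet = O(\delta^{2(r+j)+1}) = O(\delta^{c\, r_n})$ for some $c>0$; with $r_n = C\log(n/\log p)$ and $C > (-4\log\delta)^{-1}$ chosen so that $\delta^{c r_n}$ is of smaller order than $(n^{-1}\log p)^{1/2}$, the bias is negligible against the variance rate. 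This is the step where Theorem 3 does the real work and explains the improvement over \citet{bickel2008}: the exponentially small bias lets us take $r_n$ only logarithmic in $n$, rather than polynomial.

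The main effort goes into the variance term $\|B_{r_n}(\widehat{\bSigma}_j - \bSigma_j)\|_\bullet$. First I would bound the entrywise fluctuation: under Condition 4 the sample autocovariances concentrate, so that $\max_{i,k}|\widehat{\sigma}^{(j)}_{ik} - \sigma^{(j)}_{ik}| = O_P\{(n^{-1}\log p)^{1/2}\}$. This is the standard high-dimensional concentration bound for quadratic-form-type statistics of a strictly stationary process; under Condition 4(i) it follows from a truncation plus Bernstein argument using the moment bound and $p=O(n^\beta)$, while under Condition 4(ii) the exponential tail with $\log p = o\{n^{\alpha/(2-\alpha)}\}$ gives the same rate. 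The strict stationarity and the geometric decay $\|\widetilde{\bA}^j\|_2 \le C\delta^j$ from Condition 1 provide the mixing/decay needed for the variance of $\widehat{\sigma}^{(j)}_{ik}$ to be $O(1/n)$ uniformly in $i,k$.

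Given the uniform entrywise rate, the passage to the spectral and $\ell_1$ norms uses that $B_{r_n}(\widehat{\bSigma}_j - \bSigma_j)$ has at most $2r_n+1$ nonzero entries per row and column, so by the standard inequality $\|\bM\|_2 \le \|\bM\|_1 \le \max_i \sum_k |m_{ik}| \le (2r_n+1)\max_{i,k}|m_{ik}|$ for a matrix banded with bandwidth $2r_n+1$. Combining this with the entrywise bound gives $\|B_{r_n}(\widehat{\bSigma}_j - \bSigma_j)\|_\bullet = O_P\{r_n (n^{-1}\log p)^{1/2}\}$ for both norms. Adding the negligible bias yields the stated rate $O_P\{\log(n/\log p)(n^{-1}\log p)^{1/2}\}$, since $r_n \asymp \log(n/\log p)$.

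The hard part will be establishing the uniform entrywise concentration $\max_{i,k}|\widehat{\sigma}^{(j)}_{ik} - \sigma^{(j)}_{ik}| = O_P\{(n^{-1}\log p)^{1/2}\}$ under the two regimes of Condition 4, because the summands $(y_{i,t}-\bar y_i)(y_{k,t+j}-\bar y_k)$ are dependent across $t$ and only the geometric operator-norm decay of Condition 1 is available to quantify that dependence. I would handle it by writing $y_{i,t}$ in its MA($\infty$) form $y_{i,t} = \sum_{\ell\ge 0} (\widetilde{\bA}^\ell \widetilde{\bep}_{t-\ell})_i$, truncating the expansion at a lag of order $\log n$ using $\|\widetilde{\bA}^\ell\|_2 \le C\delta^\ell$ to make the tail negligible, and then applying a block/Bernstein-type inequality to the resulting finitely-dependent (after truncation) array, separately under the polynomial-moment case 4(i) and the exponential-tail case 4(ii); the centering by $\bar\by$ contributes only a lower-order term by the same concentration applied to the sample means.
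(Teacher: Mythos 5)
Your overall architecture coincides with the paper's: a bias--variance decomposition, the stochastic term bounded by (bandwidth)\,$\times \sup_{j,k}|\widehat{\Sigma}_{jk}-\Sigma_{jk}|$ with the entrywise rate $O_P\{(n^{-1}\log p)^{1/2}\}$ established exactly as in the paper's Lemmas 5(i) and 6(i) (a Nagaev-type inequality under Wu's functional dependence measure for Condition 4(i); MA$(\infty)$ truncation at lag $O(\log n)$ plus Bernstein-type bounds for 4(ii)), and the bias controlled through Theorem 3. Your extra care with the centering at $\bar{\by}$ is fine and is in fact silently omitted by the paper.

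The genuine gap is in your bias step. Because the Theorem 3 approximant $\bSigma_j^{(r)}$ has bandwidth $2\{(2r+j)k_0+s_0\}+1$, your matching forces $r \asymp r_n/(4k_0)$, so the hard-truncation bias is only $\|B_{r_n}(\bSigma_j)-\bSigma_j\| = O(\delta^{c r_n})$ with $c \asymp 1/(2k_0)$, i.e.\ $(n/\log p)^{-cC(-\log\delta)}$. With the prescribed $C>(-4\log\delta)^{-1}$ this exponent is barely above $1/(8k_0)$, which for any $k_0\ge 1$ is below $1/2$: the bias then \emph{dominates} the variance term $r_n(n^{-1}\log p)^{1/2}$, and your assertion that the stated $C$ makes it negligible fails; your route needs $C > k_0/(-\log\delta)$, a factor $4k_0$ larger. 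Nor is this slack in your bound an artifact: for banded $\bA$ the entries of $\bSigma_j$ at distance $m$ from the diagonal decay like $\delta^{(m-s_0)/k_0}$ (contributions to such an entry start at index $l \ge (m-s_0)/(2k_0)$ in the series $\sum_l \bA^l\bSigma_{\bep}(\bA^{\T})^l$), so banding $\bSigma_j$ at radius $r_n$ genuinely incurs bias of order $\delta^{r_n/k_0}$, not $\delta^{2r_n}$. The paper avoids this by centering at $\bSigma_j^{(r_n)}$ itself, with series-truncation index $r_n$, so its bias term is exactly $\delta^{2(r_n+j)+1}$, and it bounds $\|\widehat{\bSigma}_j^{(r_n)}-\bSigma_j^{(r_n)}\|$ by $(4r_nk_0+2s_0+1)\sup_{j,k}|\widehat{\Sigma}_{jk}-\Sigma_{jk}|$; to be fair, that bound tacitly treats the estimator's band (radius $r_n$) as covering the band of $\bSigma_j^{(r_n)}$ (radius $2r_nk_0+s_0$), so the paper's calibration of $C$ carries the same $k_0$-dependent wrinkle that your explicit bookkeeping exposes. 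If you switch to the paper's centering and accept that the constant in $r_n$ (harmlessly, since $k_0$ is fixed) must scale with $k_0$, your argument closes; as written, the verification that the bias is $o\{r_n(n^{-1}\log p)^{1/2}\}$ under the stated constant is the step that fails.
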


In practice we need to specify $r_n$. An ideal selection would be
$r_{n} = \arg \min_{r} R_j(r)$, where
$$
R_j(r) = E(\|\widehat{\bSigma}_{j}^{(r)} - \bSigma_j\|_1),
$$
but in practice this is unavailable because $\bSigma_j$ is unknown. We replace it
by an estimator obtained via a wild bootstrap. To this end, let
$u_1, \ldots, u_n$ be independent and identically distributed with $E(u_t)= \var(u_t) =1$. A bootstrap estimator
for $\bSigma_j$ is defined as
\[
\bSigma_j^* = {1 \over n} \sum_{t=1}^{n-j} u_t (\by_t - \bar \by ) (\by_{t+j} - \bar \by)^{\T}.
\]
For example, we may draw $u_t$  from the standard exponential distribution.
Consequently the bootstrap estimator for $R_j(r) $ is defined
as
\[
R_j^*(r) = E \big\{ \| B_r(\bSigma_{j}^*) - \wh{\bSigma}_j\|_1 \big|\, \by_1 , \ldots, \by_n \big\}.
\]
We choose $r_{n}$ to minimize $R_j^*(r)$.
In practice we use the approximation
\begin{equation} \label{boots}
R_j^*(r) \approx \frac{1}{q} \sum_{k = 1}^q
\|B_r(\bSigma_{j,k}^*) - \widehat{\bSigma}_{j}\|_1,
\end{equation}
where $\bSigma_{j,1}^*, \ldots, \bSigma_{j,q}^*$ are $q$ bootstrap estimates for $\bSigma_j$, obtained
by repeating the above wild bootstrap scheme $q$ times, and $q$ is a large integer.

\section{Numerical properties}
\label{sec5}

\subsection{Simulations}
\label{sec51}
In this section, we evaluate the finite-sample properties of the proposed methods for the model
$$
\by_t = \bA \by_{t-1} + \bep_t,
$$
where $\{ \bep_t \}$ are independent and $N(0, I_p)$.
We consider two
settings for the banded coefficient
matrix $\bA=(a_{ij})$ as follows:
\begin{description}
	\item {(i)}  $\{ a_{ij}; \; |i-j| \le k_0\}$  are generated independently from $U[-1,1]$.
	Since the spectral norm of $\bA$ must be smaller than 1, we re-scale $\bA$ by
	$ \eta  \bA/\|\bA\|_2 $, where $\eta$ is generated from $U[0.3,1.0)$;
	
	\item {(ii)}  $\{a_{ij}; \; |i-j| < k_0\}$ are generated
	independently from the mixture distribution
	$
	\xi \cdot 0 + (1- \xi) \cdot N(0,1)
	$
	with $\pr(\xi = 1) = 0.4$. The elements $\{a_{ij}; |i-j| = k_0\}$
are drawn independently from $-4$ and  $4$ with probability 0.5 each.
Then $\bA$ is rescaled as in (i) above.
\end{description}
In (ii), there are about $0.4(2k_0-1)p$ zero elements within the band, i.e.,  $\bA$ is sparser than in (i).


We set $n = 200$, $p = 100,
200,400, 800$, and $k_0 = 1, 2, 3, 4$. We repeat each setting
500 times. We only report the results with $K=15$ in (\ref{a8}), as the
results with other values of $K \ge k_0$ are similar.
Table {\ref{table1}} lists the relative frequencies of the occurrence of
the events $\{ \wh k =k\}, \;
\{ \wh k>k_0\}$ and $ \{ \wh k < k_0\}$ over the 500 replications. Overall 
$\wh k$ under-estimates $k_0$, especially when $k_0 = 3$ or 4. In fact when $k_0=4$,
$\wh k$ chose 3 most times. The constraint $\|\bA \| < 1$ makes  most non-zero elements
small or very small when $p$ is large, and that only the coefficients at least as large as
$\surd{\log(p\vee n)/n}$ are identifiable; see Condition 2. Estimation performs better in setting (ii) than in setting (i), as Condition 2
is more likely to hold at the boundaries of the band in setting (ii).

The Bayesian information criterion (\ref{marginalBIC}) is defined for each row separately. One
natural alternative would be
\begin{eqnarray*}
\bic(k) = \sum_{i = 1 }^p \log \rss_{i}(k) + {1 \over n }|\tilde{\tau}(k)| C_n \log (p \vee n),
\end{eqnarray*}
where $\tilde{\tau}(k) = (2p+1)k - k^2 - k $ is the total number of parameters
in the model. This leads to the following estimator for the bandwidth parameter,
\begin{equation} \label{newK}
\tilde{k} = \arg \min_{1 \le k \le K}\bic(k).
\end{equation}
Although
this joint approach can be shown
to be consistent, its finite sample performance, reported in Table {\ref{table2}}, is worse
than that of the marginal Bayesian information criterion (\ref{marginalBIC}), presented in Table {\ref{table1}}.

We also calculate both $L_1$ and $L_2$ errors in estimating the banded
coefficient matrix $\bA$. The means and the standard deviations of the
errors for setting (i) are reported in Table
{\ref{table3}}. Table \ref{table3} also reports
results from estimating  $\bA$ using the true values for the  bandwidth parameter $k_0$.
The accuracy loss 
in estimating $\bA$ caused by unknown $k_0$ is almost negligible.
The results for setting (ii) are similar and are therefore omitted.

To evaluate the estimation performance for the auto-covariance matrices
$\bSigma_0$ and $\bSigma_1$, we set $k_0 = 3$, and the spectral norm of $\bA$
at 0.8.
Furthermore, we let $\bep_t$ be independent and
$N(0, \bSigma_{\bep})$ now, where $\bSigma_{\bep} = \bB\bB^{\T}
$ and $\bB = (b_{ij}), b_{11} = 1, b_{ij} = 0.8 I(|i - j| = 1) + 0.6I(i = j)$, $i > 1$ or $j>1$.
Table \ref{table4} lists the average estimation errors and the  standard
deviations over 100 replications,
measured by matrix $L_1$-norm. We also report Monte Carlo results for
a thresholded estimator and the sample covariance estimator.
For the banded estimator, we choose $r$ to minimize the bootstrap
loss defined in (\ref{boots}) with $q=100$. For the thresholded estimator, the thresholding
parameter is selected in the same manner.
Table {\ref{table4}} shows that
the proposed banding method performs much better than
the thresholded estimator since it adapts directly to the underlying structure,
while the sample covariance performs much worse than both the banding and threshold
methods.

\subsection{Real data examples}
\label{sec52}



Consider first the weekly temperature data across 71 cities in China from 1 January 1990
to 17 December 17 2000, i.e.,  $p = 71$ and $n = 572$.
Fig.\ref{example1-fig2} displays
the weekly temperature of  Ha'erbin, Shanghai
and Hangzhou, showing strong seasonal behavior with period 52 weeks.
Therefore, we set the seasonal period to be 52 and estimate the seasonal
effects by taking averages of the same weeks across different years.
The deseasonalized series, i.e., the original series subtracting estimated
seasonal effects, are denoted by $\{ \, \by_t; \; t = 1,\ldots,572 \, \}$, and each
$\by_t$ has 71 components.

Naturally we would order the 71 cities according to their geographic locations.
However the choice is not unique. For example, we may order the cities from north
to south, from west to east, from northwest to southeast, or from southwest to
northeast. By setting $d=1$, each ordering leads to a different
banded autoregressive model with order 1. We compare
those four models by one-step ahead, and two-step ahead post-sample prediction for the
last 30 data points in the series. To select an optimum model, we compute (\ref{2remark}) for these four orderings. These numerical results and the selected
bandwidth parameters  $\wh k$ are reported in Table \ref{table5}.
Three out of those four models select $\wh k=2$, while the model based on the ordering
from west to east picks $\wh k=4$.
Overall the model based on the ordering from southwest to northeast is preferred, which also has the minimum one-step ahead post-sample predictive errors.
The performances of the four models in terms of the prediction are very close.

Also included in Table \ref{table5}
are the post-sample predictive errors of the sparse autoregressive model with order 1 obtained via lasso by
minimizing
$$
\sum_{t=2}^{n}\|\by_t - \bA \by_{t-1}\|^2  +  \sum_{i,j = 1}^p\lambda_i|a_{ij}|,
$$
where $\lambda_{1},\ldots, \lambda_{p}$ are tuning parameters
estimated by five-fold cross-validation as in  Bickel and Levina (2008).
The prediction accuracy of the sparse model via lasso is comparable to those
of the banded autoregressive models, though slightly worse, especially for the two-step ahead prediction.
However the lack of any structure in the estimated sparse coefficient
matrix $\wt{\bA}$, displayed in Fig.\ref{example1-fig3}(b),
makes such fits difficult to interpret. In contrast, the banded coefficient
matrix,  depicted in Fig.\ref{example1-fig3}(a), is attractive. 

\askip

As a second example, we consider the daily sales of a clothing
brand in 21 provinces in China from 1 January 2008 to 9 December 2012, i.e., $n = 1812, \; p = 21$.
Fig.\ref{example2-fig1} plots the relative geographical positions of 21 provinces and province-level municipalities.
We first subtract each of the 21 series by its mean.
Similar to the example  above, we order the 21 provinces according to the four different
geographic orientations, and fit a banded autoregressive model with order 1 for each ordering. The selected bandwidth parameters, the values according to (\ref{2remark}) and the post
sample prediction errors for the last 30 data points in the series are reported in
Table \ref{table6}. We also rank the series according to their geographic distances
to Heilongjiang, the most northwestern province; see Fig.\ref{example2-fig1}.
This results in a different ordering to that from north to south.
Table \ref{table6} indicates that the minimum bandwidth parameter $\wh k$
is 3, attained by
the ordering based on the distances to Heilongjiang, followed by $\wh
k=4$ attained by the north-to-south ordering. The post-sample prediction performances of
those two models are almost the same, and are better than those of the other
three banded models and the sparse autoregressive model. 

The ordering based on the direction from
northwest to southeast leads to $\wh k =12$. Therefore the corresponding banded model has 21 regressors for some components according to (3), i.e., no banded structure is observed
in this case.  Fig.\ref{example2-fig1}  indicates that the ordering from
northwest to southeast puts together some provinces which are distance
away from each other. Hence this is certainly a wrong ordering  as far as
the banded autoregressive structure is concerned.

The estimated coefficient matrix $\wh A$ for the banded vector autoregressive model with order 1 based on the distances
to Heilongjiang and the estimated $\wt A$ by lasso for the autoregressive model with order 1 are plotted in
Fig.\ref{example2-fig3}. The banded model facilitates an easy interpretation, i.e.,
the sales in the neighbour provinces are closely associated with each other. The lasso
fitting cannot reveal this phenomenon.

\section*{Acknowledgements}
We are grateful to the Editor, the Associate Editor and two referees for their
insightful comments and valuable suggestions, which lead to significant improvement of  our article.
This research was supported in part by Natural National Science of Foundation of China, 
National Science of Foundation of the United States of America 
and Engineering and Physical Sciences Research Council of the United Kingdom.
 This paper was completed when Shaojun Guo was Research
Fellow at London School of Economics and Assistant Professor at Chinese
Academy of Sciences.

\section*{Supplementary Material}
Supplementary material available at {\it Biometrika} online includes proofs of Theorems 1-4, the consistency of generalized Bayesian information criterion defined by
(9) in Section 2.3 and the consistency of the marginal Bayesian information criterion in the
setting $k_0 \to \infty$, as well as the detailed proofs of all the lemmas in
this paper.


\begin{table}
	\def~{\hphantom{0}}
	\caption{Relative frequencies (\%) for the occurrence of the events $\{ \wh k =k\}, \;
		\{ \wh k>k_0\}$ and $ \{ \wh k < k_0\}$ in a simulation study with $500$ replications, where $\wh k$ is defined
		in (\ref{a8}).}
	 {
		\begin{tabular}{c|c|ccc|ccc}
			\hline \hline
			&  & & Setting (i) &   & & Setting (ii) &\\
			\hline
			&  &  $\{\hat{k} = k_0\}$ & $\{\hat{k} > k_0\}$& $\{\hat{k} < k_0\}$
			&  $\{\hat{k} = k_0\}$ & $\{\hat{k} > k_0\}$& $\{\hat{k} < k_0\}$\\
			\hline
			& $k_0 =1$ & 82  & 17 & 1
			& 98   & 2 &    0 \\
			$p = 100$ & $k_0 =2$ & 87  & 8 & 5
			& 95  & 3 & 2    \\
			& $k_0 =3$ & 73  & 6  & 21
			& 83 & 2 & 15\\
			& $k_0=4$  & 55  & 14  & 31
			& 64  & 2 & 34\\
			\hline
			& $k_0 =1$ & 91  &  9 & 0
			& 97  &  3 & 0\\
			$p = 200$ & $k_0 =2$ & 89  &  4 & 7
			& 93  &  2 & 5    \\
			& $k_0 =3$ & 65  &  3 & 32
			& 83  &     0  & 17\\
			& $k_0 =4$ & 54  &  1 & 45
			& 63  &  2 & 35 \\
			\hline
			& $k_0 =1$ & 95  &  5 & 0
			& 99  &  1 & 0\\
			$p = 400$ & $k_0 =2$ & 87  &  2 & 11
			& 90  &  1 & 9\\
			& $k_0 =3$ & 66  &  2 & 32
			& 76  &  1 & 23 \\
			& $k_0 =4$ & 45  &  1 & 54
			& 60  &  0     & 40\\
			\hline
			& $k_0 =1$ & 97  & 3 & 0
			& 100  &   0   & 0\\
			$p = 800$ & $k_0 =2$ & 86  & 1 & 13
			& 91  & 1 & 8  \\
			& $k_0 =3$ & 59  & 1 & 40
			& 67  & 1 & 32\\
			& $k_0 =4$ & 40  & 0 & 60
			& 52  & 0 & 48\\
			\hline \hline
		\end{tabular}
}
\label{table1}
\end{table}

\begin{table}
	\caption{Relative frequencies (\%) for the occurrence of the events $\{ \wt k =k\}, \;
		\{ \wt k>k_0\}$ and $ \{ \wt k < k_0\}$ in a simulation study with
		$500$ replications, where $\wt k$ is defined in
		(\ref{newK}).}{
		\begin{tabular}{c|c|ccc|ccc}
			\hline \hline
			&    &                 &    Setting (i)   &
			&                 &    Setting (ii)   & \\
			\hline
			&  &  $\{\wt{k} = k_0\}$ & $\{\wt{k} > k_0\}$& $\{\wt{k} < k_0\}$
			&  $\{\wt{k} = k_0\}$ & $\{\wt{k} > k_0\}$& $\{\wt{k} < k_0\}$\\
			\hline \hline
			& $k_0 =1$ & 64  &  0  &  36
			& 88  &  0  &  12\\
			$p = 100$ & $k_0 =2$ & 42  &  0  &  58
			& 63 &  0  &  37\\
			\hline
			& $k_0 =1$ & 56 &  0  &  44
			& 84 &  0  &  16\\
			$p = 200$ & $k_0 =2$ & 32 &  0  &  68
			& 55 &  0  &  45\\
			\hline
			& $k_0 =1$ & 48 &  0  &  52
			& 83 &  0  &  17\\
			$p = 400$ & $k_0 =2$ & 23 &  0  &  77
			& 45 &  0  &  55\\
			\hline
			& $k_0 =1$ & 44 &  0  &  56
			& 76 &  0  &  24\\
			$p = 800$ & $k_0 =2$ & 11 &  0  &  89
			& 41 &  0  &  59\\
			\hline \hline
		\end{tabular}}
	\label{table2}			
\end{table}

\begin{table}
	\caption{ Means ($\times 10^2$) with their corresponding standard deviations ($\times 10^2$) in parentheses of the errors
		in estimating $\bA$ under setting ($i$) in a simulation study with $n=200$ and $500$ replications.}
	\label{table3}	
     \begin{center}
		\begin{tabular}{c|c|cc|cc}
			\hline \hline
			&    & \multicolumn{2}{c|}{ With estimated $k_0$ }   &  \multicolumn{2}{c}{With true $k_0$}  \\
			\hline
			$p$ &    &  $\|\hat{\bA} - \bA\|_1$ &  $\|\hat{\bA} - \bA\|_2$
			&  $\|\hat{\bA} - \bA\|_1$ &  $\|\hat{\bA} - \bA\|_2$ \\
			\hline
			&$k_0 = 1$ &  38 (6) &   27 (3) &  37 (5) &  27 (3) \\
			$p=100$  &$k_0 = 2$ &  54 (6) &   33  (3) &  53 (5) &  33 (3) \\
			&$k_0 = 3$ &  70 (8) &   39 (4) &  69 (7) &  38 (3) \\
			&$k_0 = 4$ &  85 (10) &   43 (5) &  85 (8) &  43 (3) \\
			\hline
			&$k_0 = 1$ &  40 (6) &  28 (3) &   40 (5) &  28 (3) \\
			$p=200$   &$k_0 = 2$ &  58 (7) &  35 (3) &   58 (6) &  35 (3) \\
			&$k_0 = 3$ &  74 (8) &  40 (4) &   74 (6) &  40 (3) \\
			&$k_0 = 4$ &  90 (11) &  46 (5) &   88 (7) &  45 (3) \\
			\hline
			&$k_0 = 1$ &  43 (5) &  30 (3) &   42 (4) &  30 (3) \\
			$p=400$   &$k_0 = 2$ &  60 (6) &  36 (3) &   60 (5) &  36 (3) \\
			&$k_0 = 3$ &  77 (8) &  42 (4) &   76 (6) &  42 (3) \\
			&$k_0 = 4$ &  95 (14) &  48 (7) &   93 (7) &  46 (3) \\
			\hline
			&$k_0 = 1$ & 44 (4) & 31 (2) & 44 (4)  &  31 (2) \\
			$p=800$  &$k_0 = 2$ & 63 (5) & 37 (3) & 62 (5)  &  37 (2)  \\
			&$k_0 = 3$ & 81 (9) & 43 (5) & 80  (6)  &  43 (2) \\
			&$k_0 = 4$ & 98 (14) & 49 (7) & 96 (7)  &  47 (2)  \\
			\hline \hline
		\end{tabular}
	\end{center}		
\end{table}


\begin{table}
	\caption{Means with their corresponding standard deviations in parentheses of the errors in estimating
		autocovariance matrices in a simulation study with $n=200$ and $100$ replications. }{
		\begin{tabular}{c|ccc|ccc}
			\hline \hline
			&  & $\|\widehat{\bSigma}_{n,0} - \bSigma_0\|_1$  &  &
			& $\|\widehat{\bSigma}_{n,1} - \bSigma_1\|_1$  &    \\
			\hline \hline
			&  Banding        & Thresholding &  Sample &  Banding        & Thresholding &  Sample \\
			\hline
			&                & Matrix $L_1$-Norm   &    & & Matrix $L_1$-Norm   &\\
			$p = 100$ &   2.1 (0.04) &   2.6 (0.02) &  14 (0.07)
			&   2.9 (0.03) &   3.5 (0.04) &  14 (0.07) \\
			$p = 200$ &   2.7  (0.04) &   3.4 (0.03) &  29 (0.02)
			&   3.1 (0.03) &   4.2 (0.04) &  30 (0.02) \\
			$p = 400$ &   2.3 (0.02) &   2.9 (0.02) &  55 (0.02)
			&   2.8 (0.03) &   3.7 (0.02) &  55 (0.02) \\
			$p = 800$ &   2.7 (0.03) &   3.4 (0.02) &  112 (0.03)
			&   2.9 (0.03) &   3.9 (0.03) &  110 (0.04)  \\
			\hline \hline
			&               & Spectral Norm  &
			&               & Spectral Norm  & \\
			$p = 100$ & 1.1 (0.01)   & 1.4 (0.02) &   4.0 (0.07)
			& 1.4 (0.01)   & 1.7 (0.02) &   3.7 (0.02)\\
			$p = 200$ & 1.3 (0.03)   & 1.7 (0.02) &   6.5 (0.03)
			& 1.5 (0.01)   & 1.9 (0.01) &   6.1 (0.02)\\
			$p = 400$ & 1.2 (0.01)   & 1.6 (0.01) &   10 (0.03)
			& 1.3 (0.01)   & 1.9 (0.01) &   9.2 (0.02)\\
			$p = 800$ & 1.4 (0.02)   & 1.8 (0.01) &   17 (0.03)
			& 1.4 (0.01)   & 2.3 (0.02) &   15 (0.03) \\
			\hline \hline
		\end{tabular}
	}
\label{table4}		
\end{table}

\begin{table}
	\caption{Results of Example 1: Estimated bandwidth parameters, Bayeysian information criterion values and average one-step-ahead and two-step-ahead post-sample predictive errors over 71 cities with their corresponding standard errors in parentheses.}
	\label{table5}		
	\begin{center}
		\begin{tabular}{l|c|c|c|c}
			\hline \hline
			Ordering & $\hat{k}$ &  BIC & One-step ahead & Two-step ahead	\\
			\hline \hline
			north to south  &	2 & 552.5 & 1.543 (1.170) & 1.622 (1.245)	\\
			west to east  &	4 & 555.9 & 1.545 (1.152) & 1.602 (1.247) \\		
			northwest to southeast  &	2 & 552.4 & 1.552 (1.167) & 1.624 (1.249) \\		
			southwest to northeast  &	2 & 551.9 & 1.538 (1.160) & 1.617 (1.253) \\		
			\hline \hline
			Lasso &	- & - & 1.545 (1.172) & 1.632 (1.250) 	\\			
			\hline \hline					
		\end{tabular}
	\end{center}
\end{table}

\begin{table}
	\caption{Results of Example 2: Estimated bandwidth parameters, Bayeysian information criterion values and average one-step-ahead	and two-step-ahead post-sample predictive errors over 21 provinces with their corresponding standard errors in parentheses.}
	 \label{table6}			
	\begin{center}
		\begin{tabular}{l|c|c|c|c}
			\hline \hline
			Ordering & $\hat{k}$ &  BIC & One-step ahead & Two-step ahead	\\
			\hline \hline
			north to south &  4 & 114.9 & 0.314 (0.377) & 0.407 (0.386)\\
			west to east &  7 & 115.2 &  0.323 (0.363) & 0.409 (0.386)\\
			northwest to southeast & 12 & 115.2 &  0.322 (0.361) & 0.409 (0.395)\\
			southwest to northeast & 5  & 115.1 &  0.316 (0.374) & 0.407 (0.385) \\
			distance to Heilongjiang & 3  & 114.7 &  0.313 (0.378) & 0.407 (0.386) \\
			\hline \hline
			Lasso &	- & - &0.322 (0.362) & 0.410 (0.393) 	\\			
			\hline \hline					
		\end{tabular}
	\end{center}
\end{table}

\begin{figure}
	\begin{center}
		\includegraphics[height= 80mm,width= 140mm]{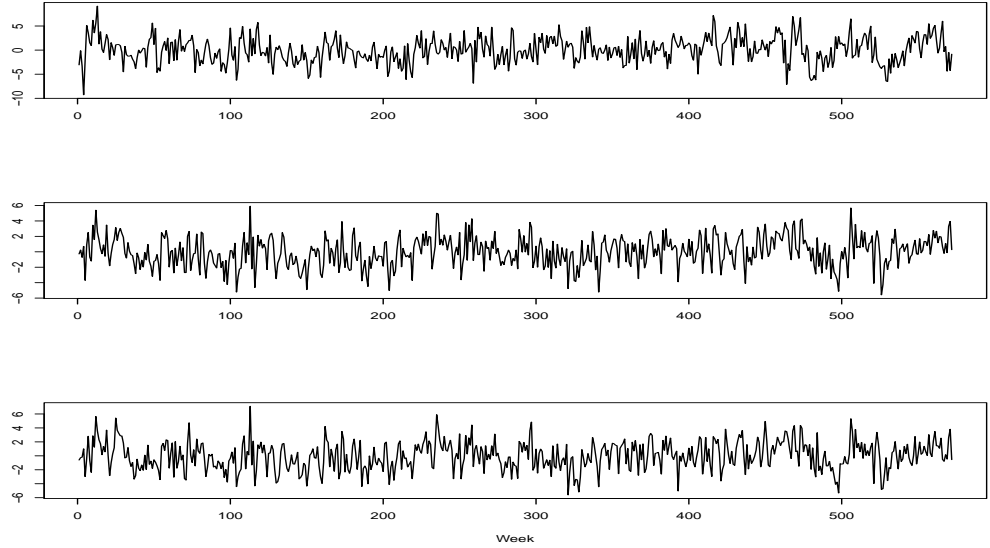}
		\caption{Deseasonalized weekly temperature in degrees Celsius $(^{\circ} \mbox{C})$ from January  1990 to December 2000, where Ha'erbin, Shanghai and Nanjing correspond to the plots from top to bottom.}
		\label{example1-fig2}
	\end{center}
\end{figure}

\begin{figure}	
	\begin{center}
		\includegraphics[height= 75mm,width= 160mm]{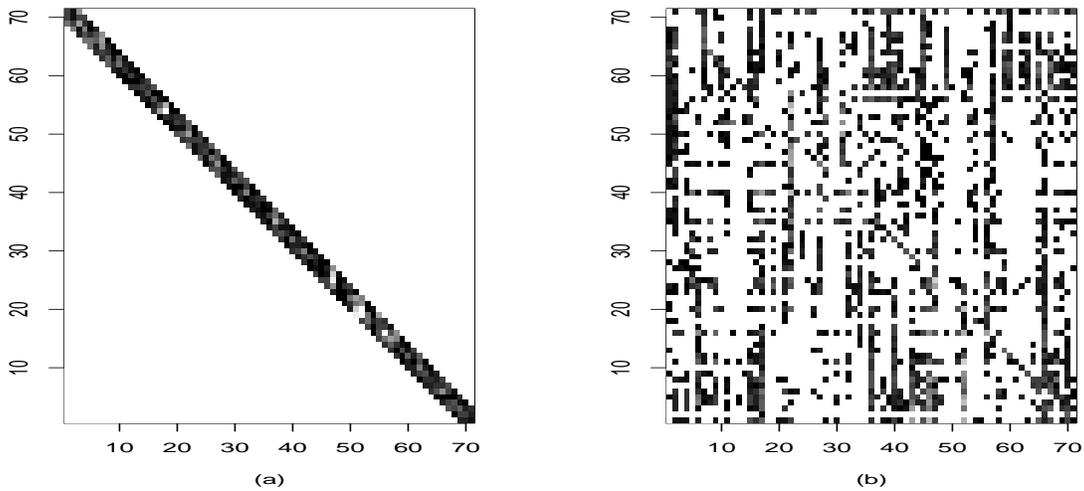}
		\caption{Example 1: (a) Estimated banded coefficient matrix $\wh \bA$
			for the model based on the ordering from southwest to northeast, and (b)
			estimated sparse coefficient matrix $\wt \bA$
			by lasso. White points represent zeros entries and gray or black points represent nonzero entries. The larger the absolute value of a coeffcient is, the darker the colour is.}
		\label{example1-fig3}				
	\end{center}
\end{figure}

\begin{figure}
	\begin{center}	
		\includegraphics[height= 90mm,width= 80mm]{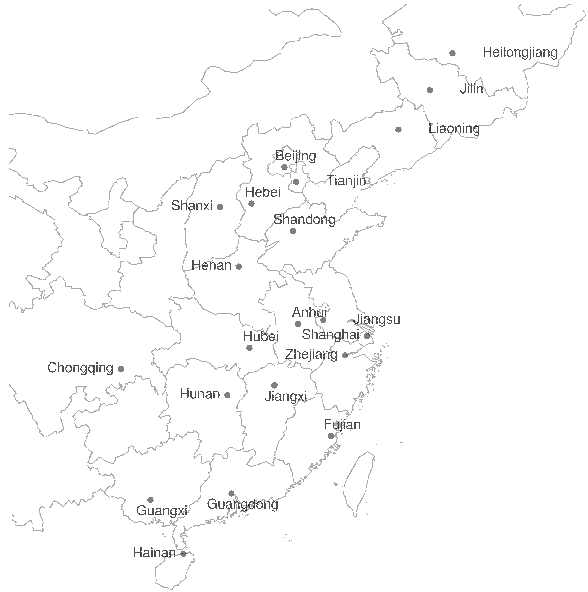}	
		\caption{Location plot of 21 provinces and province-level municipalities 		
			in China, where Shanghai is a province-level municipality, and Ha'erbin, Hangzhou and Nanjing are the capitals of Heilongjiang, Zhejiang, and Jiangsu provinces, respectively. 		
		}
		\label{example2-fig1}
	\end{center}
\end{figure}

\begin{figure}
	\begin{center}
		\includegraphics[height= 70mm,width= 150mm]{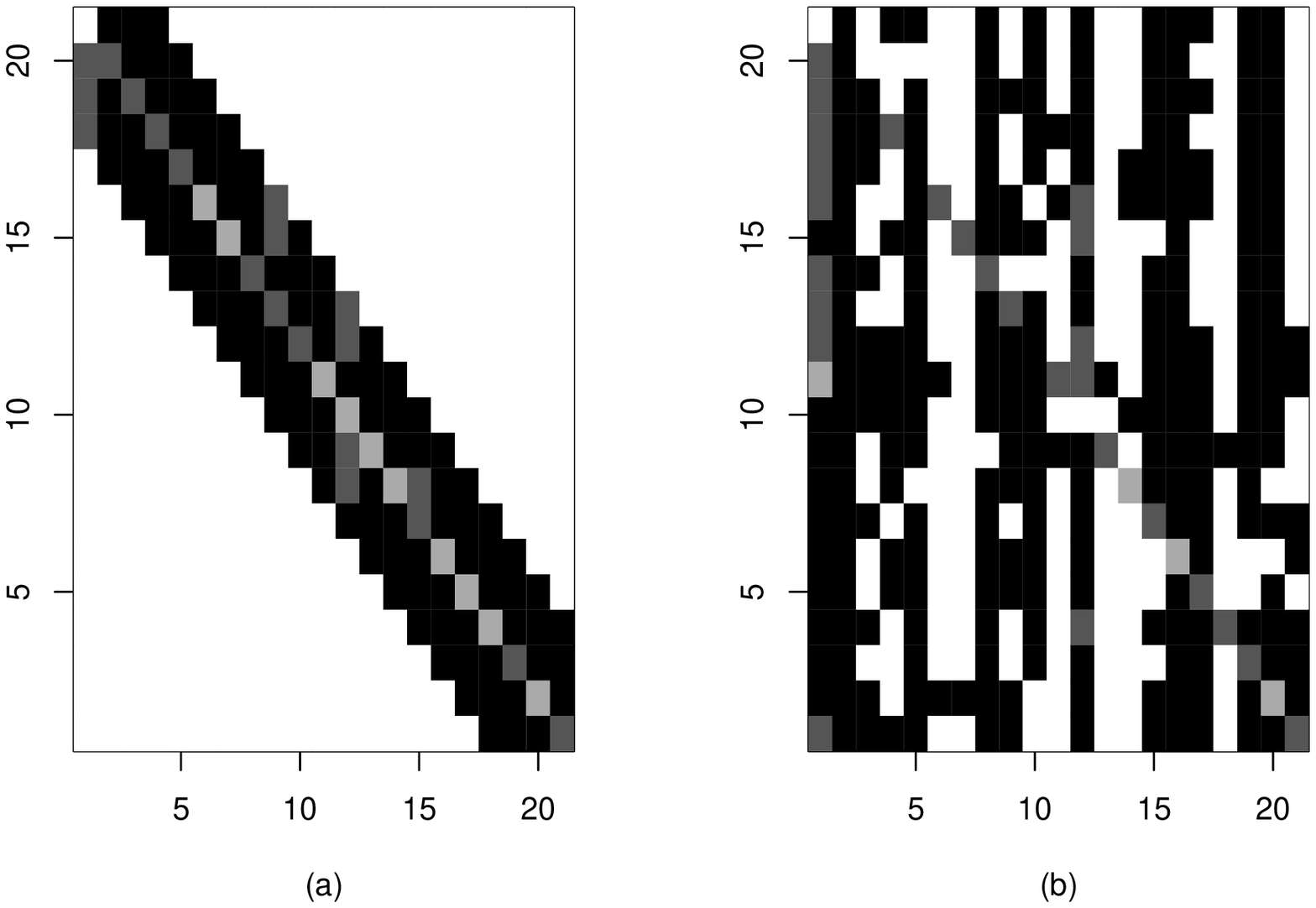}
		\caption{Example 2: (a) Estimated banded coefficient matrix $\wh \bA$
			for the model based on the ordering using distances to
			Heilongjiang, and (b) estimated sparse coefficient matrix $\wt \bA$
			by lasso. White points represent zeros entries and gray or black points represent nonzero entries. The larger the absolute value of a coeffcient is,
			the darker the colour is.}
		\label{example2-fig3}				
	\end{center}
\end{figure}

\newpage
\begin{center}
	{\Large \bf Supplementary Material: High Dimensional and Banded Vector Autoregressions}
\end{center}

\begin{abstract}
This supplementary material is organized as follows. We provide the detailed proofs of Theorems 1-4, respectively, in Sections A.1-A.4. Section A.5
presents Proposition 1 and its proof, showing the consistency of generalized Bayesian Information criterion stated in Remark 1 in the paper. In Section A.6, we present
the consistency of the marginal Bayesian information criterion selector $\hat{k}$ in a more general setting when $k_0 \to \infty$. 
Some technical lemmas and their proofs are collected in Section A.7. Section A.8 presents some additional simulation results.
\end{abstract}

\setcounter{section}{0}
\renewcommand{\thesection}{A.\arabic{section}}

\setcounter{equation}{0}
\renewcommand{\theequation}{A.\arabic{equation}}

\section{Proof of Theorem 1}
Without loss of generality, we consider the \VAR(1) model with $\|\bA\|_1 \le \delta < 1$.
	Our goal is to prove that
	$ \pr(\hat{k}= k_0) \to 1$, i.e., $ \pr(\hat{k} \neq k_0) \to 0$. If $\hat{k} \neq k_0$, then either $\hat{k} > k_0$
	or $\hat{k} < k_0$ holds. Hence it suffices to show that
	$\pr(\hat{k} < k_0) \to 0$ and $\pr(\hat{k} > k_0) \to 0$.
	Our proof follows the arguments in \citet{WLL2009}. 
	
	Consider the first case. Observe that
	$
	\pr(\hat{k} < k_0) \le \pr(\hat{k}_{i} < k_0)
	$
	for some $ i \in \{1,\ldots,p\}$ and the event $(\hat{k}_i < k_0)$ imply
	$
	\{\min_{k < k_0}\bic_{i}(k) < \bic_{i}(k_0)\}.
	$
	To prove $\pr(\hat{k} < k_0) \to 0$, we only need to show that
	$$
	\pr\{\min_{k < k_0}\bic_{i}(k) < \bic_{i}(k_0)\} \to 0
	$$
	for some $i$. Suppose that we have shown that there exists a constant $\eta >0$ and an event $\mathcal{A}_n$ such that
	$
	\pr(\mathcal{A}_n) \to 1
	$
	as $n \to \infty$ and on the event $\mathcal{A}_n$,
	\be
	\label{append-1}
	\rss_i(k) - \rss_i(k_0) \ge \eta  \rss_{i}(k_0)  (a_{i,i-k_0}^{2} + a_{i,i+k_0}^{2}),
	\ee
	for sufficiently large $n$, where $a_{j,k}$ is the $(j,k)$-element of $\bA_1$. 
	On the event $\mathcal{A}_n$ with large $n$,
	$
	\log \rss_i(k) - \log \rss_i(k_0)
	\ge \log\{1 + \eta (a_{i,i-k_0}^{2} + a_{i,i+k_0}^{2})\}.
	$
	Note that $\log (1 + x) \ge \min(0.5x,\log 2)$ for any $x > 0$.
	Consequently, with probability tending to one,
	$\log \rss_i(k) - \log \rss_i(k_0)$ can be further bounded below by
	$
	\min\{0.5 \eta (a_{i,i-k_0}^{2} + a_{i,i+k_0}^{2}),\log 2\}.
	$
	Condition 3 implies that for some $i^* \in \{1,\ldots,p\}$, $a_{i^*,i^*-k_0}^2 + a_{i^*,i^*+k_0}^2 \gg C_n\log p/n$ as $n \to \infty$.  Hence, it follows that, with probability tending to 1,
	\begin{eqnarray*}
		\min_{k < k_0}\bic_{i^*}(k) -  \bic_{i^*}(k_0) &>& \min\{0.5  \eta(a_{i^*,i^*-k_0}^{2} + a_{i^*,i^*+k_0}^{2}),\log 2\} \\
		&& - C_n k_0 n^{-1} \log (p \vee n) > 0,
	\end{eqnarray*}
	where $p\vee n = \max(p,n)$. Hence,
	$
	\pr\{\min_{k < k_0}\bic_{i^*}(k) < \bic_{i^*}(k_0)\} \to 0
	$
	and thus $\pr(\hat{k} < k_0) \to 0$.
	
	Let us prove (\ref{append-1}). For $k < k_0$, denote $\bH_{i,k} = \bX_{i,k}\big(\bX_{i,k}^{\T}\bX_{i,k}\big)^{-1}\bX_{i,k}^{\T}$, $\bX_{i,k_0} = (\bS_{i,k}^{(1)},\bX_{i,k}, \bS_{i,k}^{(2)})$ and $\bbeta_{i,k_0} = (\bb_{i,1}^{\T},\bbeta_{i,k}^{\T}, \bb_{i,2}^{\T})^{\T}$, where $\bX_{i,k}$ is defined similar to (4) in Section 2.2 except that $k_0$ is replaced by $k$. Then
	$
	\rss_i(k)= \by_{(i)}^{\T}\big(\bI_{n-1} - \bH_{i,k}\big)\by_{(i)},
	$
	and by Lemma 5 (ii) or Lemma 6 (ii), we have
	$$
	\rss_i(k) - \rss_i(k_0) = (\bb_{i,1}^{\T},\bb_{i,2}^{\T})(\bS_{i,k}^{(1)},\bS_{i,k}^{(2)})^{\T}(\bI_{n-1} - \bH_{i,k})(\bS_{i,k}^{(1)},\bS_{i,k}^{(2)}) \left(
	\begin{array}{cc}
	\bb_{i,1} \\
	\bb_{i,2}
	\end{array}
	\right)
	+ o_P(1).
	$$
	From Lemma 5 (i) or Lemma 6 (i) and Lemma 7, there exists a small constant $ \eta >0$ such that, with probability tending to one,
	$$
	\lambda_{\min}\big\{(\bS_{i,k}^{(1)},\bS_{i,k}^{(2)})^{\T}(\bI - \bH_{i,k}) (\bS_{i,k}^{(1)},\bS_{i,k}^{(2)})\big\} > \eta (1 +  \eta)
	n\sigma_i^2,
	$$
	and
	$\rss_i(k_0) \le n\sigma_i^2 (1 + \eta)$. Therefore, (\ref{append-1}) follows.
	
	Now we turn to the overfitting case, i.e., $\pr(\hat{k} > k_0) \to 0$. For $k > k_0$, set
	\begin{eqnarray*}
		\bX_{i,k} = (\bS_{i,k}^{(1)},\bX_{i,k_0}, \bS_{i,k}^{(2)}), \bbeta_{i,k} = (\bzero^{\T},\bbeta_{i,k_0}^{\T}, \bzero^{\T})^{\T},
		\bS_{i,k} = (\bS_{i,k}^{(1)},\bS_{i,k}^{(2)}),
	\end{eqnarray*}
	and $\widetilde{\bS}_{i,k} = \big(\bI_{n-1} -\bH_{i,k_0}\big)\bS_{i,k}$. Let $\eta $ be an arbitrary but fixed positive constant and define
	$$
	\mathcal{B}_n = \Big\{\inf_{k_0 \le k \le K}{\inf_{1 \le i \le p}} \frac{\rss_i(k)}{ n\sigma_i^2}
	> (1 - \eta)\Big\},
	$$
	$$
	\mathcal{C}_n = \underset{k_0 \le k\le K}{\bigcup_{1 \le i \le p}}\Big\{\lambda_{\min}^{-1}(n^{-1}
	\widetilde{\bS}_{i,k}^{\T}\widetilde{\bS}_{i,k}) < \kappa_1^{-1}(1 + \eta),\sup_{1 \le j \le k - k_0}\left| \left(n^{-1}\bS_{i,k}^{T} \bS_{i,k} \right)_{jj}\right| <  \kappa_2(1 + \eta)\Big\}.
	$$
	We first give an upper bound on $\rss_i(k_0) - \rss_i(k)$ for $k > k_0$. For each $i$, $\rss_i(k)$ can be rewritten as
	$$
	\rss_i(k) = \inf_{\bb}\|\by_{(i)} - \bX_{i,k} \bb\|^2 =
	\inf_{\bb_{1}, \bb_{2}}\|\by_{(i)} - \bX_{i,k_0} \bb_1 - \bS_{i} \bb_2 \|^2.
	$$
	It can be verified that $\rss_i(k_0) = \|( \bI_{n-1} - \bH_{i,k_0})\by_{(i)}\|^2$ and
	$$
	\rss_i(k) = \rss_i(k_0) - \|\widetilde{\bS}_{i}^{(k)}\wh{\bb}_2\|^2,
	$$
	where $\wh{\bb}_{2} = \left(\widetilde{\bS}_{i,k}^{T}\widetilde{\bS}_{i,k}\right)^{-1}\widetilde{\bS}_{i,k}^{T}\bbe_{(i)}$. Then on the event $\mathcal{C}_n$ we have
	\begin{eqnarray*}
		\rss_i(k_0) - \rss_i(k) &=&		
		\bbe_{(i)}^{\T}\widetilde{\bS}_{i,k}
		(\widetilde{\bS}_{i,k}^{\T}\widetilde{\bS}_{i,k})^{-1}
		\widetilde{\bS}_{i,k}^{\T}\bbe_{(i)} \nonumber\\
		&\le & \kappa_1^{-1} (1 + \eta) |\tau_i(k) - \tau_i(k_0)| \sup_{j,k \le p}\big|n^{-1/2}\bbe_{(j)}^{\T}(\bI_{n-1}- \bH_{i,k_0})\bx_{(k)}\big|^2.
	\end{eqnarray*}
	Define
	$$
	\mathcal{D}_n = \Big\{ \sup_{j,k \le p}\big|n^{-1/2}\bbe_{(j)}^{\T}(\bI_{n-1}- \bH_{i,k_0})\bx_{(k)}\big|^2 \sigma_i^{-2}< \frac{\kappa_1 (1- \eta)} {1 + \eta}C_n \log (p \vee n) \Big\}.
	$$
	On the set $\mathcal{B}_n \cap \mathcal{C}_n \cap \mathcal{D}_n$, for all $k$ with $k_0 \le k \le K$,
	\bs
	\rss_i(k_0) - \rss_i(k)
	&<&  \sigma_i^2 (1 - \eta ) |\tau_i(k) - \tau_i(k_0)| C_n \log (p \vee n)  \\
	&<& \rss_i(k) C_n |\tau_i(k) - \tau_i(k_0)| n^{-1}\log (p \vee n) .
	\es
	Note that $\log (1 + x) \le x$ for any $x > 0$. Hence, for all $k$ with $k_0 < k \le K$, on the set $\mathcal{B}_n \cap \mathcal{C}_n \cap \mathcal{D}_n$,
	\bs
	\bic_i(k) - \bic_i(k_0) &=& \log \rss_i(k) - \log \rss_i(k_0) + C_n|\tau_i(k) - \tau_i(k_0)| n^{-1}\log (p \vee n) \\
	& \ge & - \left\{\rss_i(k_0) -\rss_i(k)\right\} \left\{\rss_i(k)\right\}^{-1} \\
	&& + C_n|\tau_i(k) - \tau_i(k_0)| n^{-1}\log (p \vee n)  > 0,
	\es
	which indicates that over the set $\mathcal{B}_n \cap \mathcal{C}_n \cap \mathcal{D}_n$, we have that $\hat{k} \le k_0$. To prove that $\pr(\hat{k} > k_0) \to 0$, it suffices to show that $ \pr\left\{\big(\mathcal{B}_n \cap \mathcal{C}_n \cap \mathcal{D}_n\big)^c  \right\} \to 0$. In fact,
	it follows from Lemma 7 and Lemma 5 or 6 (i) that $\pr\left(\mathcal{B}_n^c\right) \to 0$ and $\pr\left(\mathcal{C}_n^c\right) \to 0$. It remains to show that $\pr\left(\mathcal{D}_n^c\right) \to 0$. Let $\bSigma_{i,k} = n^{-1}E\left(\bX_{i,k}^{\T}\bX_{i,k}\right)$, $\wh{\bSigma}_{i,k} = n^{-1}\bX_{i,k}^{\T}\bX_{i,k}$, where $E(X)$ denotes the expectation of $X$. Set $\widetilde{\bH}_{i,k} = n^{-1}\bX_{i,k} \bSigma_{i,k}^{-1}\bX_{i,k}^{\T}$, and $\widetilde{\bx}_{(k)} =(\bI_{n-1} - \widetilde{\bH}_{i,k}) \bx_{(k)}$.
	On the event $\mathcal{C}_n$, we obtain that
	\begin{eqnarray*}
		&&\sup_{j,k \le p}\big|\bbe_{(j)}^{\T}(\bI_{n-1}- \bH_{i,k_0})\bx_{(k)}\big| \\
		&& \le \sup_{j,k \le p} \big|\bbe_{(j)}^{\T}\widetilde{\bx}_{(k)}\big | +
		\sup_{j,k \le p}\big |\bbe_{(j)}^{\T} (\bH_{i,k_0} - \widetilde{\bH}_{i,k_0})\bx_{(k)}\big| \\
		&& \le \sup_{j,k \le p} \big|\bbe_{(j)}^{\T}\widetilde{\bx}_{(k)}\big | +
		n^{-1}\sup_{j,k \le p}\|\bbe_{(j)}^{\T} \bX_{i,k_0}\|_2 \|\bSigma_{i,k_0}^{-1}\|_2\|\wh{\bSigma}_{i,k_0}^{-1}\|_2 \|\wh{\bSigma}_{i,k_0} - \bSigma_{i,k_0}\|_2 \|\bX_{i,k_0}^{\T}\bx_{(k)}\|_2 \\
		&& \le \sup_{j,k \le p} \big|\bbe_{(j)}^{\T}\widetilde{\bx}_{(k)}\big | +
		k_0 \kappa_1^{-2} k_2(1+ \eta)^2\sup_{j,k \le p} \big|\bbe_{(j)}^{\T}\bx_{(k)}\big | \cdot \|\wh{\bSigma}_{i,k_0} - \bSigma_{i,k_0}\|_2,
	\end{eqnarray*}
	where $\sup_{1 \le k \le p}(n^{-1}\bx_{(k)}\bx_{(k)}^{T}) \le \kappa_2 (1 + \eta)$ is used in the above inequality. Hence, it follows from Lemmas 5 and 6, together with Condition 3, that $\pr\left(\mathcal{D}_n^c\right) \to 0$ as $n \to \infty$. This completes the proof.
	
	\section{Proof of Theorem 2}
		Since the autoregressive model with order $d$ can be formulated as a autoregressive model with order 1, without loss of generality, we consider the case of order 1 only. With probability tending to one, $\hat{k} = k_0$, and thus it suffices to consider the set
		$
		\mathcal{A}_n = \{ \hat{k} = k_0\}.
		$
		Over the set $\mathcal{A}_n$, for each $i$,
		\be
		\label{th2-proof-lse}
		\wh{\bbeta}_{i} - \bbeta_{i} = \left(\bX_{i}^{\T}\bX_{i}\right)^{-1}\bX_{i}^{\T}\bbe_{(i)}.
		\ee
		For each $i$, the law of large numbers for the stationary process case yields that
		$
		n^{-1}\bX_{i}^{\T}\bX_{i}
		$
		converges to a positive matrix almost surely, and furthermore, with probability tending to one, $\lambda_{\min}\left(n^{-1}\bX_{i}^{\T}\bX_{i} \right) $ is bounded away from zero. As a matter of fact, if we define
		$$
		\mathcal{B}_n = \bigcap_{{1\le i \le p}} \left\{\lambda_{\min}\left(n^{-1}\bX_{i}^{\T}\bX_{i} \right) > \kappa_1 (1 -\eta)
		\right\}
		$$
		with a small constant $\eta \in (0,1)$, then it follows from by Lemma 5 or Lemma 6 under different moment conditions that $P\{\mathcal{B}_n\} \to 1$ as $n \to \infty$.
		Hence, over the event $\mathcal{A}_n \cap \mathcal{B}_n$,
		$$
		\left\|\wh{\bbeta}_{i} - \bbeta_{i}\right\|_2^2 \le \kappa_1^{-2}(1-\eta)^{-2} n^{-2}\left\|\bbe_{(i)}^{\T}\bX_{i}\right\|^2_2 = C_1 n^{-2}\left\|\bbe_{(i)}^{\T}\bX_{i}\right\|^2_2,
		$$
		where $C_1 = \kappa_1^{-2}(1-\eta)^{-2} > 0$.
		It is not hard to see from Lemma 5(ii) or Lemma 6(ii) that, for all $1 \le i \le p$, $n^{-1}E\|\bX_{i}^{\T}\bbe_{(i)}\|^2_2 \le C_2 $ with some constant $C_2 > 0$. Therefore,
		for a large positive constant $C$, we obtain that
		\begin{eqnarray*}
			\pr\left(\left \|\widehat{\bA}_{1} - \bA_1\right\|_F^2 > C {n^{-1}p}\right)
			&=& \pr\left(\left \|\widehat{\bA}_{1} - \bA_1\right\|_F^2 > C {n^{-1}p}, \mathcal{A}_n\cap \mathcal{B}_n\right) + \pr\left(\mathcal{A}_n\cap \mathcal{B}_n\right) \\
			&=& (C p)^{-1} n  (C_1 n^{-2}) E \left(\sum_{i = 1}^p \|\bX_{i}^{\T}\bbe_{(i)}\|^2_2\right)  + \pr\left(\mathcal{A}_n\cap \mathcal{B}_n\right) \\
			&=& {C_1 C_2 C^{-1}}  + o(1).
		\end{eqnarray*}
		We establish the convergence rate of $\|\widehat{\bA}_1 - \bA_1\|_F$ by taking a sufficiently large $C$.
		
		Now we derive the convergence rate of $\|\widehat{\bA}_1 - \bA_1\|_2$. For any matrix $\bB$, $\|\bB\|_2^2 \le \|\bB\|_1 \|\bB\|_{\infty}$. Hence, on the event $\mathcal{A}_n$,
		$$
		\|\widehat{\bA}_1 - \bA_1\|_2 \le \surd{\|\widehat{\bA}_1 - \bA_1\|_1} \surd{\|\widehat{\bA}_1 - \bA_1\|_\infty} \le (2 k_0+1)\sup_{i \le p, j \le \tau_i} \big|\wh{\beta}_{ij} - \beta_{ij}\big|,
		$$
		where $\wh{\beta}_{ij}$ and $\beta_{ij}$ are the $j$-th element of $\wh{\bbeta}_i$ and $\bbeta_i$, respectively.
		Observe from (\ref{th2-proof-lse}) that
		$$
		\sup_{i \le p, j \le \tau_i} \big|\wh{\beta}_{ij} - \beta_{ij}\big| = \kappa_1^{-1} (1-\eta)^{-1} (2k_0 + 1) \big(\sup_{i \le p, j \le \tau_i} |\bbe_{(i)}^{\T} \bx_{(j)}|\big), i = 1,\ldots,p.
		$$
		Hence, using Lemma 5(ii) or Lemma 6(ii), we have
		$$
		\sup_{i \le p, j \le \tau_i} \big|\wh{\beta}_{ij} - \beta_{ij}\big| = O_P\left\{ \left(n^{-1}\log p \right )^{1/2}\right\},
		$$
		which shows that
		$$
		\|\widehat{\bA}_1 - \bA_1\|_2 = O_P\left\{ \left(n^{-1}\log p \right )^{1/2}\right\}.
		$$
		The proof is completed.
		
		\section{Proof of Theorem 3}
			The covariance matrix $\bSigma_0$ can be expressed as
			$$
			\bSigma_0 = \bSigma_{\bep} + \sum_{j=1}^\infty \bB_j,~~ \bB_j =
			\bJ \widetilde{\bA}^j \bJ^{\T} \bSigma_{\bep} \bJ(\widetilde{\bA}^{\T})^j\bJ^{\T}, ~~j \ge 1,
			$$
			where $\bJ= \left(\bI_{p \times p}, \textit{0}_{p \times (d-1)p}\right)$. Let $\bPhi_j = \bJ \widetilde{\bA}^j \bJ^{\T}$, $j \ge 1$. By the companion matrix $\widetilde{\bA}$, we can show that $\bPhi_0 = \bI_p$ and $\bPhi_j = \sum_{k=1}^{\min(j,d)} \bPhi_{j - k}\bA_k$, $j \ge 1$. It is easy to see that for two banded matrices $\bF$ and $G$ with bandwidths $2r_1+1$ and $2r_2+1$, respectively, the product matrix $\bF G$ is also banded and its bandwidth is at most $2(r_1+r_2)+1$. Therefore, it can be verified that $\bPhi_j$ is banded with bandwidth at most $2jk_0+1$ and then $\bB_{j}$ is also banded with its bandwidth at most $2(2jk_0 + s_0)+1$ for $j \ge 1$. Take $\bSigma_0^{(r)} = \bSigma_{\bep} + \sum_{j=1}^r \bB_j $, which is banded with the bandwidth at most $2(2rk_0+s_0)+1$, and $\bSigma_0 - \bSigma_0^{(r)} = \sum_{j= r+1}^\infty \bB_j$. Note that for any $j\ge 1$, $\|\bB_j\|_2 \le \|\bSigma_{\bep}\|_2 \|\widetilde{\bA}^{2j}\|_2 \le C \delta^{2j}$ for some $C>0$. Write $C_1 = C \|\bSigma_{\bep}\|_2\left(1-\delta^2\right)^{-1}$. It follows that
			\bs
			\|\bSigma_0 - \bSigma_0^{(r)}\|_2 \le \sum_{j= r+1}^\infty \|\bB_j\|_2 \le C \|\bSigma_{\bep}\|_2\left(1-\delta^2\right)^{-1} \delta^{2(r+1)} = C_1\delta^{2(r+1)}.
			\es
			By using the inequality $\|\bB_j\|_1 \le \{2(2jk_0 + s_0)+1\} \|\bB_j\|_2 \le C (2j+1) \delta^{2j}$ for some $C>0$, we obtain
			$$
			\|\bSigma_0 - \bSigma_0^{(r)}\|_1 \le C_2 r \delta^{2(r+1)}.
			$$
			Other inequalities can be proved analogously. The proof is complete.

\section{Proof of Theorem 4}
	Now we prove the convergence rate of $\|\widehat{\bSigma}_{n,0}^{(r_n)} - \bSigma_0\|_2$. First,
	$\|\widehat{\bSigma}_{n,0}^{(r_n)} - \bSigma_0\|_2$ can be bounded above by
	$$
	\|\widehat{\bSigma}_{n,0}^{(r_n)} - \bSigma_0^{(r_n)}\|_2 + \|\bSigma_0^{(r_n)} - \bSigma_0\|_2 = R_{n1} + R_{n2}.
	$$
	Similar to Theorem 2,
	$
	R_{n1} \le (4 r_n k_0+2 s_0+1) \sup_{j,k\le p }|\wh{\Sigma}_{jk} - \Sigma_{jk}|$. From Lemma 5(i) or Lemma 6(i), we obtain that
	\begin{eqnarray*}
		R_{n1} = O_P \left\{r_n \left( n^{-1} \log p \right)^{1/2} \right\}.
	\end{eqnarray*}
	From Theorem 3, $R_{n2} \le O(\delta^{2(r_n+1)})$. Note that $r_n= C \log \{n \log^{-1}(p)\}$ with $C > \left( - 4 \log \delta \right)^{-1}$. Combining these results, it follows that
	\begin{eqnarray*}
		\|\widehat{\bSigma}_{n,0}^{(r_n)} - \bSigma_0\|_2 &=& O_P\Big\{ r_n\left(n^{-1}\log p \right )^{1/2} +  \delta^{2(r_n+1)} \Big\} \\
		&=&		O_P\Big[\log \{n \log^{-1}(p)\}\left(n^{-1}\log p \right )^{1/2} \Big].
	\end{eqnarray*}
	The proofs of other results are similar and omitted.
	
	\section{Proposition 1 and its proof}
	
	{PROPOSITION 1.} {\it Under Conditions 1-4 in Section 3.1 of the original article, we prove that $\pr (\hat{k} = k_0, \hat{d} = d) \to 1$ as $n, p \to \infty$.}
	
 \noindent	{\bf Proof of Proposition 1.} 
		Our primary goal is to prove that
		$ \pr(\hat{k}= k_0,\hat{d} = d) \to 1$, i.e., $$ \pr\{(\hat{k} \neq k_0) \cup (\hat{d} \neq d)\} \to 0.$$
		Note that
		$$
		\pr\{(\hat{k} \neq k_0) \cup (\hat{d} \neq d)\} \le \pr(\hat{k} < k_0) + \pr( \hat{d} < d) +
		\pr(\hat{k} > k_0,\hat{d} > d).
		$$
		We observe that both events $\{\hat{k} < k_0\}$ and $\{\hat{d} < d\}$ correspond to the underfitting case, where some important variables are missed in the estimated model. Hence, following the proofs of Theorem 1, we can show
		$
		\pr(\hat{k} < k_0) + \pr( \hat{d} < d) \to 0.
		$
		
		It remains to prove that $ \pr(\hat{k} > k_0,\hat{d} > d) \to 0$. First look at the event $\mathcal{A} = \big\{\hat{k} > k_0,\hat{d} > d\big\}$. Define
		$\mathcal{A}_1 = \underset{ i \le p }{\cup} \big\{\hat{k}_i \ge k_0, \hat{d}_i > d\big\}$,
		$\mathcal{A}_2 = \underset{ i \le p }{\cup} \big\{\hat{k}_i < k_0, \hat{d}_i >  d\big\} $,
		$\mathcal{A}_3 = \underset{ i \le p }{\cup} \big\{\hat{k}_i > k_0, \hat{d}_i \ge d\big\} $,
		and
		$\mathcal{A}_4 = \underset{ i \le p }{\cup} \big\{\hat{k}_i > k_0, \hat{d}_i < d \big\} $.
		Then 
		$
		\mathcal{A} \subset \mathcal{A}_1\cup\mathcal{A}_2\cup\mathcal{A}_3\cup\mathcal{A}_4,
		$
		which implies that it suffices to show $\pr(\mathcal{A}_k) \to 0$ for each $k = 1,\ldots,4$.
		Observe that both events $\mathcal{A}_1$ and $\mathcal{A}_3$ correspond to the overfitting case, where all important variables as well as some unimportant variables are selected by the estimated model. Hence, following the proofs of Theorem 1, we can show
		$
		\pr(\mathcal{A}_1) + \pr( \mathcal{A}_3) \to 0.
		$
		
		Now we are going to prove $\pr(\mathcal{A}_2) \to 0$ as $n,p \to \infty$.  For each $i$, $\{\hat{k}_i < k_0, \hat{d}_i > d\}$ means
		$
		\underset{{k < k_0, \ell > d} }{\min}\widetilde{\bic}_{i}(k,\ell) < \widetilde{\bic}_{i}(k_0,d).
		$
		Hence, we only need to show, with probability tending to one,
		\begin{eqnarray}
		\label{gbic-1}
		\min_{i \le p }\min_{k < k_0, d < \ell \le L} \left\{\widetilde{\bic}_{i}(k,\ell) - \widetilde{\bic}_{i}(k_0,d) \right \} > 0.
		\end{eqnarray}		
		Suppose that we have shown that there exists a constant $\eta >0$ and an event $\mathcal{G}_n$ such that
		$
		\pr(\mathcal{G}_n) \to 1
		$
		as $n \to \infty$ and on the event $\mathcal{G}_n$,
		\begin{eqnarray}
		\label{append-3}
		\min_{i \le p} \big \{\rss_i(k,\ell) - \rss_i(k_0,d) - \eta \rss_{i}(k_0,d) \Delta_i\ge 0,
		\end{eqnarray}
		for each $ k < k_0$, $d < \ell < L$ and sufficiently large $n$, where
		$
		\Delta_i = \sum_{j = 1}^d \big\{(a_{i,i-k_0}^{(j)})^2 + (a_{i,i+k_0}^{(j)})^2\big\}.
		$ As a result, on the event $\mathcal{G}_n$ with large $n$,
		$
		\min_{i \le p } \big \{\log \rss_i(k,\ell) - \log \rss_i(k_0,d)
		- \log(1 + \eta \Delta_i) \big\} \ge 0.
		$
		Note that $\log (1 + x) \ge \min\{0.5x,\log 2\}$ for any $x > 0$.
		Then, with probability tending to one,
		$\log \rss_i(k) - \log \rss_i(k_0)$ can be further bounded below by
		$
		\min(0.5 \Delta_i,\log 2).
		$
		Condition 3 implies that $\min_{i \le p} \Delta_i\gg C_n n^{-1} \log (p \vee n)$ as $n \to \infty$.  Hence, it follows that, with probability tending to 1,
		\begin{eqnarray}
		\widetilde{\bic}_{i}(k,\ell) -  \widetilde{\bic}_{i}(k_0,d)
		\ge \min(0.5 \eta\Delta_i,\log 2) - C_n\tau_{i}(k_0,d) n^{-1} \log (p \vee n) > 0
		\end{eqnarray}
		uniformly for all $k < k_0, d < \ell \le L$ and $i = 1,\ldots,p$. Hence,
		$\pr(\mathcal{A}_2) \to 0$ as $n,p \to \infty$.
		
		Let us turn to prove (\ref{append-3}). For $k < k_0$ and $d < \ell \le L$, denote $\bH_{i,k,\ell} = \bX_{i,k,\ell}\big(\bX_{i,k,\ell}^{\T}\bX_{i,k,\ell}\big)^{-1}\bX_{i,k,\ell}^{\T}$, where $\bX_{i,k,\ell}$ is defined as in section 2.2 but replaced $k_0$ and $d$ by $k$ and $\ell$. Then
		$
		\rss_i(k,\ell)= \by_{(i)}^{\T}\big(\bI_{n-1} - \bH_{i,k,\ell}\big)\by_{(i)}.
		$
		In fact,  $\bX_{i,k_0,\ell}$ can be rewritten as
		$\bX_{i,k_0,\ell} = (\bS_{i,k,1}^{(1)},\bX_{i,k}^{(1)}, \bS_{i,k,2}^{(2)},\ldots,\bS_{i,k,1}^{(\ell)},\bX_{i,k}^{(\ell)}, \bS_{i,k,2}^{(\ell)})$
		and, similarly,
		$
		\bbeta_{i,k_0,\ell}^T = \big(\bb_{i,1}^{(1)},\bbeta_{i,k}^{(1)}, \bb_{i,2}^{(1)}, \ldots,\bb_{i,1}^{(\ell)},\bbeta_{i,k}^{(\ell)}, \bb_{i,2}^{(\ell)}\big).
		$
		Let
		$
		\bS_{i,\ell} = (\bS_{i,k,1}^{(1)},\bS_{i,k,2}^{(2)},\ldots,\bS_{i,k,1}^{(\ell)},\bS_{i,k,2}^{(\ell)})
		$
		and
		$
		\bb_{i,\ell}^T = \big(\bb_{i,1}^{(1)},\bb_{i,2}^{(1)}, \ldots,\bb_{i,1}^{(\ell)},\bb_{i,2}^{(\ell)}\big).
		$
		As a result, by Lemma 5 (ii) or Lemma 6 (ii), we have
		$$
		\max_{i \le p }\big|\rss_i(k,\ell) - \rss_i(k_0,d) - \bb_{i,\ell}^{\T}\bS_{i,\ell}^{\T}(\bI_{n-1} - \bH_{i,k,\ell})\bS_{i,\ell} \bb_{i,\ell}\big| = o_P(1).
		$$
		From Lemma 5 (i) or Lemma 6 (i) and Lemma 7, there exists a small constant $ \eta >0$ such that, with probability tending to one,
		$$
		\lambda_{\min}\big\{\bS_{i,\ell}^{\T}(\bI - \bH_{i,k,\ell}) \bS_{i,\ell}\big\} > \eta (1 +  \eta) n\sigma_i^2,
		$$
		and
		$\rss_i(k_0,d) \le n\sigma_i^2 (1 + \eta)$. Note that
		$
		\bb_{i,\ell}^T\bb_{i,\ell} \ge \sum_{j = 1}^d \big\{(a_{i,i-k_0}^{(j)})^2 + (a_{i,i+k_0}^{(j)})^2\big\}.
		$
		Therefore, (\ref{append-3}) follows.
		
		In a similar manner, $\pr(\mathcal{A}_4) \to 0$ can be proved. The proof is completed.

		\section{Proposition 2 and its proof}
		
		{PROPOSITION 2.} {\it Under Conditions 1' and 2--4 in Section 3.1 of the original article, $\pr(\hat{k} = k_0) \to 1$ as $n, p \to \infty$,
			provided $ k_0 \ll C_n^{-1}n/\log(p\vee n)$.	}

  \noindent {\bf Proof of Proposition 2.} 
			First, we can prove the conclusions of Lemma 5, 6 and 7 under the Conditions 1' and (2)--(4). For instance, in the proof of Lemma 5, we bound $\|A_1^{l}\|_{\infty} $ in (\ref{Ainfty}) by $\|A_1^{l}\|_{\infty} \le C \delta^l$ under Condition 1'. Similarly, the inequalities (\ref{lemma6_1}) and (\ref{lemma6_2}) in the proof of Lemma 6 can be bounded in a similar way.  Then, following the proof of Theorem 1, we can prove the consistency of Bayesian information criterion selector $\hat{k}$ in the general setting $k_0 \to \infty$.
			
\section{Seven technical lemmas and their proofs}

We first adopt the asymptotic theories using the functional dependent  measure of \citet{W2005}. Assume that $z_i$ is a stationary process of the form $z_i = g(\mathcal{F}_i)$, where $g(\cdot)$ is a measurable function and $\mathcal{F}_i = (\ldots,e_{-1},e_0,\ldots,e_i)$ with independent and identically distributed random variables $\{e_i; i = 0,\pm 1,\ldots\}$. \citet{W2005} defined the functional dependent measure in terms of how the outputs are affected by the inputs. To be specific, denote $\|z\|_q = \left\{E(|z|^q)\right\}^{1/q}$ with $q \ge 1$ for a random variable $z$. The physical or functional dependent measure is defined as
$$
\theta_{i,q} = \|z_i - z_i^*\|_q = \|g(\mathcal{F}_i) - g(\mathcal{F}_i^*)\|_q,
$$
where $z_i^* = g(\mathcal{F}_i^*)$ is the coupled process of $z_i$, $\mathcal{F}_i^* = (\ldots,e_{-1},e_0^*,\ldots,e_i)$ with $\{e_0^*,e_0\}$ being independent and identically distributed. Intuitively, $\theta_{i,q}$ measures the dependency of $z_i$ on $e_0$ while keeping all other innovations unchanged.

\begin{lemma}
	\label{lem1}
	{\it (Theorem 2 (ii) of \citet{LXW2013}).} Let $S_n = n^{-1/2}\sum_{i = 1}^n z_i$ and $\Theta_{m,q} = \sum_{i = m}^\infty \theta_{i,q}$. Assume that for each $m$, $\Theta_{m,q} = O(m^{-\alpha})$ with $\alpha > 1/2 - 1/q$ and $q > 2$. Then there exist positive constants $C_1$, $C_2$ and $C_3$ which only depend on $q$ such that for all $x > 0$,
	$$
	\pr\big( |S_n| \ge x \big) \le \frac{C_1 \Theta_{0,q}^q n} {(n^{1/2} x)^q} + C_3 \exp\big(C_2 \Theta_{0,q}^{-1} x^2\big).
	$$
	
\end{lemma}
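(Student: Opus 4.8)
The plan is to reconstruct the Fuk--Nagaev argument for a causal stationary sequence by reducing the partial sum to a family of martingales indexed by the innovation lag. Write $\mathcal{P}_j(\cdot)=E(\cdot\mid\mathcal{F}_j)-E(\cdot\mid\mathcal{F}_{j-1})$ for the projection onto the innovation $e_j$. Because $z_i=g(\mathcal{F}_i)$, the telescoping identity $z_i=\sum_{k\ge 0}\mathcal{P}_{i-k}z_i$ holds in $L^q$, and the coupling definition of the dependence measure yields the crucial estimate $\|\mathcal{P}_{i-k}z_i\|_q\le\theta_{k,q}$: indeed $\mathcal{P}_{i-k}z_i=E(z_i-z_i^{(i-k)}\mid\mathcal{F}_{i-k})$, where $z_i^{(i-k)}$ replaces $e_{i-k}$ by an independent copy, so Jensen's inequality and stationarity give $\|\mathcal{P}_{i-k}z_i\|_q\le\|z_i-z_i^{(i-k)}\|_q=\theta_{k,q}$. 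Summing over $i$ gives $\sum_{i=1}^n z_i=\sum_{k\ge 0}V_{n,k}$ with $V_{n,k}=\sum_{i=1}^n\mathcal{P}_{i-k}z_i$, and for each fixed $k$ the sequence $(\mathcal{P}_{i-k}z_i)_i$ is a martingale difference sequence for $(\mathcal{F}_{i-k})_i$ whose increments are bounded in $L^q$ by $\theta_{k,q}$ and in $L^2$ by $\theta_{k,2}\le\theta_{k,q}$.

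For a single martingale $V_{n,k}$ I would invoke the martingale Fuk--Nagaev inequality, obtained by truncation. Split each increment as $\mathcal{P}_{i-k}z_i=d_i'+d_i''$ with $d_i''=\mathcal{P}_{i-k}z_i\,I(|\mathcal{P}_{i-k}z_i|>\lambda)$. The recentred bounded part $\sum_i d_i'$ is a martingale with increments of size $O(\lambda)$ and conditional variance at most $n\theta_{k,2}^2$, so Freedman's exponential inequality produces a Gaussian-type term of the form $\exp\{-cu^2/(n\theta_{k,2}^2+\lambda u)\}$, while the heavy part $\sum_i d_i''$ is controlled by Markov's inequality applied to its $q$-th moment. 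Optimizing the truncation level $\lambda\asymp u$ then gives, for each $k$,
\[
\pr(|V_{n,k}|\ge u_k)\le C\,\frac{n\theta_{k,q}^q}{u_k^q}+C\,\exp\!\Big(-\frac{c\,u_k^2}{n\theta_{k,2}^2}\Big).
\]

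The final and most delicate step is the aggregation over lags, which I would handle through an $m$-dependent truncation rather than a naive split over all $k$. Write $z_i=z_i^{(m)}+r_i^{(m)}$, where $z_i^{(m)}=\sum_{k=0}^m\mathcal{P}_{i-k}z_i$ retains the lags up to $m$ and $r_i^{(m)}=\sum_{k>m}\mathcal{P}_{i-k}z_i$ satisfies $\|r_i^{(m)}\|_q\le\Theta_{m+1,q}=O(m^{-\alpha})$. The remainder $\sum_{i\le n}r_i^{(m)}$ is again a sum of martingale differences, so a Burkholder/Rosenthal bound gives $\|\sum_{i\le n}r_i^{(m)}\|_q=O(n^{1/2}\Theta_{m+1,q})$, and Markov's inequality renders its contribution negligible once $m$ grows suitably with $n$; this is exactly where the hypothesis $\alpha>1/2-1/q$ enters, as the short-range-dependence threshold ensuring $n^{1/2}\Theta_{m,q}$ can be pushed below the target rate. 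For the retained part $\sum_{k=0}^m V_{n,k}$ I would split the deviation $n^{1/2}x=\sum_{k=0}^m u_k$ with $u_k\propto\theta_{k,2}$, so that every exponent collapses to the same order $x^2/\Theta_{0,q}$ uniformly in $k$ (using $\sum_k\theta_{k,2}=\Theta_{0,2}\le\Theta_{0,q}$), the harmless prefactor $m+1$ being absorbed by a slight adjustment of the constant; the polynomial terms then sum to $C_1\Theta_{0,q}^q n/(n^{1/2}x)^q$ and the exponential terms to $C_3\exp(-C_2x^2/\Theta_{0,q})$, which is the stated bound (the exponent being understood as negative). The main obstacle is precisely this bookkeeping: choosing $m$, the truncation level $\lambda$ and the deviation split $u_k$ simultaneously so that the two families of bounds reproduce the stated two-term form with constants depending only on $q$.
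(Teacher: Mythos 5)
The paper contains no proof of this lemma: it is imported verbatim as Theorem~2(ii) of Liu, Xiao and Wu (2013) (modulo transcription typos in the exponential term, whose exponent should be negative and scale like $x^2/\Theta_{0,2}^2$ --- you correctly inferred the sign). So the only meaningful comparison is with the proof in that reference, and your skeleton does match the strategy of that literature: Wu's projection decomposition $z_i=\sum_{k\ge 0}\mathcal{P}_{i-k}z_i$, the bound $\|\mathcal{P}_{i-k}z_i\|_q\le\theta_{k,q}$ via Jensen and the coupling, a truncated Fuk--Nagaev bound for each lag, and an $m$-dependence approximation whose remainder is controlled by Burkholder/Rosenthal as $O(n^{1/2}\Theta_{m+1,q})$. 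All of these ingredients are sound.

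Two steps as written, however, would fail. First, in the Freedman step you bound the conditional variance of $V_{n,k}$ by $n\theta_{k,2}^2$, but $\theta_{k,2}^2$ only controls the \emph{unconditional} second moment of the increments; the predictable quadratic variation $\sum_i E\{(\mathcal{P}_{i-k}z_i)^2\mid\mathcal{F}_{i-k-1}\}$ is a random quantity whose upper tail is itself a concentration problem of the same difficulty, so Freedman cannot be invoked as stated. The proofs in the Liu--Xiao--Wu line of work avoid this by passing to an $m$-dependent approximation and blocking into genuinely independent blocks, to which the classical Fuk--Nagaev inequality for independent sums applies, rather than applying a martingale exponential inequality lag by lag. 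Second, your aggregation split $u_k\propto\theta_{k,2}$ makes every exponent identical, so the exponential terms sum to $(m+1)\exp(-cx^2/\Theta_{0,2}^2)$ with $m=m(n)\to\infty$; that prefactor cannot be absorbed into a constant depending only on $q$, contrary to your claim. On the polynomial side the same split yields $\sum_{k\le m} n\,\theta_{k,q}^q\,(\Theta_{0,2}/\theta_{k,2})^q\,(n^{1/2}x)^{-q}$, and the ratios $\theta_{k,q}/\theta_{k,2}$ are uncontrolled, so this does not sum to $C\,\Theta_{0,q}^q\, n\,(n^{1/2}x)^{-q}$; taking $u_k\propto\theta_{k,q}$ instead fixes the polynomial part but reintroduces a factor $m+1$ there. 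The cited proof resolves exactly this by grouping lags dyadically, so that only logarithmically many groups contribute with geometrically decaying deviation budgets, and it is in this aggregation --- not only in the remainder estimate --- that the hypothesis $\Theta_{m,q}=O(m^{-\alpha})$ with $\alpha>1/2-1/q$ is actually consumed. You flag this bookkeeping yourself as ``the main obstacle,'' and that assessment is correct: as specified, it is precisely where the proposal is not yet a proof.
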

		
	To prove the limit theory for the sub-exponential tail case under Condition 4(ii), we shall use Lemmas 2--4.
	
	\begin{lemma}
		\label{lem2}
		Suppose that $X$ is a random variable. Then, $E\left\{\exp(t_0 |X|^v)\right\} < \infty$ for some $ 0< v \le 2$ and $t_0 > 0 $ if and only if
		$$
		\lim \sup_{q \to \infty} q^{- 1/v} \|X\|_q < \infty.
		$$
	\end{lemma}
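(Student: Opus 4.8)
The plan is to prove the two implications separately, in each case converting between the exponential moment $E\{\exp(t_0|X|^v)\}$ and the polynomial moments $\|X\|_q=(E|X|^q)^{1/q}$ by means of the Gamma integral and Stirling's formula.

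For the ``only if'' direction, suppose $E\{\exp(t_0|X|^v)\}=M<\infty$. First I would apply Markov's inequality in the form $\pr(|X|>\lambda)\le M\exp(-t_0\lambda^v)$ for every $\lambda>0$. Writing $E|X|^q=q\int_0^\infty \lambda^{q-1}\pr(|X|>\lambda)\,d\lambda$ and substituting $u=t_0\lambda^v$ turns the tail bound into a Gamma integral, giving $E|X|^q\le (qM/v)\,t_0^{-q/v}\Gamma(q/v)$. Taking $q$-th roots and invoking Stirling's formula $\Gamma(s)^{1/s}\sim s/e$ then yields $\|X\|_q\le C\,q^{1/v}$ for all large $q$, with a constant $C$ free of $q$, which is exactly $\limsup_{q\to\infty} q^{-1/v}\|X\|_q<\infty$.

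For the ``if'' direction, suppose $\limsup_{q\to\infty} q^{-1/v}\|X\|_q=A<\infty$, so that $\|X\|_q\le C\,q^{1/v}$, equivalently $E|X|^q\le C^q q^{q/v}$, for every $q\ge 1$ (the bound for small $q$ following from monotonicity of the $L_q$ norms). I would then expand the exponential as a power series, $E\{\exp(t|X|^v)\}=\sum_{k\ge 0}(t^k/k!)\,E|X|^{vk}$, and bound the general term using $E|X|^{vk}\le C^{vk}(vk)^k$ together with the Stirling lower bound $k!\ge (k/e)^k$. This bounds the $k$-th term by $(t C^v v e)^k$, so the series is dominated by a geometric series that converges as soon as $t<(C^v v e)^{-1}$; choosing any such $t_0$ establishes $E\{\exp(t_0|X|^v)\}<\infty$.

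The routine part is the change of variables and the series expansion; the point requiring care is the bookkeeping with Stirling's formula, ensuring that the exponent produced is exactly $1/v$ and that the constants are uniform in $q$ and in $k$. In particular, in the ``if'' direction one must verify that the growth $(vk)^k$ of the moments is precisely offset by the factorial $k!$, so that the geometric ratio $t C^v v e$ can be forced strictly below one by shrinking $t_0$; this matching of rates is the crux of the equivalence.
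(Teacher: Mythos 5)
Your proposal is correct and follows essentially the same route as the paper's proof: Markov's inequality plus the Gamma-integral substitution and Stirling's formula for the forward direction, and the power-series expansion of the exponential with the moment bound $E|X|^{vk}\le C^{vk}(vk)^k$ offset by $k!\ge (k/e)^k$ for the converse, including the same choice of $t_0$ small enough to make the geometric ratio less than one. Your explicit remark that small $q$ is handled by monotonicity of $L_q$ norms corresponds to the paper's device of summing only over $k\ge k_0$, so there is no substantive difference.
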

	\begin{proof}
		Assume that $\zeta = E\{\exp(t_0 |X|^v)\} < \infty$. Then, for any $q \ge 2$,
		\begin{eqnarray*}
			E(|X|^{q}) &=& q \int_0^\infty x^{q-1} \pr (|X| > x)d x \\
			& \le &  \zeta q v^{-1} t_0^{ - q/v } \int_0^\infty x^{q/v -1} \exp\big( - x\big)d x
			=   \zeta q v^{-1} t_0^{ - q/v } \Gamma\left(\frac{q} {v}\right),
		\end{eqnarray*}
		where $\Gamma(\cdot)$ is the Gamma function. By Stirling's formula,
		$$
		\lim_{x \to \infty} \Gamma(x+1) \Big\{\ (2 \pi x)^{1/2} \left(\frac{x}{e}\right)^{x}\Big\}^{-1} = 1,
		$$
		we obtain that for all sufficiently large $q$,
		$$
		\|X\|_q \le \left (\zeta q v^{-1} t_0^{ - q/v} \right)^{1/q} q^{1/(2q)} \left(\frac{q} {v} - 1\right)^{1/v - 1 /q} \le C q^{1/v},
		$$
		where $C$ is a constant depending on $\zeta$, $v$ and $t_0$ only. This implies that
		$$
		\lim \sup_{q \to \infty} q^{- 1/v} \|X\|_q < \infty.
		$$
		
		Conversely, assume that $ \lim \sup_{q \to \infty} q^{- 1/v} \|X\|_q < \infty.$ Then, there exists a positive constant $\phi_0 > 0$ such that,
		$
		\|X\|_q \le \phi_0  q^{1/v}
		$
		for all $q \ge 2$. Note that $\exp(x) = 1 + \sum_{k \ge 1} (k!)^{-1} x^k.$ To prove that $E\{\exp(t_0 |X|^v)\} < \infty$ for some $t_0 > 0$, we only need to show that there exist positive constants $t_0 $ and $k_0$ such that
		$$
		\sum_{k \ge k_0}  {t_0^k\|X\|_{v k}^{v k} \over k!} < \infty.
		$$
		By Stirling's formula, there exists a large integer $k_0$ such that for $k \ge k_0$,
		$$
		\Gamma(k+1) = k! \ge  ( \pi k)^{1/2} \left({k \over e}\right)^{k}.
		$$
		With such $k_0$ and $t_0 = (2 \phi_0^v v e)^{-1}$, we have
		$$
		\sum_{k \ge k_0}  {t_0^k\|X\|_{v k}^{v k} \over k!}
		\le \sum_{k \ge k_0}  {(t_0\phi_0^v v e )^k k^k \over  (\pi k)^{1/2} k^k}
		\le \sum_{k \ge k_0}  {2^{-k}  } < \infty.
		$$
		\end{proof}
		
		\begin{lemma}
			\label{lem3}
			Suppose that $\{X_1,\ldots,X_n\}$ are independent random variables and $\sup_{i \le n}E\{\exp(t_0 |X_i|^\alpha)\} \le \zeta $ for some positive constants $\alpha$, $t_0$ and $\zeta$ with $0 < \alpha \le 1$. Then there exist positive constants $C_j>0$ $(j =1,\ldots,4)$ which depend only on $\alpha$, $t_0$ and $\zeta$ such that for any $x > 0$ and all $n$, the following concentration inequality holds:
			\begin{eqnarray}
			\label{lemma3}
			\pr\Big[\Big|\sum_{i=1}^n \{X_i - E(X_i)\} \Big| > 3 x \Big]
			&\le & C_1\exp\left( - { x^2 \over   C_2 n + C_3 n^{1-\alpha \over 2-\alpha} x}\right) \nonumber\\
			&&  + C_1 \exp\left( - { x^{2\alpha} \over C_2 n^{\alpha \over 2-\alpha} + C_3 x^{\alpha}}\right)
			+ n C_1 \exp\left(- C_4 x^{\alpha}\right).
			\end{eqnarray}
			In particular, if $\alpha = 1$, then
			$$
			\pr\Big[\Big|\sum_{i=1}^n \{X_i - E(X_i)\} \Big| > 3 x \Big]
			\le C_1 \exp\left(- { x^2 \over  C_2 n + C_3 x}\right) + C_1 n \exp\left(- C_4 x\right)
			$$
			for any $ x > 0$ and $n$. 
			
		\end{lemma}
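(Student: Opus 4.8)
The plan is to establish the one-sided bound (the stated two-sided inequality then follows by applying it to $\{X_i\}$ and $\{-X_i\}$ and absorbing a factor $2$ into $C_1$), and to obtain the three terms of (\ref{lemma3}) from a truncation argument in which the deviation $3x$ is partitioned into a bulk contribution, a tail contribution, and a single-large-summand contribution. Concretely, $\{|\sum_i\{X_i-E(X_i)\}|>3x\}$ is covered by the three events $\{|\sum_i\{U_i-E(U_i)\}|>x\}$, $\{\sum_i|V_i|>x\}$ and $\{|\sum_i E(V_i)|>x\}$, where $U_i=X_iI(|X_i|\le M)$ and $V_i=X_i-U_i$ for a truncation level $M$ to be chosen.

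First I would record the moment control. By Lemma 2, $\sup_i E\{\exp(t_0|X_i|^\alpha)\}\le\zeta$ forces $\|X_i\|_q\le Cq^{1/\alpha}$ for every $q$; consequently $\var(U_i)\le E(X_i^2)$ is uniformly bounded, and every truncated tail moment $E\{|X_i|^m I(|X_i|>M)\}$ is exponentially small in $M^\alpha$, which makes the deterministic term $|\sum_i E(V_i)|\le\sum_i E\{|X_i|I(|X_i|>M)\}$ negligible for the relevant range of $x$ (for $x$ too small the claimed bound already exceeds $1$). The bulk sum $\sum_i\{U_i-E(U_i)\}$ is then a sum of independent, mean-zero variables bounded by $2M$ with total variance $O(n)$, so Bernstein's inequality gives a bound of the form $\exp\{-x^2/(C_2 n+C_3 Mx)\}$; with $M$ calibrated as below, this yields the first term of (\ref{lemma3}).

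For the tail I would linearise using $(\sum_i|v_i|)^\alpha\le\sum_i|v_i|^\alpha$, valid for $0<\alpha\le1$, to pass from $\sum_i|V_i|$ to the sum of the nonnegative, independent and genuinely sub-exponential variables $W_i=|X_i|^\alpha I(|X_i|>M)$, which satisfy $E\{\exp(t_0 W_i)\}\le\zeta$. Bernstein's inequality for sub-exponential summands applied to $\sum_i W_i$ gives a bound of the form $\exp\{-x^{2\alpha}/(C_2\sigma_W^2+C_3 x^\alpha)\}$ with $\sigma_W^2=\sum_i E(W_i^2)$ controlled by Lemma 2, and calibrating $M$ so that $\sigma_W^2\asymp n^{\alpha/(2-\alpha)}$ gives the second term; its large-deviation regime already behaves like $\exp(-cx^\alpha)$. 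The third term $nC_1\exp(-C_4 x^\alpha)$ comes from the crude union bound $\pr(\max_i|X_i|>cx)\le n\zeta\exp(-t_0 c^\alpha x^\alpha)$, which discards the event that a single summand is itself of order $x$.

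The main obstacle is the calibration of the truncation level, and this is exactly where the two polynomial exponents are forced. A high truncation $M\asymp n^{(1-\alpha)/(2-\alpha)}$ makes all truncated tail moments negligible and cleanly delivers the first and third terms, but then $\sigma_W^2$ is too small to reveal the intermediate term; a low truncation $M\asymp(\log n)^{1/\alpha}$ makes $\sigma_W^2\asymp n^{\alpha/(2-\alpha)}$ and produces the intermediate term, but no longer controls the centring $\sum_i E(V_i)$. Reconciling the two requires treating the \emph{centred} tail $\sum_i\{V_i-E(V_i)\}$ via a multi-scale (iterated) truncation that balances the Bernstein boundedness constant $M$ against the tail decay $\exp(-t_0 M^\alpha)$; the two regimes meet at the crossover $x\asymp n^{1/(2-\alpha)}$, at which $x^2/n$, $x^\alpha$ and $n^{(1-\alpha)/(2-\alpha)}x$ are all of the same order, and one checks that the cruder level used for the bulk is still dominated by the claimed polynomial denominator, since any polylogarithmic level is eventually $\le n^{(1-\alpha)/(2-\alpha)}$. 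When $\alpha=1$ the summands are sub-exponential, the moment generating function exists near the origin and the linearisation is unnecessary: the first and second terms coincide, both equalling $\exp\{-x^2/(C_2 n+C_3 x)\}$, so ordinary Bernstein together with the union bound gives the stated two-term inequality directly.
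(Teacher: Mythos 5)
Your skeleton is headed in the right direction --- truncate, Bernstein for the bulk, union bound for a single large summand, and you correctly locate the crossover $x\asymp n^{1/(2-\alpha)}$ and even the two candidate truncation scales $(\log n)^{1/\alpha}$ and $n^{(1-\alpha)/(2-\alpha)}$. But the heart of the lemma is the intermediate term $\exp\{-x^{2\alpha}/(C_2 n^{\alpha/(2-\alpha)}+C_3 x^\alpha)\}$, and your proposed mechanism for it does not work, while the repair you invoke is named but never carried out. Concretely: after the linearisation $(\sum_i|V_i|)^\alpha\le\sum_i|V_i|^\alpha$ you apply Bernstein to $W_i=|X_i|^\alpha I(|X_i|>M)$, calibrating $M$ so that $\sigma_W^2=\sum_i E(W_i^2)\asymp n^{\alpha/(2-\alpha)}$. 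But $E(W_i)$ and $E(W_i^2)$ are both of order $e^{-t_0M^\alpha}$ up to polylogarithmic factors, so the same calibration forces $\sum_i E(W_i)\asymp n^{\alpha/(2-\alpha)}$ as well. In the regime $x\ll n^{1/(2-\alpha)}$, where the intermediate term is a nontrivial claim (its exponent $x^{2\alpha}/n^{\alpha/(2-\alpha)}$ is large once $x\gg n^{1/(2(2-\alpha))}$), the threshold $x^\alpha$ then sits \emph{below} the mean of $\sum_i W_i$, so the inclusion $\{\sum_i|V_i|>x\}\subset\{\sum_i (W_i-E(W_i))>x^\alpha-\sum_i E(W_i)\}$ is vacuous and no centring rescues it. Your decomposition has a second, related defect: splitting off $\{|\sum_i E(V_i)|>x\}$ as a separate (deterministic) event requires $\sum_i E|V_i|\le x$, which at the low truncation level fails exactly where you need it --- as you yourself observe --- whereas an exact telescoping centring (each slab centred by its own mean) avoids any such comparison.

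The paper's proof is, in effect, the ``multi-scale truncation'' you gesture at, executed: it splits $X_i$ into \emph{three} centred slabs at $\xi_{n1}=\{4t_0^{-1}(1-\alpha)(2-\alpha)^{-1}\log n\}^{1/\alpha}$ and $\xi_{n2}=n^{1/(2-\alpha)}\vee x$, so the centring constants cancel identically in $\sum_i\{X_i-E(X_i)\}$. The bounded slab gives your first term via ordinary Bernstein (the polylog scale $\xi_{n1}$ is absorbed since $(\log n)^{1/\alpha}\le Cn^{(1-\alpha)/(2-\alpha)}$), and the top slab gives $n\zeta\exp(-t_0x^\alpha)$ by the union bound, as in your sketch. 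The middle slab, however, is \emph{not} passed through the $\alpha$-power map: the paper keeps the original variables $\tilde X_{i2}=X_iI(\xi_{n1}\le|X_i|\le\xi_{n2})$ and establishes the Bernstein-type moment bounds $E(|\tilde X_{i2}|^q)\le\tfrac12 q!\,\nu\,(2t_0^{-1}n^{(1-\alpha)/(2-\alpha)})^{q-2}$ when $x\le n^{1/(2-\alpha)}$, and $E(|\tilde X_{i2}|^q)\le\tfrac12 q!\,\nu\,x^{2(1-\alpha)}n^{-2(1-\alpha)/(2-\alpha)}(2t_0^{-1}x^{1-\alpha})^{q-2}$ otherwise. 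The decisive ingredient --- absent from your sketch --- is the damping factor $\exp(-t_0\xi_{n1}^\alpha/2)=n^{-2(1-\alpha)/(2-\alpha)}$ contributed by the \emph{lower} cutoff, which shrinks the per-summand variance so that the total variance is $O(n)$ in the first case and $O(n^{\alpha/(2-\alpha)}x^{2(1-\alpha)})$ in the second, while the \emph{upper} cutoff controls the Bernstein scale ($n^{(1-\alpha)/(2-\alpha)}$ or $x^{1-\alpha}$). Bernstein's inequality then yields both polynomial-denominator terms of the display, with the case split in $x$ doing the work you attributed to a single calibrated $M$. Until you construct this middle-slab argument (or an equivalent Fuk--Nagaev-type device), the proof is incomplete precisely at the term that distinguishes this lemma from a routine truncation bound; your $\alpha=1$ remark, by contrast, matches the paper, which also disposes of that case by Bernstein directly.
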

		\begin{proof}
		For the case of $\alpha=1$, (\ref{lemma3}) can be proved by Bernstein's inequality directly. So here we consider the case of $0 <\alpha < 1$ only. Let $\xi_{n1}$ and $\xi_{n2}$ be two constants with $0< \xi_{n1} < \xi_{n2}$, which depend on $n$ and will be defined below. Let $\tilde{X}_{i1} = X_i I(|X_i| \le \xi_{n1})$, $\tilde{X}_{i2} = X_i I(\xi_{n1}\le |X_i| \le \xi_{n2})$ and $\tilde{X}_{i3} = X_i I( |X_i| > \xi_{n2})$. Then
			$
			X_i = \tilde{X}_{i1} - E(\tilde{X}_{i1}) + \tilde{X}_{i2} - E(\tilde{X}_{i2}) +
			\tilde{X}_{i3} - E(\tilde{X}_{i3}),
			$
			and hence
			$$
			\pr\Big[\Big|\sum_{i=1}^n \{X_i - E(X_i)\} \Big| > 3 x \Big] \le \sum_{k=1}^3 \pr\Big[\Big|\sum_{i=1}^n\{\tilde{X}_{ik} - E(\tilde{X}_{ik})\} \Big| > x \Big].
			$$
			In the following, we will give an upper bound on each term separately.
			
			Now consider the first term. Let $\sigma^2$ be a finite constant  such that $\sup_{i \le n} E|X_i|^2 \le \sigma^2$.  Note that $|\tilde{X}_{i1}|\le \xi_{n1}$ and $E\tilde{X}_{i1}^2 \le \sigma^2 $ for all $i$. By Bernstein's inequality for bounded variables, we get that
			\begin{eqnarray}
			\label{lemma3-1}
			\pr\Big[\big|\sum_{i=1}^n\big\{\tilde{X}_{i1} - E(\tilde{X}_{i1})\big\}\big| > x\Big]
			&\le& 2 \exp\Big( - { x^2 \over 2n  \sigma^2 + 2\xi_{n1} x/3}\Big).
			\end{eqnarray}
			
			Let us handle the second term. To use Bernstein's equality, we only require an appropriate control of moments.
			Using integration by parts, we observe that
			$$
			E\big(|\tilde{X}_{i2}|^q\big) \le q \int_{\xi_{n1}}^{\xi_{n2}} u^{q-1} \pr(|X_i| > u)d u + \xi_{n 1}^q \pr(|X_i| > \xi_{n1})
			$$
			for $q \ge 2$. For integer $q \ge 2$,
			\begin{eqnarray*}
				q \int_{\xi_{n1}}^{\xi_{n2}} u^{q-1} \pr(|X_i| > u)du
				&\le &  q \zeta \int_{\xi_{n1}}^{\xi_{n2}} u^{q-1} \exp(- t_0 u^\alpha) d u \\
				&\le &  q  \alpha^{-1} \zeta (2 t_0^{-1})^{q/\alpha}\int_{t_0 \xi_{n1}^{\alpha}/2}^{t_0 \xi_{n2}^{\alpha}/2} u^{q/\alpha -1} \exp(- 2 u) d u \\
				&\le &  q \alpha^{-1}\zeta  \big(2 t_0^{-1} \xi_{n2}^{1-\alpha}\big)^{q} \exp(- 2^{-1} t_0 \xi_{n1}^\alpha ) \int_{t_0 \xi_{n1}^{\alpha}/2}^{t_0 \xi_{n2}^{\alpha}/2} u^{q -1} \exp(- u) du \\
				&\le & q!  4\alpha^{-1} \zeta \big(t_0^{-1} \xi_{n2}^{1-\alpha}\big)^2 \exp(- 2^{-1} t_0 \xi_{n1}^\alpha) \big(2t_0^{-1} \xi_{n2}^{1-\alpha}\big)^{q-2}.
			\end{eqnarray*}
			Choose $\xi_{n1} = \{ 4t_0^{-1} {(1- \alpha)/(2-\alpha)} \log n\}^{1/\alpha}$ and $\xi_{n2} = n^{ 1 /(2-\alpha)} \vee x$. Write $\xi_n = n^{{(1-\alpha)/ (2-\alpha)}}$ and $\nu = \max(16 \zeta \alpha^{-1} t_0^{-2}, \sigma^2)$. Then
			\begin{eqnarray*}
				q \int_{\xi_{n1}}^{\xi_{n2}} u^{q-1} \pr(|X_i| > u)du
				\le  {1\over 2}q!  \nu \big\{ 1 \vee x^{2(1-\alpha )} \xi_n^{-2}\big\} \big\{2t_0^{-1} (\xi_{n} \vee x^{1-\alpha})\big\}^{q-2}.
			\end{eqnarray*}
			We also have that
			$
			\xi_{n1}^q \pr(|X_i| >  \xi_{n1}) \le \xi_{n1}^q \exp(- t_0 \xi_{n1}^\alpha) = \xi_{n1}^{2} \exp(- t_0 \xi_{n1}^\alpha) \xi_{n1}^{q-2}.
			$
			A simple manipulation yields that there exists a positive integer $N_{\alpha,t_0}$ which depends only on $\alpha$ and $t_0$ such that
			$$
			\xi_{n1} < \xi_{n2}, ~~\xi_{n1}^2 \exp(- t_0 \xi_{n1}^\alpha) \le 4 \alpha^{-1} \zeta t_0^{-2}, ~~2 t_0^{-1} \xi_{n2}^{1-\alpha} \ge \xi_{n1},~~ \mbox{and} ~~4 \log n \le t_0 \xi_{n2}^\alpha,
			$$
			if $n > N_{\alpha,t_0}$. Then, if $x \le n^{1/( 2-\alpha)}$,
			$$
			E(|\tilde{X}_{i2}|^q) \le {1 \over 2} q! \nu \left(2t_0^{-1}\xi_n \right)^{q-2}
			$$ for $q \ge 2$;  otherwise,
			$$
			E(|\tilde{X}_{i2}|^q) \le {1 \over 2} q! \nu \big\{x^{2(1-\alpha )} \xi_n^{-2}\big\} \big(2t_0^{-1} x^{1-\alpha}\big)^{q-2}
			$$ for $q \ge 2$. By Bernstein's inequality, we obtain that
			\begin{eqnarray}
			\label{lemma3-2}
			\pr\Big[\big|\sum_{i=1}^n\big\{\tilde{X}_{i2} - E(\tilde{X}_{i2})\big\}\big| > x\Big]
			&\le &2 \exp\Big( - { x^2 \over 2 n  \nu + 4t_0^{-1} n^{1-\alpha \over 2-\alpha} x}\Big) \nonumber \\
			&&+ 2 \exp\Big( - { x^{2\alpha} \over 2  \nu n^{\alpha \over 2-\alpha} + 4t_0^{-1}  x^{\alpha}}\Big).
			\end{eqnarray}
			
			For the last term, we note that
			\begin{eqnarray*}
				\Big[|\sum_{i=1}^n \{\tilde{X}_{i3} - E(\tilde{X}_{i3})\}| > x \Big] &\subset& \Big\{\sup_{i}|X_i| > \xi_{n2} \Big\} \cup \Big\{\sup_{i}|X_i| \le \xi_{n2},\\
				&&  \hskip 0.5cm \sum_{i=1}^n |E(X_i I(|X_i| > \xi_{n2}))| > x \Big\}.
			\end{eqnarray*}
			Therefore, we have
			\begin{eqnarray*}
				&&\pr\Big[\Big|\sum_{i=1}^n (\tilde{X}_{i3} - E(\tilde{X}_{i3})\}\Big| > x \Big]\\
				&&\le \pr\Big(\sup_{i}|X_i| > \xi_{n2} \Big) + \pr\Big[\sup_{i}|X_i| \le \xi_{n2}, \sum_{i=1}^n |E\{X_i I(|X_i| > \xi_{n2})\}| > x \Big].
			\end{eqnarray*}
			Note that $ \zeta = \sup_{i \le n} E\{\exp(t_0 |X_i|^\alpha)\} < \infty$. We observe that
			\begin{eqnarray*}
				\pr\Big(\sup_{i}|X_i| > \xi_{n2} \Big) \le \zeta n \exp\Big( - t_0 \xi_{n2}^\alpha\Big) \le \zeta n \exp\Big( - t_0 x^\alpha\Big).
			\end{eqnarray*}
			In a similar fashion, we obtain that
			$$
			\sum_{i=1}^n |E\{X_i I(|X_i| > \xi_{n 2})\}| \le n\sigma \left\{\pr(|X_i| > \xi_{n 2})\right\}^{1/2} \le  n^{-1}\sigma \zeta \exp\Big( 2 \log n - 2^{-1}t_0\xi_{n 2}^\alpha\Big).
			$$
			As a result, for $x > \sigma \zeta n^{-1}$ and $n > N_{\alpha,t_0}$,
			\begin{eqnarray}
			\label{lemma3-3}
			\pr\Big[|\sum_{i=1}^n \{\tilde{X}_{i3} - E(\tilde{X}_{i3})\}| > x \Big] \le \zeta n \exp\Big( - t_0 x^\alpha\Big).
			\end{eqnarray}
			Combing the three inequalities (\ref{lemma3-1})-(\ref{lemma3-3}), we conclude that, for $x > \sigma \zeta n^{-1}$ and $n > N_{\alpha,t_0}$,
			\begin{eqnarray*}
				\pr\Big [\Big|\sum_{i=1}^n \{X_i - E(X_i)\}\Big| > 3 x \Big]
				&\le & 4\exp\left( - { x^2 \over 2 n  \nu + 4t_0^{-1} n^{1-\alpha \over 2-\alpha} x}\right)\\
				&&  + 2 \exp\left( - { x^{2\alpha} \over 2  \nu n^{\alpha \over 2-\alpha} + 4t_0^{-1}  x^{\alpha}}\right)
				+ n \zeta \exp\left(-t_0 x^{\alpha}\right).
			\end{eqnarray*}
			If $x \le \sigma \zeta n^{-1}$ or $n \le N_{\alpha,t_0}$, we can always multiply a large positive constant $C$ on the right hand side to make the inequality hold. The proof is completed.
		\end{proof}
		\begin{lemma}
			\label{lem4}
			Suppose that $\{\bX_1 = (X_{1,1},X_{1,2})^\top,\bX_2 = (X_{2,1},X_{2,2})^\top,\ldots\}$ are independent random vectors and $\sup_{i \le n, j = 1,2} E\{\exp(t_0 |X_{i,j}|^{2\alpha})\} \le \zeta$ for some positive constants $\alpha$, $t_0$ and $\zeta$ with $0 <\alpha \le 1$. Denote by $l_n$ a sequence that may depend on $n$, and $1 \le l_n \le O(n^\epsilon)$ with $0 \le \epsilon < 1$. Then, for each $m$ and $m'$ with $m,m' = 1,2$, there exist positive constants $C_j (j = 1,\ldots,4)$ such that for any $x > 0$, the following concentration inequality holds:
			\begin{eqnarray*}
				&& \pr\Big[\Big|\sum_{i=1}^n \{X_{i,m} X_{i+l_n,m'} - E(X_{i,m} X_{i+l_n,m'})\} \Big|
				> 3 (l_n+1) x \Big] \\
				& \le &  (l_n+1) C_1\exp\left( - { x^2 \over C_2 n  + C_3 n^{1-\alpha \over 2-\alpha} x}\right) + C_1 (l_n + 1) \exp\left( - { x^{2\alpha} \over C_2 n^{\alpha \over 2-\alpha} + C_3 x^{\alpha}}\right)\\
				& &   + C_1 (l_n + 1) n \exp\left(-C_4 x^{\alpha}\right).
			\end{eqnarray*}
		\end{lemma}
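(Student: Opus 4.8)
The plan is to reduce everything to the independent-summand tail bound already established in Lemma 3. Write $Y_i = X_{i,m}X_{i+l_n,m'}$ for the generic summand. The first step is to check that each product $Y_i$ inherits a sub-exponential tail of index $\alpha$ from the index-$2\alpha$ tails of its two factors. By Young's inequality, $|Y_i|^{\alpha} \le \tfrac12\bigl(|X_{i,m}|^{2\alpha} + |X_{i+l_n,m'}|^{2\alpha}\bigr)$, so that $\exp(t_0|Y_i|^{\alpha}) \le \exp\{\tfrac{t_0}{2}|X_{i,m}|^{2\alpha}\}\exp\{\tfrac{t_0}{2}|X_{i+l_n,m'}|^{2\alpha}\}$. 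Applying the Cauchy--Schwarz inequality to the expectation of this product, together with the hypothesis $\sup_{i,j}E\{\exp(t_0|X_{i,j}|^{2\alpha})\}\le \zeta$, yields $\sup_i E\{\exp(t_0|Y_i|^{\alpha})\}\le \zeta$. Thus the $Y_i$ obey exactly the moment condition of Lemma 3 with exponent $\alpha\in(0,1]$ and the same constants $t_0,\zeta$.

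The second, and key, step is a decoupling into independent blocks. Since the vectors $\bX_1,\bX_2,\ldots$ are independent, $Y_i$ and $Y_j$ are independent whenever $\{i,i+l_n\}\cap\{j,j+l_n\}=\emptyset$, i.e.\ whenever $|i-j|\notin\{0,l_n\}$. I would therefore partition $\{1,\ldots,n\}$ into the $l_n+1$ residue classes $I_r = \{i : i\equiv r \pmod{l_n+1}\}$, $r=0,\ldots,l_n$. Any two distinct indices in the same class differ by at least $l_n+1 > l_n$, so within each class the summands $\{Y_i : i\in I_r\}$ are mutually independent. Writing $S_r = \sum_{i\in I_r}\{Y_i - E(Y_i)\}$, each $S_r$ is a centred sum of at most $n$ independent variables satisfying the tail bound from the first step, so Lemma 3 applies to $S_r$ verbatim (with its cardinality replaced by the crude upper bound $n$). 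The restriction $l_n = O(n^{\epsilon})$ with $\epsilon<1$ plays no role in the inequality itself; it merely keeps the resulting $(l_n+1)$-fold inflation acceptable when the lemma is used later.

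The final step assembles the pieces by a union bound. Since there are $l_n+1$ blocks, if the full centred sum exceeds $3(l_n+1)x$ in absolute value then $\sum_r|S_r| > 3(l_n+1)x$, whence at least one $|S_r|$ exceeds $3x$; therefore
\be
\pr\Big[\Big|\sum_{i=1}^n\{Y_i - E(Y_i)\}\Big| > 3(l_n+1)x\Big] \le \sum_{r=0}^{l_n}\pr\big(|S_r| > 3x\big).
\ee
Bounding each summand on the right by Lemma 3 and collecting the $l_n+1$ resulting identical terms produces the stated three-term inequality, with constants $C_1,\ldots,C_4$ depending only on $\alpha$, $t_0$ and $\zeta$.

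I expect the main obstacle to be the combinatorial decoupling: one must pin down exactly which pairs $(Y_i,Y_j)$ are dependent---only the lag-$0$ and lag-$l_n$ pairs---and then choose a block decomposition in which every block consists solely of mutually independent terms while keeping the number of blocks as small as $l_n+1$, so that scaling the threshold by $(l_n+1)$ is tight. Once independence within blocks is secured, the product-moment reduction of the first step and the term-by-term application of Lemma 3 are routine.
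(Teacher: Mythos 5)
Your proposal is correct and follows essentially the same route as the paper: the paper's blocks $Y_{ji} = X_{(i-1)(l_n+1)+j,\,1}\,X_{i(l_n+1)+j-1,\,2}$ are exactly your residue classes modulo $l_n+1$, with Lemma 3 applied to each independent block and a union bound over the $l_n+1$ blocks supplying the $(l_n+1)$-fold inflation of the threshold. Your explicit Young plus Cauchy--Schwarz verification that the products inherit the index-$\alpha$ sub-exponential moment bound is a welcome addition, since the paper only asserts $\sup_{i,j} E\{\exp(t_0|Y_{ji}|^{\alpha})\} \le \zeta$ without proof.
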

		\begin{proof} 
			Without loss of generality, we assume that $n/(l_n+1)$ is a positive integer. Here we prove the inequality for $m =1$ and $m'=2$ only. Similar techniques can be applied to other cases. Let $Y_{ji} = X_{(i - 1)(l_n+1) + j,1} X_{i (l_n+1) + j - 1,2}$. Then, for each $j$, $\{Y_{ji}, i = 1,\ldots,n/(l_n+1)\}$ are independent with $\sup_{i,j} E\{\exp(t_0 |Y_{ji}|^{\alpha})\} \le \zeta < \infty$. With the help of $Y_{ji}$,
			$\sum_{i = 1}^n \big\{ X_{i,1} X_{i+l_n,2} - E(X_{i,1} X_{i+l_n,2})\big\}$ can be re-expressed as
			\begin{eqnarray*}
				\sum_{i = 1}^n \big\{ X_{i,1} X_{i+l_n,2} - E(X_{i,1} X_{i+l_n,2})\big\} = \sum_{j = 1}^{l_n+1} \sum_{i = 1}^{n/(l_n+1)} \big\{Y_{ji} - E(Y_{ji})\big\}.
			\end{eqnarray*}
			By Lemma 3, we obtain that there exist positive constants $C_j (j = 1,\ldots,4)$ such that
			\begin{eqnarray*}
				\pr\Big[\Big|\sum_{i = 1}^{n/(l_n+1)} \big\{Y_{ji} - E(Y_{ji})\big\}\Big| > 3 x\Big]
				&\le& ~ C_1\exp\left( - { x^2 \over C_2 n  + C_3 n^{1-\alpha \over 2-\alpha} x}\right) \\
				&& + ~ C_1 \exp\left( - { x^{2\alpha} \over C_2 n^{\alpha \over 2-\alpha} + C_3 x^{\alpha}}\right) \\
				&& + ~C_1 n \exp\left(-C_4 x^{\alpha}\right),
			\end{eqnarray*}
			for each $j = 1,\ldots,l_n + 1$. Note that
			\begin{eqnarray*}
				\big|\sum_{i = 1}^n \big\{ X_{i,1} X_{i+l_n,2} - E(X_{i,1} X_{i+l_n,2})\big\} \big|
				\le (l_n + 1)\sup_{j \le l_n + 1} \Big |\sum_{i = 1}^{n/(l_n+1)} \big\{Y_{ji} - E(Y_{ji})\big\}\Big|.
			\end{eqnarray*}
			Therefore,
			\begin{eqnarray*}
				&& \pr\Big[\Big|\sum_{i = 1}^n \big\{ X_{i,1} X_{i+l_n,2} - E(X_{i,1} X_{i+l_n,2})\big\} \Big|
				> 3 (l_n+1) x \Big] \\
				& &\le (l_n + 1) \sup_{j \le l_n+1}\pr\Big[ \Big |\sum_{i = 1}^{n/(l_n+1)} \big\{Y_{ji} - E(Y_{ji})\big\}\Big| > 3 x\Big].
			\end{eqnarray*}
			The lemma is proved.
		\end{proof}
		
		Lemmas 5, 6 and 7 below are based on the autoregressive model with order 1 under $\|\bA_1\|_2 \le \delta < 1$. Similar techniques can be applied to the general cases of order $d$. 
		For  $j, k = 1,\ldots,p$, define $\wh{\Sigma}_{jk} = n^{-1}\sum_{t=1}^n y_{j,t}y_{k,t}$ and $\Sigma_{jk} = E(\hat{\Sigma}_{jk})$. For $i = 1,\ldots,p,$ let $\bbe_{(i)} = (\ep_{i,2},\ldots,\ep_{i,n})^{\T}$ and $\bx_{(i)} = (y_{i,1},\ldots,y_{i,n-1})^{\T}$. We should note that Lemmas 5 and 6 have the same rate expressions but the actual rates are different, since they are under Conditions 4(i) and 4(ii), respectively.
		
		\begin{lemma}
			\label{lem5}
			Suppose that Conditions (1)--(3) and 4(i) in Section 3.1 of the original article hold.
			\begin{description}
				\item (i) For $j,k = 1,\ldots, p$, there exist positive constants $C_1, C_2$ and $C_3$ free of $(j,k,n,p)$ such that
				$$
				\pr\left( \left|\wh{\Sigma}_{jk} - \Sigma_{jk}\right| > x\right) \le {C_1 n \over (n x)^q} + C_2 \exp\big(- C_3 n x^2\big)
				$$
				holds for $ x > 0$; consequently, this leads to the following uniform convergence rate:
				$$
				\sup_{1 \le j,k \le p}\left|\wh{\Sigma}_{jk} - \Sigma_{jk}\right| = O_P\Big\{(n^{-1}\log p)^{1/2}\Big\}.
				$$
				
				\item (ii) For $j,k = 1,\ldots, p$, there exist positive constants $C_1, C_2$ and $C_3$ free of $(j,k,n,p)$ such that
				$$
				\pr\big\{|\bbe_{(j)}^{\T}\bx_{(k)}| \ge x \big\} \le {C_1 n \over x^{2 q}} + C_2 \exp\big(- C_3 x^2\big)
				$$
				holds for $ x > 0$; in particular, we have
				$$
				\sup_{1 \le j,k \le p}|\bbe_{(j)}^{\T}\bx_{(k)}| = O_P\Big\{(n\log p)^{1/2}\Big\}.
				$$
			\end{description}
		\end{lemma}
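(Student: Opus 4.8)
The plan is to derive both pointwise tail bounds from the Nagaev-type inequality of Lemma~\ref{lem1}, applied to the appropriate partial sums, and then to obtain the uniform rates by a union bound over the $p^2$ index pairs. The essential preliminary is to show that the stationary processes whose partial sums appear in (i) and (ii) have functional dependence measures decaying geometrically, \emph{uniformly} in $(j,k)$, so that the hypotheses of Lemma~\ref{lem1} hold and the resulting constants are free of $(j,k,n,p)$.

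First I would record the moving-average representation. Under Condition~1 the stationary solution satisfies $\wt\by_t=\sum_{i\ge 0}\wt\bA^i\wt\bep_{t-i}$, whence $\by_t=\sum_{i\ge 0}\bPhi_i\,\bep_{t-i}$ with $\bPhi_i=\bJ\wt\bA^i\bJ^\T$ and $\|\bPhi_i\|_2\le\|\wt\bA^i\|_2\le C\delta^i$. Coupling the innovation at time $0$ gives $\by_t-\by_t^*=\bPhi_t(\bep_0-\bep_0^*)$, so the $j$-th coordinate is a weighted sum $\sum_m(\bPhi_t)_{jm}(\ep_{m,0}-\ep_{m,0}^*)$ of independent centred variables whose weight vector has $\ell_2$ norm at most $\|\bPhi_t\|_2\le C\delta^t$. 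A Rosenthal-type inequality, together with the bounded $2q$-th moments of the innovations from Condition~4(i), then yields $\|y_{j,t}-y_{j,t}^*\|_{2q}\le C\delta^t$ uniformly in $j$, and in particular $\max_j\|y_{j,t}\|_{2q}\le C$. For part~(i) I would take $z_t=y_{j,t}y_{k,t}-\Sigma_{jk}$ and bound its dependence measure through $z_t-z_t^*=y_{j,t}(y_{k,t}-y_{k,t}^*)+(y_{j,t}-y_{j,t}^*)y_{k,t}^*$ and H\"older's inequality, giving $\theta_{t,q}\le C\delta^t$, hence $\Theta_{m,q}=O(\delta^m)$ and $\Theta_{0,q}=O(1)$, uniformly in $(j,k)$. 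For part~(ii) I would take $w_t=\ep_{j,t}\,y_{k,t-1}$, which is centred since $\bep_t$ is independent of the past, and use $w_t-w_t^*=\ep_{j,t}(y_{k,t-1}-y_{k,t-1}^*)$ to obtain the same geometric decay.

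Because geometric decay beats any polynomial rate, the requirement $\Theta_{m,q}=O(m^{-\alpha})$ with $\alpha>1/2-1/q$ in Lemma~\ref{lem1} is automatic. Applying Lemma~\ref{lem1} to $S_n=n^{-1/2}\sum_t z_t=n^{1/2}(\wh{\Sigma}_{jk}-\Sigma_{jk})$ with threshold $n^{1/2}x$ produces exactly the pointwise bound in~(i); applying it to the partial sums of $w_t$ produces the corresponding tail bound yielding~(ii). The constants are uniform in $(j,k)$ precisely because $\Theta_{0,q}$ is. The uniform rates then follow from a union bound: writing $\pr(\sup_{j,k}|\cdot|>x)\le p^2\{\text{tail bound}\}$ and taking $x$ a large multiple of $(n^{-1}\log p)^{1/2}$ in~(i) and of $(n\log p)^{1/2}$ in~(ii), the exponential term becomes $p^{2-cM^2}\to 0$, while the polynomial term is $O(n^{2\beta-q/2+1})$; this vanishes exactly under the range $\beta\in(0,(q-2)/4)$ imposed in Condition~4(i).

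The main obstacle I anticipate is the bookkeeping required to keep every constant independent of $(j,k,p)$: the Rosenthal bound for the coordinatewise dependence measure must be controlled by $\|\bPhi_t\|_2$ alone, not by row sums that could grow with $p$, and the H\"older splitting for the product process must exploit the uniform $2q$-moment bound on the coordinates of $\by_t$. Once the uniform geometric decay of $\Theta_{0,q}$ is established, the remainder is the routine substitution into Lemma~\ref{lem1} and the union-bound balancing against $p=O(n^\beta)$.
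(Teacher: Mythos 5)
Your proposal is correct, and its skeleton coincides with the paper's: both bound the Wu functional dependence measure of the product process $y_{j,t}y_{k,t}$ via the splitting $z_t-z_t^*=y_{j,t}(y_{k,t}-y_{k,t}^*)+(y_{j,t}-y_{j,t}^*)y_{k,t}^*$ and H\"older, feed the geometric decay of $\Theta_{m,q}$ into the Nagaev-type inequality of Lemma~1, and close with a union bound over $p^2$ pairs balanced against $p=O(n^\beta)$, $\beta<(q-2)/4$ (the paper proves only part (i) and declares (ii) analogous, as you in effect do). Where you genuinely diverge is the one step you correctly flagged as delicate: controlling the coordinatewise dependence measure $\theta_{i,2q,j}$ uniformly in $j$ and $p$. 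The paper does this \emph{without} Rosenthal, by exploiting bandedness: since $\bA_1^l$ is banded with bandwidth $\min(2lk_0+1,p)$, it bounds the row-$\ell_1$ norms by $\|\bA_1^l\|_\infty\le\{\min(2lk_0+1,p)\}^{1/2}\|\bA_1^l\|_2\le C(2lk_0+1)\delta^l$ and then applies a crude triangle inequality to the moving-average representation, yielding $\theta_{i,2q,j}\le C\mu_{2q}(i+1)\delta^i$ (a polynomially inflated geometric rate, which is still summable and $o(m^{-\alpha})$ for every $\alpha$). Your route instead applies a Rosenthal-type inequality to $\sum_m(\bPhi_t)_{jm}(\ep_{m,0}-\ep_{m,0}^*)$, using only that the $j$-th row of $\bPhi_t$ has $\ell_2$ norm at most $\|\bPhi_t\|_2\le C\delta^t$ and that $\sum_m|w_m|^{2q}\le(\sum_m w_m^2)^q$; this is sound, gives the cleaner bound $C\delta^t$, and notably never uses the banded structure of the coefficient matrices, so your version of the lemma holds under Condition~1 alone with constants free of $k_0$. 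The trade-off is that the paper's argument is more elementary (no moment inequality for weighted sums beyond the triangle inequality) and its $\|\bA_1^l\|_\infty$ machinery is recycled verbatim in Lemma~6 and in Proposition~2 (where Condition 1$'$ replaces the bandedness computation). Two small bookkeeping remarks: in part (ii) your identity $w_t-w_t^*=\ep_{j,t}(y_{k,t-1}-y_{k,t-1}^*)$ holds for $t\ge 1$ but at the coupled time the innovation $\ep_{j,t}$ itself is replaced, contributing a single $O(1)$ term to $\Theta_{0,q}$, which is harmless; and the exponential term in the paper's Lemma~1 is missing a minus sign, a typo you implicitly and correctly repair.
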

		
		\begin{proof}
			Here we prove part (i) only. Part (ii) can be proved analogously. Let $\mu_{q} = \sup_{j\le p} \|\varepsilon_{j0}\|_{q}$ for $q \ge 2$. To use the results of Lemma 1, we just need to bound the physical dependent measure of $y_{j,t}y_{k,t}$ for each $j$ and $k$, denoted by $ \tilde{\theta}_{i,q,j,k} = \|y_{j,i}y_{k,i} - y_{j,i}^*y_{k,i}^*\|_{q}$ with $y_{j,i}^*$ being the coupled process of $y_{j,i}$. Denote the physical dependent measure of $y_{j,i}$ by $\theta_{i,2q,j} = \|y_{j,i} - y_{j,i}^*\|_{2q}$
			with $y_{j,i}^*$ being the coupled process of $y_{j,i}$.
			
			We will show (a) $\sup_{j \le p} \|y_{j,i}\|_{2q} \le C \mu_{2q}$; (b)
			$
			\sup_{j \le p}\theta_{i,2q,j} \le C \mu_{2q} (i+1) \delta^{i},
			$
			where $C$ is some positive constant and depends only on the spectral norm of $\bA_1$ rather than $q$.
			Observe that $\|y_{j,i}y_{k,i} - y_{j,i}^*y_{k,i}^*\|_{q} \le \|y_{j,i}y_{k,i} - y_{j,i}^*y_{k,i}\|_{q} + \|y_{j,i}y_{k,i} - y_{j,i}y_{k,i}\|_{q}$ and hence
			\begin{eqnarray*}
				\|y_{j,i}y_{k,i} - y_{j,i}^*y_{k,i}^*\|_{q} \le  \sup_{j \le p} \|y_{j,i}\|_{2q} \big(\theta_{i,2q,j} + \theta_{i,2q,k} \big).
			\end{eqnarray*}
			If both bounds (a) and (b) are obtained, then,
			$$
			\tilde{\Theta}_{m,q} = \sup_{j,k \le p } \sum_{i= m}^{\infty} \tilde{\theta}_{i,2 q,j,k} \le C \mu_{2q}^2 \sum_{i = m}^\infty (i+1) \delta^i
			\le C \mu_{2q}^2(1-\delta)^{-2} (m+1) \delta^m = o(m^{-\alpha})
			$$
			for any $\alpha > 1 $.  Applying Lemma 1 we prove part (i).
			
			Let us turn to bound $\sup_{j \le p} \|y_{j,i}\|_{2q}$. Let $\bA_1^l$ be $(a_{l,jk})_{j,k \le p}$ with $l \ge 1$. Since $\bA_1^l$ is a banded matrix with the bandwidth $\min(2l k_0 + 1,p)$, we can bound $\|\bA_1^l\|_{\infty}$ by
			\begin{eqnarray}
			\label{Ainfty}
			\|\bA_1^l\|_{\infty} = \max_{j \le p}\sum_{k=1}^p |a_{l,jk}|\le \{\min(2lk_0+1,p)\}^{1/2} \|\bA_1^l\|_2 \le C(2lk_0+1) \delta^l, l \ge 1,
			\end{eqnarray}
			which implies that $\|\bA_1^l\|_{\infty} \le C(2k_0+1)(l+1) \delta^l, l \ge 0$.
			Using the innovation representation $\by_t = \sum_{l = 0}^{\infty} \bA_1^l \bep_{t-l}$,
			we get
			$$
			\|y_{j,i}\|_{2q} \le \sum_{l = 0}^\infty \|\sum_{k=1}^p a_{l,jk} \varepsilon_{k,i-l}\|_{2q} \le \sum_{l = 0}^\infty \sum_{k=1}^p |a_{l,jk}| \|\varepsilon_{k,i-l}\|_{2q}.
			$$
			As a result,
			$
			\sup_{j \le p}\|y_{j,i}\|_{2q} \le C(2k_0+1)\mu_{2q}\sum_{l = 0}^\infty (l+1) \delta^l = C(2k_0+1) (1- \delta)^{-2} \mu_{2q}  < \infty.
			$
			Similarly, we can bound $\sup_{j\le p}\theta_{i,2q,j}$ above by $C  (i+1) \delta^i$ with some positive constant $C$ since we have a nice inequality
			$$
			\|y_{j,i} - y_{j,i}^*\|_{2q} = \|\sum_{k=1}^p a_{i,j k} \big(\varepsilon_{k,0}-\varepsilon_{k,0}^*\big)\|_{2q} \le 2 \mu_{2q} \|\bA_1^i\|_{\infty}.
			$$
			The proof is complete.
		\end{proof}
		
		\begin{lemma}
			\label{lem6}
			Suppose that Conditions (1)--(3) and 4(ii)  in Section 3.1 of the original article hold. Then we have \\
			
			(i)  $
			\underset{1 \le j,k \le p}{\sup}|\wh{\Sigma}_{jk} - \Sigma_{jk}| = O_P\Big\{(n^{-1}\log p)^{1/2} \Big\};
			$
			(ii)  
			$
			\underset{1 \le j,k \le p}{\sup}|\bbe_{(j)}^{\T}\bx_{(k)}| = O_P\Big\{ (n\log p)^{1/2}\Big\}.
			$
		\end{lemma}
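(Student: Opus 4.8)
The plan is to mimic the proof of Lemma~\ref{lem5}, but to replace the functional-dependence argument (Lemma~\ref{lem1}), which is unavailable under the sub-exponential tail Condition~4(ii), by a reduction to concentration inequalities for sums of \emph{independent} lagged products, namely Lemmas~\ref{lem3} and~\ref{lem4}. As in Lemma~\ref{lem5}, I would treat part~(i) in detail and obtain part~(ii) by the same device applied to $\bbe_{(j)}^{\T}\bx_{(k)}$, where the centering is automatic because $\varepsilon_{j,t}$ and $y_{k,t-1}$ are independent with zero mean.

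First I would show that each $y_{j,t}$ inherits a sub-exponential tail from the innovations. Condition~4(ii) gives $\sup_{j}E\{\exp(\lambda_0|\varepsilon_{j,0}|^{2\alpha})\}\le C$, so by the forward part of Lemma~\ref{lem2} (with $v=2\alpha$) we have $\mu_q=\sup_j\|\varepsilon_{j,0}\|_q\le C q^{1/(2\alpha)}$. Inserting this into the moving-average representation $\by_t=\sum_{l\ge0}\bA_1^l\bep_{t-l}$ together with the banded bound $\|\bA_1^l\|_\infty\le C(2k_0+1)(l+1)\delta^l$ established in the proof of Lemma~\ref{lem5} yields $\sup_{j}\|y_{j,t}\|_q\le C'q^{1/(2\alpha)}$, and hence, by the converse part of Lemma~\ref{lem2}, $\sup_{j}E\{\exp(\lambda|y_{j,t}|^{2\alpha})\}<\infty$ for some $\lambda>0$. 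The product $y_{j,t}y_{k,t}$ is therefore sub-exponential with exponent $\alpha$, matching the moment hypotheses of Lemmas~\ref{lem3} and~\ref{lem4}.

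The core step is to convert $\sum_{t}\{y_{j,t}y_{k,t}-\Sigma_{jk}\}$ into sums to which Lemma~\ref{lem4} applies. Expanding both factors through the innovation representation gives
\[
\sum_{t}\{y_{j,t}y_{k,t}-\Sigma_{jk}\}
=\sum_{l,l'\ge0}\sum_{m,m'}a_{l,jm}a_{l',km'}\sum_{t}\{\varepsilon_{m,t-l}\varepsilon_{m',t-l'}-E(\varepsilon_{m,t-l}\varepsilon_{m',t-l'})\},
\]
and after the reindexing $s=t-l$ the inner sum is exactly a sum of lagged products $\varepsilon_{m,s}\varepsilon_{m',s+(l-l')}$ of the \emph{independent} innovation vectors $\{\bep_s\}$ at the fixed lag $|l-l'|$. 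For each $(l,l',m,m')$ Lemma~\ref{lem4} supplies a Bernstein-type tail bound, and I would combine these using the geometric weights $|a_{l,jm}a_{l',km'}|\le C^2(l+1)(l'+1)\delta^{l+l'}$, whose summability (finitely many admissible $m,m'$ per $(l,l')$ by bandedness) controls the whole series. Equivalently one may truncate the representation at lag $L\asymp\log n$, so that $\{y_{j,t}^{(L)}y_{k,t}^{(L)}\}_t$ is $L$-dependent and splits into $O(L)$ interleaved \emph{independent} subsequences to which Lemma~\ref{lem3} applies, while the $L_2$-norm of the remainder is $O(\delta^L)$ and hence negligible uniformly in $(j,k)$.

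Finally I would take $x\asymp(n\log p)^{1/2}$ in the tail bound of Lemma~\ref{lem4} and apply a union bound over the $p^2$ pairs $(j,k)$, absorbing the polynomial $L$- and $l_n$-factors. The three terms of the bound, of orders $p^2\exp\{-c x^2/n\}$, $p^2\exp\{-c x^{2\alpha}/n^{\alpha/(2-\alpha)}\}$ and $p^2 n\exp\{-c x^{\alpha}\}$, each tend to zero precisely because $\log p=o\{n^{\alpha/(2-\alpha)}\}$, and this scaling check is exactly what pins down the rates $(n^{-1}\log p)^{1/2}$ for $\wh{\Sigma}_{jk}-\Sigma_{jk}$ and $(n\log p)^{1/2}$ for $\bbe_{(j)}^{\T}\bx_{(k)}$. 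I expect the main obstacle to be the serial dependence of $\{y_{j,t}\}$: staying within the independence-based framework of Lemmas~\ref{lem3}--\ref{lem4} forces either the lag-by-lag decomposition above or the truncation-plus-blocking argument, and the delicate point in both is to verify, uniformly in $(j,k)$, that the geometric decay of $\|\bA_1^l\|_\infty$ keeps the accumulated constants bounded while the admissible growth of $p$ keeps the union bound under control.
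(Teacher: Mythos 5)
Your skeleton does match the paper's actual proof: the paper also truncates the moving-average representation, at $N_n = N_\delta\log n$, splits $n\wh{\Sigma}_{jk}-n\Sigma_{jk}$ into four blocks $S_{jk,1},\ldots,S_{jk,4}$ according to whether each lag exceeds $N_n$, applies Lemma 4 to the lagged innovation products $\sum_t\{\varepsilon_{m,t-l}\varepsilon_{m',t-l'}-E(\varepsilon_{m,t-l}\varepsilon_{m',t-l'})\}$ in the main block with $x\asymp(n\log p)^{1/2}$, and absorbs the constants through the banded bound $\|\bA_1^l\|_\infty\le C(2k_0+1)(l+1)\delta^l$ and the summability $\sum_l(l+1)^3\delta^l<\infty$. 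But two of your steps have genuine problems. First, your \emph{primary}, untruncated route --- applying Lemma 4 separately to every pair $(l,l')\ge 0$ --- is not available: Lemma 4 is stated only for lags $1\le l_n\le O(n^\epsilon)$ with $\epsilon<1$ (its blocking argument produces $n/(l_n+1)$ independent summands per block, which degenerates once $|l-l'|$ is comparable to $n$), and a union bound over \emph{infinitely many} lag pairs at a single threshold does not close, since the Bernstein-type tail bounds do not decay in $(l,l')$ at fixed $x$; one must weight the thresholds and restrict to the polylogarithmically many pairs $l,l'\le N_n$, which is exactly the paper's $\tilde{\eta}_n$ device. Truncation is therefore not ``equivalent'' but necessary.

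Second, and more seriously, dismissing the remainder because ``the $L_2$-norm of the remainder is $O(\delta^L)$ and hence negligible uniformly in $(j,k)$'' fails in precisely the regime the lemma targets. Condition 4(ii) allows $\log p$ to grow like $n^{\alpha/(2-\alpha)-\epsilon}$, so $p$ can be super-polynomial in $n$; a second-moment bound plus Chebyshev and a union bound over the $p^2$ pairs then leaves a factor of $p$ that cannot be beaten by $\delta^{L}$ with $L\asymp\log n$, which is only polynomially small in $n$. The paper instead controls \emph{all} moments, $\|S_{jk,2}-E(S_{jk,2})\|_q\le C\,nN_n^2\delta^{2N_n}q^{1/\alpha}$ uniformly in $q$, converts this via the converse half of Lemma 2 into a sub-exponential tail with exponent $\alpha$ for the normalized remainder, and only then takes the $p^2$ union bound; it is this all-orders moment control, together with the choice $N_\delta>(1+\alpha)\alpha^{-1}(-\log\delta)^{-1}$ making $nN_n^2\delta^{2N_n}\to 0$, that defeats sub-exponentially growing $p$. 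You already hold the needed tool --- you invoke both directions of Lemma 2 to give $y_{j,t}$ sub-exponential tails --- so the repair is local, but as written the uniformity claim for the tail blocks is unjustified.
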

		\begin{proof}
			Here we prove part (i) only. The proof of part (ii) can be derived similarly.
			
			Note that $\by_t = \bA_1 \by_{t-1} +  \bep_{t-l}$ and $\|\bA_1\|_2 \le \delta < 1$. Let $\bA_1^l$ be $(a_{l,jk})_{j,k \le p} $. For each $j$, $y_{j,t} = \sum_{l = 0}^\infty \sum_{m=1}^p a_{l,jm} \varepsilon_{m,t-l}$ converges almost surely. Write $\eta_{j,lt} = \sum_{m=1}^p a_{l,jm} \varepsilon_{m,t-l}$ for $l \ge 0$. We divide $y_{j,t}$ into two terms $y_{jt} = \sum_{l=0}^{N_{n}} \eta_{j,lt} + \sum_{l = N_{n} + 1}^\infty \eta_{j,lt}$. Here we choose $N_n$ to be $ N_\delta \log (n)$ with $N_\delta > (1+ \alpha)\alpha^{-1} ( -\log \delta)^{-1}$. Hence, $n\wh{\Sigma}_{jk}$ can be expressed as
			\begin{eqnarray*}
				n\wh{\Sigma}_{j k} & = & \sum_{l,l' = 0}^{N_n} \left(\sum_{t = 1}^n \eta_{j,l t} \eta_{k,l't}\right) + \sum_{l,l' = N_n + 1}^{\infty} \left(\sum_{t = 1}^n \eta_{j,l t} \eta_{k,l't}\right) \\
				&& + \sum_{l = 0}^{N_n} \sum_{l' = N_{n} + 1}^{\infty}\left(\sum_{t = 1}^n \eta_{j,l t} \eta_{k,l't}\right) + \sum_{l = N_n + 1}^{\infty} \sum_{l' = 0}^{N_{n}} \left(\sum_{t = 1}^n \eta_{j,l t} \eta_{k,l't}\right) \\
				& = & S_{jk,1} + S_{jk,2} + S_{jk,3} + S_{jk,4},
			\end{eqnarray*}
			and
			$
			n\Big(\wh{\Sigma}_{j k} - \Sigma_{j k}\Big) = \sum_{m = 1}^4 \big\{S_{jk,m} - E(S_{jk,m})\big\}.
			$
			Let us handle the first term $S_{jk,1} - E(S_{jk,1})$. Note that if $\sup_{m,l}E\{\exp(|t_0\varepsilon_{m,l}|^{2\alpha})\} < \infty$,
			$$
			\zeta_{\varepsilon} = \sup_{m,l,m',l'} E\big\{\exp\big(t_0|\varepsilon_{m,l} \varepsilon_{m',l'}|^{\alpha}\big)\big\} <\infty.
			$$
			By Lemma 4, we obtain the following equality,
			\begin{eqnarray*}
				&& \pr\Big[\Big|\sum_{t = 1}^n \big\{\varepsilon_{m,t-l} \varepsilon_{m',t-l'}
				- E(\varepsilon_{m,t-l} \varepsilon_{m',t-l'})\big\}\Big| > 3 (l_n + 1) x\Big] \\
				& &\le   (l_n+1) C_1\exp\left( - { x^2 \over C_2 n  + C_3 n^{1-\alpha \over 2-\alpha} x}\right) + C_1 (l_n + 1) \exp\left( - { x^{2\alpha} \over C_2 n^{\alpha \over 2-\alpha} + C_3 x^{\alpha}}\right)\\
				& &   + C_1 (l_n + 1) n \exp\left(-C_4 x^{\alpha}\right)
			\end{eqnarray*}
			for some positive constants $C_j (j= 1,\ldots,4)$, where $l_n = |l -l'|$. Taking $x = C (n \log p)^{1/2}$ for some large constant $C>0$, we
			derive 
			the following term
			\begin{eqnarray*}
				\tilde{\eta}_n = \sup_{m, m'\le p, l,l' \le N_{n}} (l' + 1)^{-2} (l + 1)^{-2} \left|\sum_{t = 1}^n \big\{\varepsilon_{m,t-l} \varepsilon_{m',t-l'} - E(\varepsilon_{m,t-l} \varepsilon_{m',t-l'})\big\}\right|
			\end{eqnarray*}
			with the convergence rate $ O_P\big\{(n \log p)^{1/2}\big\}$. Observe that
			\begin{eqnarray}
			\label{lemma6_1}
			\Big|\sum_{t = 1}^n \big\{\eta_{j,l t} \eta_{k,l't} - E(\eta_{j,l t} \eta_{k,l't})\big\}\Big|
			&&\le C(2 k_0+1)^2(l+1)^3(l'+1)^3 \delta^{l + l'} \tilde{\eta}_n,
			\end{eqnarray}
			and  $\sum_{l = 0}^{N_\delta} (l+1)^3 \delta^l < \infty$. Therefore,
			$$
			\sup_{j,k \le p}\Big|S_{jk,1} - E(S_{jk,1}) \Big| \le C {(2k_0 + 1)^2 } \tilde{\eta}_n =  O_P\big\{(n \log p)^{1/2}\big\}.
			$$
			
			Consider the second term. Since $\sup_{m,l}E\big\{\exp(|t_0\varepsilon_{m,l}|^{2\alpha})\big\} < \infty$,
			$
			\tilde{\zeta}_{q,\varepsilon} = \sup_{m,l,m',l'} \Big\|\varepsilon_{m,l} \varepsilon_{m',l'}\Big\|_q \le C q^{1/\alpha}
			$ for any $q > 2$. Now we bound $\Big\|S_{jk,2} - E S_{jk,2}  \Big\|_q.$ To be specific,
			\begin{eqnarray}
			\label{lemma6_2}
			\Big\|\sum_{t = 1}^n \big\{\eta_{j,l t} \eta_{k,l't} - E(\eta_{j,l t} \eta_{k,l't})\big\} \Big\|_q &\le & n\sum_{m,m' =1}^p |a_{l,j m}| |a_{l,km'}| \sup_{m,m',l,l'}\Big\|\varepsilon_{m,l} \varepsilon_{m,l'} \Big\|_q \nonumber\\
			&\le& n(2k_0+1)^2(l+1)(l'+1)\delta^{l+l'} \tilde{\zeta}_{q,\varepsilon}.
			\end{eqnarray}
			Hence,
			\begin{eqnarray*}
				\Big\| S_{jk,2} - E (S_{jk,2})  \Big\|_q  & \le & C n  q^{1/\alpha} \sum_{l,l' = N_{n} + 1}^\infty (l+1)(l'+1)\delta^{l+l'} \le  C \cdot n N_{n}^2 \delta^{2 N_n} q^{1/\alpha}.
			\end{eqnarray*}
			Write $\eta_{n2} = \big(n N_{n}^2 \delta^{2 N_n}\big)^{-1}\big\{S_{jk,2} - E (S_{jk,2})\big\}$. It follows from Lemma 2 that there exists a constant $\lambda > 0$ such that
			$
			E\big\{\exp(\lambda|\eta_{n2}|^{\alpha})\big\} < \infty.
			$
			Consequently, for a large constant $C > 0$, we have that
			\begin{eqnarray*}
				&&\pr\Big\{ \sup_{j,k \le p}\Big| S_{jk,2} - E (S_{jk,2})  \Big| > C (\log n)^2\Big\}\\
				&&\le  O(1) p^2\exp\Big\{- t C^\alpha \cdot n (\log n)^{-2\alpha} (\log n)^{2\alpha}\Big\} \to 0,
			\end{eqnarray*}
			as $n \to \infty$, which implies that
			$
			\sup_{j,k \le p}\Big| S_{jk,2} - E (S_{jk,2})  \Big| = O_P\Big\{ (\log n)^2 \Big\} = o_P\big\{(n\log p)^{1/2}\big\}.
			$
			Similarly, we can prove that
			$
			\sup_{j,k \le p}\Big| S_{jk,m} - E(S_{jk,m})  \Big| = o_P\big\{(n\log p)^{1/2}\big\}, m = 3,4.
			$
			
			Finally putting together the convergence rate results for the four terms we conclude
			$$
			\sup_{j,k \le p}\Big|\wh{\Sigma}_{jk} - \Sigma_{jk}\Big| = O_P\big\{(n^{-1}\log p)^{1/2}\big\}.
			$$
			The proof is complete.
		\end{proof}
		
		\begin{lemma}
			\label{lem7}
			Suppose that Conditions (1)--(3) and 4(i) or 4(ii) in Section 3.1 of the original article hold. Then, for each finite $k$ with $ k \ge k_0$,
			$$
			\sup_{1 \le i\le p}\Big|{\rss_i(k) \over n\sigma_i^2} - 1\Big| = O_P\Big\{(n^{-1}\log p)^{1/2}\Big\},
			$$
			as $n\to \infty$, where $\rss_i(k)$ is defined in (2.6) and $\sigma_i^2$ is the $(i,i)$-th element of $\bSigma_{\bep}$.
		\end{lemma}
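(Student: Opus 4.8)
The plan is to exploit the fact that, because $k \ge k_0$, the fitted model contains all the true regressors, so $\rss_i(k)$ becomes a pure quadratic form in the innovations with no approximation bias. Writing the order-one equation for the $i$-th coordinate as $\by_{(i)} = \bX_{i,k}\bbeta_{i,k} + \bbe_{(i)}$, where the entries of the true $\bbeta_{i,k}$ attached to the spurious columns (those with $k_0 < |i-j| \le k$) vanish, one has $(\bI_{n-1} - \bH_{i,k})\by_{(i)} = (\bI_{n-1} - \bH_{i,k})\bbe_{(i)}$ with $\bH_{i,k} = \bX_{i,k}(\bX_{i,k}^\T\bX_{i,k})^{-1}\bX_{i,k}^\T$, and hence the exact identity
$$
\rss_i(k) = \bbe_{(i)}^\T(\bI_{n-1} - \bH_{i,k})\bbe_{(i)} = \|\bbe_{(i)}\|_2^2 - \bbe_{(i)}^\T\bX_{i,k}(\bX_{i,k}^\T\bX_{i,k})^{-1}\bX_{i,k}^\T\bbe_{(i)}.
$$
The proof then splits into showing that the first term, centred, is $O_P\{(n^{-1}\log p)^{1/2}\}$ uniformly in $i$, and that the second term is of strictly smaller order.

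For the first term, $\|\bbe_{(i)}\|_2^2 = \sum_{t=2}^n \ep_{i,t}^2$ is a sum of $n-1$ independent terms with mean $\sigma_i^2$. Under Condition 4(i) the fourth moment $E(\ep_{i,t}^4) \le (E|\ep_{i,t}|^{2q})^{2/q}$ is uniformly bounded, and the independent-case tail bound underlying Lemma 5(i) (the trivial instance $\Theta_{m,q}=0$ of Lemma 1, applied to $\ep_{i,t}^2$) gives $\pr(|n^{-1}\|\bbe_{(i)}\|_2^2 - \sigma_i^2| > x) \le C_1 n/(nx)^q + C_2\exp(-C_3 n x^2)$; under Condition 4(ii) the same form of tail follows from the Bernstein-type inequality of Lemma 3. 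A union bound over $i=1,\ldots,p$ with $x \asymp (n^{-1}\log p)^{1/2}$, where the exponent constraints in Condition 4 make the polynomial part summable, yields $\sup_{i\le p}|n^{-1}\|\bbe_{(i)}\|_2^2 - \sigma_i^2| = O_P\{(n^{-1}\log p)^{1/2}\}$. Dividing by $\sigma_i^2$, which is bounded above by $\kappa_2$ (Condition 3, since $\sigma_i^2 \le \var(y_{i,t})=\sigma_{ii}$) and bounded away from zero because $\bSigma_\ep$ is nonsingular, delivers the stated rate for this term.

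For the second term, bound the quadratic form by
$$
0 \le \bbe_{(i)}^\T\bX_{i,k}(\bX_{i,k}^\T\bX_{i,k})^{-1}\bX_{i,k}^\T\bbe_{(i)} \le \lambda_{\min}^{-1}(\bX_{i,k}^\T\bX_{i,k})\,\|\bX_{i,k}^\T\bbe_{(i)}\|_2^2 .
$$
Since $k$ is finite, $\bX_{i,k}^\T\bX_{i,k}$ is a $\tau_i\times\tau_i$ matrix with $\tau_i \le 2k+1$ bounded, whose entries are $\wh{\Sigma}_{jk}$-type averages; by Lemma 5(i) or 6(i) together with Condition 3, $\lambda_{\min}(n^{-1}\bX_{i,k}^\T\bX_{i,k}) \ge \kappa_1(1-\eta)$ uniformly in $i$ with probability tending to one. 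Each entry of $\bX_{i,k}^\T\bbe_{(i)}$ has the form $\bbe_{(i)}^\T\bx_{(m)}$, so Lemma 5(ii) or 6(ii) gives $\|\bX_{i,k}^\T\bbe_{(i)}\|_2^2 \le (2k+1)\sup_{j,m\le p}|\bbe_{(j)}^\T\bx_{(m)}|^2 = O_P(n\log p)$ uniformly. Hence the quadratic form is $O_P(\log p)$ uniformly in $i$, and its normalised contribution $n^{-1}\sigma_i^{-2}\,O_P(\log p) = O_P(n^{-1}\log p)$ is of smaller order than $(n^{-1}\log p)^{1/2}$ and thus negligible. Combining the two pieces gives $\sup_i|\rss_i(k)/(n\sigma_i^2) - 1| = O_P\{(n^{-1}\log p)^{1/2}\}$.

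The genuinely delicate point is the uniformity over the $p$ equations: all the supremum rates are inherited from Lemmas 5 and 6, whose purpose is precisely to supply these $\sup_{j,k\le p}$ bounds under the two moment regimes of Condition 4, while the observation that makes everything go through cleanly is that $k \ge k_0$ removes any bias and reduces $\rss_i(k)$ to a centred innovation sum plus a lower-order projection correction.
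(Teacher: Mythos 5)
Your proof is correct and follows essentially the same route as the paper's: the exact decomposition $\rss_i(k)=\bbe_{(i)}^{\T}\bbe_{(i)}-\bbe_{(i)}^{\T}\bX_{i,k}(\bX_{i,k}^{\T}\bX_{i,k})^{-1}\bX_{i,k}^{\T}\bbe_{(i)}$ (valid since $k\ge k_0$ removes the bias), Lemma 1 under Condition 4(i) and Lemma 3 under 4(ii) with a union bound for the centred term $\sup_i|R_{i1}-n\sigma_i^2|=O_P\{(n\log p)^{1/2}\}$, and the eigenvalue bound from Lemma 5(i)/6(i) with Condition 3 together with $\sup_{j,m}|\bbe_{(j)}^{\T}\bx_{(m)}|=O_P\{(n\log p)^{1/2}\}$ from Lemma 5(ii)/6(ii) to show the projection term is $O_P(\log p)$. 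Your explicit handling of the normalisation by $\sigma_i^2$ (bounded above via Condition 3 and implicitly bounded away from zero, an assumption the paper also uses without comment) only makes explicit what the paper leaves tacit.
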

		\begin{proof}
			For $k > k_0$, the term $\rss_i(k)$ can be decomposed as
			$$
			\rss_i(k) = \bbe_i^{\T}\bbe_i - \bbe_i^{\T}\bX_{i}\big(\bX_{i}^{\T}\bX_{i}\big)^{-1}\bX_{i}^{\T}\bbe_i =
			R_{i1} - R_{i2},
			$$
			where $\bbe_{(i)} = (\ep_{i,2},\ldots,\ep_{i,n})^{T}$, and $\bX_{i}$ is a $(n-1) \times \tau_i(k)$ matrix with $\bx_{i,1+j}$ as its $j$-th row.
			We will show below that, under Assumptions (1)--(3) and 4(i) or 4(ii),
			$$
			\mbox{(a)} \quad \sup_{i \le p} \Big|R_{i1} - n \sigma_i^2\Big| = O_P\{ (n \log p)^{1/2}\};
			~~ \mbox{(b)} \quad \sup_{i \le p} |R_{i2}| = O_P\big(\log p\big).
			$$
			With results in (a) and (b), it follows that
			$$
			\sup_{i \le p}\Big|{\rss_i(k) \over n \sigma_i^2} - 1\Big| \le
			\sup_{i \le p}\Big|{R_{i 1} \over n \sigma_i^2}-1\Big| +
			\sup_{i \le p}\Big|{R_{i2} \over n \sigma_i^2} \Big| = O_P\big\{ (n^{-1}\log p)^{1/2}\big\}.
			$$
			
			Suppose first that Condition 4(i) holds.  Consider the term $R_{i1} - n \sigma_i^2$. Lemma 1 shows that
			$$
			\sup_{i \le p} \Big|\bbe_i^{\T}\bbe_i - n \sigma_i^2\Big| = O_P\big\{ (n \log p)^{1/2}\big\}.
			$$
			Let us handle the term $\sup_{i\le p}|R_{i2}|$. Define
			$$
			\mathcal{A}_n = \Big\{ \inf_{i \le p}\lambda_{\min}\big(n^{-1}\bX_{i}^{\T}\bX_{i}\big) > \kappa_1 (1-\eta)\Big\}
			$$
			with $0 < \eta < 1$. It follows from Lemma 5(i) and Condition 3 that
			$
			P(\mathcal{A}_n) \to 1
			$
			as $n \to \infty$. On the event $\mathcal{A}_n$, the term $\sup_{i\le p}|R_{i2}|$ can be bounded above by
			$
			\big( \kappa_1 (1- \eta)\big)^{-1} k_0\sup_{j,k \le p} n^{-1}|\bbe_{(j)}^{\T} \bx_{(k)}|^2.
			$
			By Lemma 5 (ii), we obtain that,
			$
			\sup_{j,k \le p} |\bbe_{(j)}^{\T} \bx_{(k)}| =O_P\big\{ (n \log p)^{1/2}\big\},
			$
			which implies that (b) holds.
			
			Suppose that Condition 4(ii) holds.  Consider the term $R_{i1} - n \sigma_i^2$. By Lemma 3 and taking $x = C (n \log p)^{1/2}$ with large constant $ C > 0$, we have that
			$$
			\sup_{i \le p} \Big|\bbe_i^{\T}\bbe_i - n \sigma_i^2\Big| = O_P\big\{(n \log p)^{1/2}\big\}.
			$$
			Similarly, we can establish (b) from Lemma 6.  The proof is complete.
		\end{proof}

		\section{An additional Simulation Study}
		
		We conduct an additional Monte Carlo experiment to examine the proposed
		methodology.
		We consider a banded vector autoregressive model with the sample size $n+2$, where the last
		two observations are used to calculate the one-step and two-step ahead
		post-sample prediction errors.
		
		The data  were generated from a vector autoregressive model with $d = 1$ and
		the banded coefficient matrix  $A$ specified in scenario (1) in the
		paper. We set $n =200$, $p =
		100, 200$, $k_0 = 2$, and each setting was repeated  $100$ times. To
		mimic the real world with the true ordering unknown, we considered three
		other orderings through random permutation. The first ordering was
		generated through local permutation, where we partitioned the components
		of $y_t$  into $[p/5]$ groups with each group containing 5 components.  We
		then performed a random permutation within each group.
		The other two orderings were
		generated through permutating the whole components of $y_t$ together.
		Also included in the comparison is the sparse autoregressive model determined by lasso.
		Table \ref{table7} below reports simulation results of Bayesian information criterion scores and prediction errors.
		It indicates that the model with the true ordering offers the best post-sample
		prediction, followed by the model with the local permutation only, and then
		the lasso-based model, while the two models with arbitrary permutations perform
		the worst.
		
		\newpage
		
		\begin{table}
			\caption{Average Bayeysian information criterion values, estimated bandwidth parameter and  one-step-ahead and two-step-ahead post-sample predictive errors over $100$ replications, 	
				with their corresponding standard errors in parentheses.}
			\begin{center}
				\begin{tabular}{l|c|c|c|c}
					\hline \hline
					Ordering &  BIC & Bandwidth  & One-step ahead & Two-step ahead\\
					\hline \hline
					\multicolumn{4}{c}{Case 1: $n= 200, p = 100$}  \\
					\hline
					True ordering       & 546(3.5) & 1.78(0.52) & 0.787(0.06) & 0.837(0.07) \\
					Local permutation   & 549(5.6) & 2.14(0.83) & 0.788(0.06) & 0.837(0.07) \\ 		
					Random permutation  & 546(11)  & 0.71(0.90) & 0.828(0.07) & 0.848(0.08) \\ 	
					Random permutation  & 547(13)  & 0.70(1.06) & 0.827(0.06) & 0.848(0.08) \\
					\hline
					Lasso &	--& --& 0.823(0.06) & 0.846(0.08) \\		
					\hline \hline					
					\multicolumn{4}{c}{Case 2: $n= 200, p = 200$}  \\
					\hline
					True ordering       & 1093(4.8) & 1.87(0.36) & 0.786(0.04) & 0.830(0.05) \\
					Local permutation   & 1102(10)  & 2.39(0.75) & 0.787(0.04) & 0.829(0.05) \\		
					Random permutation  & 1098(22)  & 0.89(0.80) & 0.829(0.05) & 0.839(0.05) \\		
					Random permutation  & 1096(20)  & 0.78(0.70) & 0.831(0.05) & 0.838(0.05) \\		
					\hline
					Lasso &	-- & -- & 0.827(0.05) & 0.838(0.05)	\\			
					\hline \hline					
				\end{tabular}
			\label{table7}
			\end{center}
		\end{table}

\end{document}